\newtheorem{theorem}{Theorem}
\newtheorem{lemma}[theorem]{Lemma}
\newtheorem{observation}[theorem]{Observation}
\theoremstyle{definition}
\newtheorem{claim}[theorem]{Claim}
\newcommand{\set}[1]{\left\{ #1 \right\}}
\newcommand{\uset}{{\mathcal U}}
\newcommand{\dset}{{\mathcal{D}}}
\newcommand{\lset}{{\mathcal{L}}}
\newcommand{\fset}{{\mathcal{F}}}
\newcommand{\gset}{{\mathcal{G}}}
\newcommand{\hset}{{\mathcal{H}}}
\newcommand{\eps}{{\varepsilon}}
\newcommand{\poly}{\mathsf{poly}}
\newcommand{\dist}{\textnormal{\textsf{dist}}}
\newcommand\vol{\mathsf{cost}}
\def\floor#1{\left\lfloor #1 \right\rfloor}
\def\ceil#1{\left\lceil #1 \right\rceil}
\def\card#1{\left| #1 \right|}
\def\deg{\textrm{deg}}
\def\pr#1{\mathrm{Pr}\left[ #1 \right]}
\def\ex#1{{\mathbb{E}}\left[ #1 \right]}
\newcounter{note}
\newcommand{\mc}{\mathsf{mincut}}
\newcommand{\pat}{\Phi}
\begin{document}

\begin{titlepage}
	
	\title{Cut-Preserving Vertex Sparsifiers for Planar and Quasi-bipartite Graphs}
	
	\author{Yu Chen\thanks{EPFL, Lausanne, Switzerland. Email: {\tt yu.chen@epfl.ch}. Supported by ERC Starting Grant 759471.} \and Zihan Tan\thanks{Rutgers University, NJ, USA. Email: {\tt zihantan1993@gmail.com}. Supported by a grant to DIMACS from the Simons Foundation (820931).}} 
	
	\maketitle

	\thispagestyle{empty}
	\begin{abstract}
We study vertex sparsification for preserving cuts.
Given a graph $G$ with a subset $|T|=k$ of its vertices called terminals, a \emph{quality-$q$ cut sparsifier} is a graph $G'$ that contains $T$, such that, for any partition $(T_1,T_2)$ of $T$ into non-empty subsets, the value of the min-cut in $G'$ separating $T_1$ from $T_2$ is within factor $q$ from the value of the min-cut in $G$ separating $T_1$ from $T_2$. 
The construction of cut sparsifiers with good (small) quality and size has been a central problem in graph compression for years.

Planar graphs and quasi-bipartite graphs are two important special families studied in this research direction.
The main results in this paper are new cut sparsifier constructions for them in the high-quality regime (where $q=1$ or $1+\eps$ for small $\eps>0$).


We first show that every planar graph admits a planar quality-$(1+\eps)$ cut sparsifier of size $\tilde O(k/\poly(\eps))$, which is in sharp contrast with the lower bound of $2^{\Omega(k)}$ for the quality-$1$ case.

We then show that every quasi-bipartite graph admits a quality-$1$ cut sparsifier of size $2^{\tilde O(k^2)}$. This is the second to improve over the doubly-exponential bound for general graphs (previously only planar graphs have been shown to have single-exponential size quality-$1$ cut sparsifiers).

Lastly, we show that contraction, a common approach for constructing cut sparsifiers adopted in most previous works, does not always give optimal bounds for cut sparsifiers. We demonstrate this by showing that the optimal size bound for quality-$(1+\eps)$ contraction-based cut sparsifiers for quasi-bipartite graphs lies in the range $[k^{\tilde\Omega(1/\eps)},k^{O(1/\eps^2)}]$, while in previous work an upper bound of $\tilde O(k/\eps^2)$ was achieved via a non-contraction approach.

	\end{abstract}
\end{titlepage}

\renewcommand{\baselinestretch}{0.75}\normalsize
\tableofcontents
\renewcommand{\baselinestretch}{1.0}\normalsize

\newpage

\section{Introduction}

Given a graph $G$ and a set $T$ of its vertices called \emph{terminals}, a \emph{cut sparsifier} of $G$ with respect to $T$ is a graph $G'$ with $T\subseteq V(G')$, such that for every partition $(T_1,T_2)$ of $T$ into non-empty subsets, the size of the minimum cut separating $T_1$ from $T_2$ in $G$ and the size of the minimum cut separating $T_1$ from $T_2$ in $G'$ are within some small multiplicative factor $q\ge 1$, which is also called the \emph{quality} of the sparsifier.

Karger \cite{karger1993global,karger1999random} first considered the special case where $T=V(G)=V(G')$ and $G'$ is required to be a subgraph of $G$, and
he used sampling approaches to construct quality-$(1+\varepsilon)$ cut sparsifiers with $O(n\log n/\varepsilon^2)$ edges.
Such sparsifiers are called \emph{edge sparsifiers}, since we are only allowed to sparsify the edges (not the vertices) and the goal is to minimize $|E(G')|$.
In this paper, we allow $G'$ to have a different set of vertices, and such graphs $G'$ are known as \emph{vertex sparsifiers}.
Constructing cut-preserving vertex sparsifiers with small quality and size (measured by the number of vertices in $G'$) has been a central problem in graph compression.

Cut-preserving vertex sparsifiers were first studied in the special case where $G'$ is only allowed to contain terminals (that is, $V(G')=T$). Moitra \cite{moitra2009approximation} first showed that every  $k$-terminal graph has a cut sparsifier with quality $O(\log k/\log\log k)$.
Then Charikar, Leighton, Li and Moitra \cite{charikar2010vertex} showed that such sparsifiers can be computed efficiently.
The best quality lower bound in this case is $\Omega(\sqrt{\log k}/\log\log k)$ \cite{makarychev2010metric,charikar2010vertex}, leaving a small gap to be closed. 
Therefore, the natural next question, which has been a central question on cut/flow sparsifiers over the past decade, is:

\vspace{-10pt}
\[\emph{Can better quality sparsifiers be achieved by allowing a small number of Steiner vertices}? 
\]
\vspace{-10pt}

For exact cut sparsifiers ($q=1$), it has been shown that every $k$-terminal graph admits an exact cut sparsifier of size at most $2^{2^k}$ \cite{hagerup1998characterizing,khan2014mimicking}, while the strongest lower bound is $2^{\Omega(k)}$ \cite{khan2014mimicking,karpov2017exponential}, leaving an exponential gap yet to be closed.
Since then the research on cut sparsifiers was diverted into two directions.
The first direction focuses on proving better size upper bounds for special families of graphs.
For example, for planar graphs, Krauthgamer and Rika \cite{krauthgamer2013mimicking,krauthgamer2017refined} showed that $2^{O(k)}$ vertices are sufficient for an exact cut sparsifier, and then Karpov, Philipzcuk and Zych-Pawlewicz \cite{karpov2017exponential} proved a lower bound of $2^{\Omega(k)}$, showing that the single exponential upper bound is almost tight.
Another example is graphs where each terminal has degree exactly $1$. For these graphs, Chuzhoy \cite{chuzhoy2012vertex} showed the construction for $O(1)$-quality sparsifiers of size $O(k^3)$ by contracting expanders, and Kratsch and Wahlstrom \cite{kratsch2012representative} constructed quality-$1$ sparsifiers of size $O(k^3)$ via a matroid-based approach.
The second direction focuses on constructing cut sparsifiers with a slightly worse quality of $(1+\eps)$ for small $\eps>0$.
Andoni, Gupta, and Krauthgamer \cite{andoni2014towards} initiated the study of quality-$(1+\eps)$ flow sparsifiers \footnote{Flow sparsifiers are concretely related to cut sparsifiers. We review the previous work and discuss the connection between them in \Cref{sec: related}.} (which are stronger versions of cut sparsifiers), established a framework of constructing flow sparsifiers via contractions with a doubly exponential upper bound. However, the only family that they managed to obtain a better-than-$2^{2^k}$ upper bound for is the family of quasi-bipartite graphs, graphs where every edge is incident to some terminal. They showed an upper bound of $\poly(k/\eps)$, via a sampling-based approach.
For quality-$(1+\eps)$ cut sparsifiers on quasi-bipartite graphs, the bound was improved to $\tilde O(k/\eps^2)$ by Jambulapati, Lee, Liu and Sidford \cite{jambulapati2023sparsifying}, also via a sampling-based approach.

Among the above algorithmic results, the most common approach for constructing cut sparsifiers is contraction. Specifically, we compute a partition $\fset$ of the vertices of $G$ into disjoint subsets, and then contract each subset of $\fset$ into a supernode.
The resulting graph is called a \emph{contraction-based cut sparsifier}.
Contraction-based cut sparsifiers may only increase the value of every terminal min-cut, which partly explains why contraction is a commonly adopted approach, as to analyze its quality it suffices to show that the min-cuts are not increased by too much.
For general graphs, a lower bound of $2^{2^{\Omega(k)}}$ (almost matching the upper bound of $2^{2^k}$) was proved \cite{karpov2017exponential} for contraction-based cut sparsifiers. It is therefore natural to ask whether or not better bounds can be achieved for special graphs.
Moreover, all the previous constructions of cut sparsifiers with Steiner nodes, except for the ones using matroid-based/sampling-based approaches, are contraction-based.
Therefore, it is natural to explore if contraction can give optimal constructions for cut sparsifiers.

\subsection{Our Results}

In this paper, we make progress on the above questions. Our first results, summarized as \Cref{main: upper} and \Cref{quasi_1}, are new size upper bounds on exact or quality-$(1+\eps)$ sparsifiers for special families of graphs.

\begin{theorem}
\label{main: upper}
For every integer $k\ge 1$ and real number $\eps>0$, every planar graph with $k$ terminals admits a quality-$(1+\eps)$ cut sparsifier on $O(k\cdot\poly(\log k/\eps))$ vertices, which is also a planar graph.
\end{theorem}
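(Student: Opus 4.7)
The plan is to build the sparsifier via a recursive shortest-path separator decomposition of the planar graph $G$, discretizing each separator by a small number of ``portal'' vertices so that every terminal min-cut is approximately preserved.

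I would first preprocess $G$ so that all terminals lie on the boundary of a single face (or on $O(1)$ faces). This can be arranged by standard planar modifications: for each terminal $t$, contract a small neighborhood of $t$ into a supernode and re-embed so that the supernode is incident to a designated face. Such modifications lose at most a $(1+\eps)$ factor in every terminal cut and add only $\tilde O(k/\eps)$ auxiliary vertices. One should also apply an initial cut-preserving edge sparsification so that the total vertex count becomes $\poly(k/\eps)$, allowing us to treat $\log n$ as $O(\log(k/\eps))$ throughout the argument.

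The core step is a recursive separator construction. By the planar shortest-path separator theorem, there is a constant number of shortest paths $P_1,\ldots,P_c$ in $G$ whose removal splits $G$ into pieces, each containing at most a constant fraction of the terminals. Via planar duality, every min terminal cut in $G$ corresponds to a minimum-weight dual cycle separating the corresponding dual terminal sets, and such a cycle crosses each $P_i$ at a small number of dual edges. I would then place $O(\poly(\log k /\eps))$ portal vertices on each $P_i$, using a Thorup-style geometric spacing of distances along the path, so that any near-minimum cut crossing $P_i$ can be rerouted to use only portals with a $(1+\eps/\log k)$ multiplicative loss. Recursing on each piece, with the portals added to the terminal set for that piece, yields the final sparsifier. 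The size bound follows because there are $O(\log k)$ recursion levels, each introducing $O(\poly(\log k/\eps))$ new vertices per terminal in the piece; planarity is preserved because every operation (contraction, subgraph extraction, insertion of portals on existing edges) respects the planar embedding.

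The hardest part will be simultaneously controlling the approximation error for all $2^{k-1}-1$ terminal partitions, not just the $\binom{k}{2}$ pairwise min-cuts. A single partition can induce a complicated dual witness cycle that interacts nontrivially with many separators across many recursion levels, so a naive union bound over partitions would lose an exponential factor. To avoid this, I would aim to prove a structural lemma stating that, in a planar graph with terminals on $O(1)$ faces, the $(1+\eps)$-approximately-minimum cuts associated with any partition have bounded combinatorial complexity when restricted to any shortest-path separator; this would let a polylogarithmic number of portals suffice for all partitions simultaneously. Establishing this lemma, presumably via planar uncrossing together with a careful inductive charging argument across the recursion tree, is the technical heart of the proof.
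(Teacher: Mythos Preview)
Your high-level architecture (duality, shortest-path separators, Thorup-style portals, recursion) is close in spirit to the paper's, but two of your steps do not work as stated and the missing ingredient is precisely what the paper introduces.

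First, the preprocessing is broken. You claim one can arrange for all terminals to lie on $O(1)$ faces by ``contracting a small neighborhood of $t$ and re-embedding so the supernode is incident to a designated face.'' In a general planar instance the $k$ terminals can sit on $\Theta(k)$ distinct faces, and no local modification moves them to $O(1)$ faces while keeping cut values up to $(1+\eps)$. Relatedly, your sentence ``apply an initial cut-preserving edge sparsification so that the total vertex count becomes $\poly(k/\eps)$'' is circular: edge sparsification does not reduce vertex count, and a $\poly(k/\eps)$-size vertex sparsifier is exactly the theorem you are trying to prove. The paper sidesteps both issues differently: it bootstraps from the known exact planar sparsifier of size $2^{O(k)}$, then repeatedly applies a structured $r$-division (which, by construction, gives pieces whose boundary vertices lie on $O(1)$ faces) together with an $O(1)$-face sparsifier, halving the exponent at each round until the size is near-linear.

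Second, your ``hardest part'' is correctly identified but your proposed fix (a structural lemma that near-minimum cuts have bounded complexity on each separator) is too vague, and in fact the right object is not a property of cuts but of dual \emph{paths}. The paper's key new notion is a \emph{pattern emulator}: in an $f$-face instance, for each pair of dual terminals and each of the $2^f$ homotopy patterns (which side of each terminal-face the path passes), one must preserve the corresponding pattern-shortest-path length. This is what lets a single small portal set on a shortest path handle all $2^{k-1}$ partitions simultaneously, because every terminal min-cut decomposes into pattern-shortest dual paths (\Cref{lem: mincut structure}, \Cref{thm: pattern for cut}). Your proposal never isolates this object; without it, the inductive step from $f$ faces to $f-1$ faces has no handle, and the union-bound blowup you worry about is real.
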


For planar graphs, our near-linear size bound is in sharp contrast with the single-exponential bound $2^{\Theta(k)}$ \cite{krauthgamer2013mimicking,krauthgamer2017refined,karpov2017exponential} for exact cut sparsifiers. In other words, \Cref{main: upper} shows that, for preserving cuts in planar graphs, we can trade a small loss in quality for a significant improvement in size.

\begin{theorem}
	\label{quasi_1}
For every integer $k\ge 1$, every quasi-bipartite graph with $k$ terminals admits an exact (quality-$1$) contraction-based cut sparsifier of size $2^{O(k^2\log k)}$, and there exists a quasi-bipartite graph with $k$ terminals whose exact contraction-based cut sparsifier must contain $\Omega(2^k)$ vertices.
\end{theorem}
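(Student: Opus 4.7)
My plan is to handle the two directions of the theorem separately, both hinging on a sign-pattern viewpoint that quasi-bipartiteness enables.

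\textbf{Upper bound.} Since $G$ is quasi-bipartite, the non-terminals form an independent set and every non-terminal $v$ is described entirely by its vector $w_v\in\mathbb{R}^k_{\ge 0}$ of edge weights to the terminals $T=\{t_1,\ldots,t_k\}$. For any bipartition $(T_1,T_2)$ of $T$, encoded by $h\in\{-1,+1\}^k$ with $h_i=+1\iff t_i\in T_1$, the minimum $(T_1,T_2)$-cut in $G$ equals $|E(T_1,T_2)|+\sum_v\min(w_v(T_1),w_v(T_2))$, and each $v$ is placed on the side indicated by the sign of $\phi_v(h):=\langle h,w_v\rangle$. If a subset $F$ of non-terminals is contracted into a single supernode, its contribution to the $h$-cut becomes $\min(\sum_{v\in F}w_v(T_1),\sum_{v\in F}w_v(T_2))$, which by sub-additivity of $\min$ is always at least $\sum_{v\in F}\min(w_v(T_1),w_v(T_2))$, with equality iff every $v\in F$ has $\phi_v(h)$ of the same sign (zeros allowed). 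Exactness on every bipartition therefore forces all members of every part to agree on the sign of $\phi_v(h)$ for every $h$.

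I then group non-terminals by the sign pattern $\sigma_v:=(\mathrm{sign}\,\phi_v(h))_{h\in\{\pm 1\}^k}\in\{-1,0,+1\}^{2^{k-1}-1}$. The number of patterns realized by vectors $w\in\mathbb{R}^k$ equals the number of faces of the central hyperplane arrangement in $\mathbb{R}^k$ defined by the $2^{k-1}-1$ hyperplanes $\{w:\langle h,w\rangle=0\}$; by Zaslavsky's face-count theorem this is at most $O((2^k)^k)=2^{O(k^2)}\subseteq 2^{O(k^2\log k)}$. Collapsing each equivalence class to a supernode yields an exact contraction-based sparsifier of the desired size.

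\textbf{Lower bound.} I exhibit a quasi-bipartite graph with $2^k$ non-terminals whose sign patterns $\sigma_v$ are \emph{strict} (no zero entries) and pairwise distinct. Any two distinct strict patterns must differ on some $h$, where they take strictly opposite signs; hence no two such non-terminals can lie in the same part of any exact contraction, and any such sparsifier must use at least $2^k$ non-terminal supernodes. Existence of $2^k$ distinct strict patterns is guaranteed by the fact that the arrangement has $2^{O(k^2)}\gg 2^k$ top-dimensional cells. For a concrete construction, I index non-terminals by $x\in\{0,1\}^k$ and set $w_{v_x}:=c_x\cdot x+\epsilon\cdot\mathbf{1}$ for generic scalars $c_x>0$ and sufficiently small $\epsilon>0$. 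The perturbation forces each $\phi_{v_x}(h)$ to be nonzero, and a short case analysis (building a separating $h\in\{\pm 1\}^k$ out of the symmetric difference of the supports of $x$ and $x'$) shows that distinct $v_x,v_{x'}$ always admit some $h$ on which their $\phi$ values have opposite sign.

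\textbf{Main obstacle.} The upper bound's real work is the face count; this is a standard application of Zaslavsky's theorem and poses no serious difficulty. The lower bound's main challenge is the concrete construction: one must craft weight vectors whose $2^k$ sign patterns are provably all distinct and strict. The perturbation trick trivially handles strictness, but proving distinctness requires a careful combinatorial argument that for every pair of subsets some $\pm 1$-vector separates them. This is the principal technical step to carry out.
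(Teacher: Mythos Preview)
Your upper bound is correct and takes a genuinely different route from the paper. You count sign patterns of the linear forms $w\mapsto\langle h,w\rangle$, $h\in\{\pm1\}^k$, as faces of a central hyperplane arrangement in $\mathbb{R}^k$, getting $O\big((2^{k-1})^k\big)=2^{O(k^2)}$. The paper instead bounds the VC-dimension of the profile set system by $O(k\log k)$ and invokes Sauer--Shelah, obtaining $2^{O(k^2\log k)}$. Your approach is cleaner and yields a slightly stronger bound; the paper's argument has the virtue of being purely combinatorial (no geometry) and exposes a structural fact about profiles that may be reusable elsewhere.

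Your lower bound, however, has a genuine gap. First, the ``existence argument'' is a non sequitur: $2^{O(k^2)}$ is an \emph{upper} bound on the number of cells, and writing ``$2^{O(k^2)}\gg 2^k$'' does not establish that the arrangement has at least $2^k$ full-dimensional cells, let alone $2^k$ cells meeting the nonnegative orthant (which is what you need, since edge weights are nonnegative). Second, your concrete construction fails. Take $k=3$, $x=(1,1,0)$, $x'=(1,1,1)$. Then $w_{v_x}=(c_x+\epsilon,c_x+\epsilon,\epsilon)$ and $w_{v_{x'}}=(c_{x'}+\epsilon)\mathbf{1}$. Checking all four $h\in\{\pm1\}^3$ up to sign, one finds $\mathrm{sign}\langle h,w_{v_x}\rangle=\mathrm{sign}\langle h,w_{v_{x'}}\rangle$ in every case: e.g.\ for $h=(1,-1,1)$ both inner products are positive, and for $h=(1,-1,-1)$ both are negative. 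So $v_x$ and $v_{x'}$ share a sign pattern and your ``short case analysis'' cannot succeed. More generally, whenever $x$ and $x'=\mathbf{1}$ are taken with $k$ odd and $|x|=k-1$, the two patterns coincide.

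The paper avoids this by a crucial parity trick: it creates a non-terminal $v_S$ only for each \emph{even-size} proper nonempty $S\subseteq T$ (so $\Omega(2^k)$ of them), with edges of generic weight $\approx 1$ only to terminals in $S$. The even-size restriction is what makes the separating-$U$ argument go through: for distinct even $S,S'$ one can (after WLOG) assume $S\setminus S'\neq\emptyset$, start with $U$ equal to half of $S\cap S'$, and then, if needed, shift all of $S\setminus S'$ into $U$ to flip $v_S$'s side while leaving $v_{S'}$'s side unchanged. Odd-size subsets break this symmetry, which is exactly what your counterexample exploits. To repair your lower bound you must either restrict your index set (as the paper does) or give an honest lower bound on the number of realizable strict sign patterns in $\mathbb{R}^k_{\ge 0}$.
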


\Cref{quasi_1} shows that, similar to planar graphs, the family of quasi-bipartite graphs also admit a single-exponential upper bound (even achievable by contraction-based sparsifiers), better than the doubly-exponential bound for general graphs \cite{hagerup1998characterizing,khan2014mimicking,karpov2017exponential}.

Next we turn to study contraction-based sparsifiers.
Our next result shows that contraction-based sparsifiers are not optimal constructions, as they give worse bounds for quasi-bipartite graphs.

\begin{theorem}
\label{main: lower}
For every integer $k\ge 1$ and real number $\eps>0$, there exists a quasi-bipartite graph with $k$ terminals, whose quality-$(1+\eps)$ contraction-based cut sparsifier must contain $k^{\tilde\Omega(1/\eps)}$ vertices.
\end{theorem}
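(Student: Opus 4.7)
The plan is to exhibit an explicit quasi-bipartite hard instance with a large, highly structured family of Steiner vertices, and then argue that no partition with fewer than $k^{\tilde\Omega(1/\eps)}$ parts can be quality-$(1+\eps)$ faithful to all terminal cuts.

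\textbf{The hard instance.} Set $d=\Theta(1/\eps)$, partition the $k$ terminals into $d$ equal blocks $T^{(1)},\dots,T^{(d)}$ of size $m=k/d$, and, for each ordered tuple $\tau\in T^{(1)}\times\cdots\times T^{(d)}$, introduce one Steiner vertex $v_\tau$ with unit-weight edges to the $d$ coordinates of $\tau$. This yields $m^d=k^{\tilde\Omega(1/\eps)}$ Steiner vertices, all of degree $d$, and the graph is quasi-bipartite by construction. Every Steiner vertex thus carries a distinctive ``coordinate signature,'' and the family $\{v_\tau\}$ has a clean product structure that will be exploited later.

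\textbf{Reformulation of the quality condition.} Encoding a terminal partition as a sign vector $\sigma\in\{\pm 1\}^T$ and using $\min(a,b)=\tfrac{a+b}{2}-\tfrac{|a-b|}{2}$, the min-cut across $\sigma$ equals $\tfrac12 W-\tfrac12\sum_v|\langle\sigma,w_v\rangle|$ in the original graph and $\tfrac12 W-\tfrac12\sum_{S\in\fset}|\sum_{v\in S}\langle\sigma,w_v\rangle|$ after contracting along a partition $\fset$. Quality $(1+\eps)$ is therefore equivalent to the discrepancy-style bound
\[
\pat_\fset(\sigma)\;:=\;\sum_{S\in\fset}\Bigl(\sum_{v\in S}|\langle\sigma,w_v\rangle|-\bigl|\sum_{v\in S}\langle\sigma,w_v\rangle\bigr|\Bigr)\;\le\;O(\eps)\cdot\mc_G(\sigma)\qquad\text{for every }\sigma.
\]
The nonnegative quantity $\pat_\fset(\sigma)$ is driven entirely by sign disagreements among $\{\langle\sigma,w_v\rangle\}_{v\in S}$ inside each group.

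\textbf{Finding a witness sign vector.} Fix any partition $\fset$ and consider the random sign vector $\sigma_U$ obtained by picking one terminal uniformly at random from each block $T^{(i)}$, putting those on the $+1$ side and all others on the $-1$ side. Then $\langle\sigma_U,w_{v_\tau}\rangle=2\bigl|\{i:\tau_i\text{ is the marked terminal of }T^{(i)}\}\bigr|-d$, which is maximal ($+d$) only when $\tau$ matches the marks in every block and is otherwise distributed as a shifted Binomial. I would show, by a block-by-block hybrid argument, that for a group $S$ of size $r$ which is not concentrated on a coordinate-constant family of tuples, the expected excess $\mathbb{E}_U[\pat_{\{S\}}(\sigma_U)]$ is $\Omega(d)$ times an appropriate growth function of $r$, while $\mathbb{E}_U[\mc_G(\sigma_U)]=\tilde O(m^d)$. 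A double counting/pigeonhole then forces the total Steiner count in ``large'' supernodes to be at most $\tilde O(\eps m^d)$, so $|\fset|\ge m^{d-o(d)}=k^{\tilde\Omega(1/\eps)}$.

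\textbf{Main obstacle.} The delicate step is the combinatorial excess computation for a group $S$ whose tuples may cluster along several coordinates simultaneously. When almost all $v_\tau\in S$ agree on the $i$-th coordinate, that coordinate contributes essentially nothing to the sign variation, so one must identify a subset of ``spread'' coordinates where $S$ genuinely fans out, and run the random-mark argument only on those. Setting up this hybrid carefully, and ensuring that the resulting bound degrades by at most a $\poly\log$ factor in $d$ (so the exponent remains $\tilde\Omega(1/\eps)$ rather than $\Omega(1/\eps^2)$), is where I expect the bulk of the technical effort to lie.
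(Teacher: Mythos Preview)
Your reformulation of the quality condition via the group-wise excess $\pat_\fset(\sigma)=\sum_{S\in\fset}\bigl(\sum_{v\in S}|\langle\sigma,w_v\rangle|-|\sum_{v\in S}\langle\sigma,w_v\rangle|\bigr)$ is correct, and the product construction is a natural candidate. The genuine gap is in the witness. Your random vector $\sigma_U$ places only $d$ terminals on the $+1$ side, so for every $\tau$ we have $\langle\sigma_U,w_{v_\tau}\rangle=2\,\mathrm{match}(\tau,U)-d$, which is nonnegative only when $\mathrm{match}(\tau,U)>d/2$. Writing $P_S=\sum_{v\in S,\;\langle\sigma_U,w_v\rangle>0}\langle\sigma_U,w_v\rangle$ and $N_S$ analogously, the group excess equals $2\min(P_S,N_S)\le 2P_S$, so for \emph{every} partition $\fset$,
\[
\pat_\fset(\sigma_U)\;\le\;2\sum_{S}P_S\;=\;2\sum_{j>d/2}\binom{d}{j}(m-1)^{d-j}(2j-d)\;=\;\tilde O\bigl(2^{d}\,m^{(d-1)/2}\bigr).
\]
Meanwhile $\mc_G(\sigma_U)=\Theta(d\,m^{d-1})$ (not $\tilde O(m^d)$ as you wrote; only $d$ terminals are on the small side). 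Hence $\pat_\fset(\sigma_U)/\mc_G(\sigma_U)\le \tilde O\bigl(2^d/(d\,m^{(d-1)/2})\bigr)$, which is $\ll\eps$ once $m\ge 5$. In other words, $\sigma_U$ is \emph{never} a witness of quality worse than $1+\eps$, no matter how coarse $\fset$ is; the claimed ``$\Omega(d)$ times a growth function of $r$'' lower bound on the expected excess is simply false for this $\sigma_U$. A more balanced witness (e.g.\ half of each block on each side) does create nontrivial excess, but then one must handle partitions that respect a large fraction of the coordinates, and the analysis you sketch does not cover this.

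The paper avoids this difficulty by a different route. It first reduces contraction-based sparsification to a ``Boolean Hypercube Contraction'' problem: from a quality-$q$ contraction one extracts, using $d$ carefully chosen terminal cuts as coordinates, a map $f:V\to\{0,1\}^d$ of small image whose $\ell_1$-stretch on the edges of $G$ is at most $q$. The hard instance then lives directly on the hypercube: terminals are the all-$0$, all-$1$, and all weight-$\eps d$ and weight-$(1-\eps)d$ strings, non-terminals are the weight-$d/2$ strings, and each non-terminal is joined by shortest-path edges to nearby terminals. The lower bound is a volume argument: any image of size $2^{d/5}$ has at most $2^{d/5}\cdot\binom{d}{\le d/100}<2^{2d/5}$ points within $\ell_1$-distance $d/100$, so almost every non-terminal is moved by $\Omega(d)$, and an averaging over its incident edges gives stretch $1+\Omega(\eps)$. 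This ball-packing step replaces your discrepancy computation and is what makes the argument go through.
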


Compared with the previous bounds $\poly(k/\eps)$ \cite{andoni2014towards} and $\tilde O(k/\eps^2)$ \cite{jambulapati2023sparsifying} (both obtained by sampling-based and therefore not contraction-based approaches), \Cref{main: lower} shows that contraction-based sparsifiers are sometimes provably suboptimal.

We then proceed to study the size upper bound for quality-$(1+\eps)$ contraction-based cut sparsifiers of quasi-bipartite graphs. Our last result, summarized in \Cref{quasi_apx}, shows that when $\eps$ is a constant, quasi-bipartite graphs have $\poly(k)$-sized contraction-based cut sparsifiers, complementing the lower bound in \Cref{main: lower}, in contrast with the $\Omega(2^k)$ lower bound for exact sparsifiers in \Cref{quasi_1}.

\begin{theorem}
\label{quasi_apx}
For every integer $k\ge 1$ and real number $\eps>0$, every quasi-bipartite graph with $k$ terminals admits a quality-$(1+\eps)$ contraction-based cut sparsifier of size $k^{O(1/\eps^2)}\cdot f(\eps)$, where $f$ is a function that only depends on $\eps$.
\end{theorem}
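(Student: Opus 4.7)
} The plan is to build the sparsifier by partitioning the Steiner vertices into equivalence classes based on a compact ``type'', and then contracting each class into a single supernode. Since each Steiner vertex $v$ in a quasi-bipartite graph is fully described by its weight vector $w_v\in\mathbb{R}_{\ge 0}^T$ to the terminals, the type amounts to a careful sparsification and rounding of $w_v$.

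Specifically, for each Steiner vertex $v$ with total weight $s_v=\sum_{t\in T}w_v(t)$, I would define its type to consist of (i) the subset $S_v\subseteq T$ of the $d=\Theta(1/\eps^2)$ terminals receiving the largest edge weights from $v$, (ii) the values $w_v(t)$ for $t\in S_v$ rounded to powers of $(1+\eps)$, and (iii) the rounded total weight $s_v$. The number of distinct types is then at most $\binom{k}{d}\cdot(\log_{1+\eps}W)^{d+1}=k^{O(1/\eps^2)}\cdot f(\eps)$, matching the desired bound, where $W$ is the largest edge weight. Steiner vertices are partitioned by type and each class is contracted into a single supernode, while terminals are kept intact; the resulting graph has $k+k^{O(1/\eps^2)}f(\eps)$ vertices.

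For the quality analysis, fix any bipartition $(T_1,T_2)$ of $T$. The min-cut $\mc_G(T_1,T_2)$ decomposes as the terminal-terminal edge cut plus $\sum_v f_v(T_1,T_2)$, where $f_v(T_1,T_2)=\min\bigl(\sum_{t\in T_1}w_v(t),\,\sum_{t\in T_2}w_v(t)\bigr)$, and the contracted graph has cut equal to the same terminal-terminal edges plus $\sum_j f_{W_j}(T_1,T_2)\ge \sum_v f_v(T_1,T_2)$ (contraction can only increase cuts, so one direction is free). Splitting each $w_v$ into a ``heavy'' part on $S_v$ and a ``residual'' part on $T\setminus S_v$, the rounding of the heavy coordinates loses only a $(1+\eps)$ factor, and since all members of a class share identical rounded heavy weights, aggregating their heavy parts is essentially lossless. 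What remains is to control the error introduced by the residual parts, which are not directly encoded in the type.

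The main obstacle is precisely this residual handling: two class members can have different residual distributions on $T\setminus S_v$, and hence prefer opposite sides of the same cut, producing ``wrong-side'' loss after contraction. The plan to control this uses that each residual coordinate is bounded by the $d$-th largest weight, so by definition at most $s_v/d\le \eps^2 s_v$, making per-coordinate residual contributions to any cut small. I would additionally stratify Steiner vertices by the scale of $s_v$ (into $O(\log W/\eps)$ buckets, absorbed into $f(\eps)$) and argue by a charging argument that whenever many residual disagreements accumulate on a cut, either the terminal-terminal edges across $(T_1,T_2)$ or the heavy Steiner contributions already lower-bound $\mc_G(T_1,T_2)$ by a comparable amount, so the cumulative residual loss is at most $\eps\cdot \mc_G(T_1,T_2)$. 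Should the direct charging prove too loose, the type can be strengthened by appending a second-level fingerprint of the residual obtained from a random bucketing of $T\setminus S_v$ into $\poly(1/\eps)$ groups and recording each bucket's rounded mass; this keeps the type count at $k^{O(1/\eps^2)}f(\eps)$ while providing a direct handle on the residuals.
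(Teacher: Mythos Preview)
Your residual handling has a genuine gap that neither proposed fix closes. Take $k$ even, $d=\Theta(1/\eps^2)\ll k$, and consider Steiner vertices $u,v$ where $u$ has weight $1/k$ to each of $t_1,\dots,t_{k/2}$ and $0$ elsewhere, while $v$ has weight $1/k$ to each of $t_1,\dots,t_d$ and to each of $t_{k/2+1},\dots,t_k$. With lexicographic tie-breaking both have $S_u=S_v=\{t_1,\dots,t_d\}$ with identical heavy weights, and $s_u=1/2$, $s_v=1/2+d/k$ round to the same value once $d/k<\eps/4$, so they share a type. For the cut $T_1=\{t_1,\dots,t_{k/2}\}$ the true contributions are $0$ and $d/k$, but the contracted supernode contributes $1/2$, a blow-up of order $k\eps^2$; replicating $u,v$ many times shows this is not a small-instance artifact. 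Your random-bucketing patch records only bucket \emph{masses}, not which terminals carry them: since the residuals of $u$ and $v$ have nearly equal total mass on disjoint terminal sets, every bucket sees indistinguishable rounded masses and the types still coincide. The charging argument is equally unsupported here, as there are no terminal--terminal edges and the heavy contributions to this cut are only $O(d/k)$. A secondary issue: your type count involves $\log W$ with $W$ the maximum edge weight, which is not a function of $k$ and $\eps$.

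The paper's route is genuinely different and avoids this trap. Rather than deterministically keeping the top-$d$ edges, it \emph{randomly samples} $c=\Theta(1/\eps^2)$ edges of each star with probability proportional to weight, and then contracts non-terminals by the full \emph{profile} of the sampled star (its side on every terminal bipartition), whose count is bounded via the VC-dimension argument of \Cref{quasi_1} applied to $c$-edge stars. Sampling is exactly what kills the counterexample: $u$'s samples land in $T_1$ while $v$'s land mostly in $T_2$, so their sampled profiles differ. Two further ingredients are essential and missing from your plan: first, a preprocessing step exempts from contraction the $O(k^3/\eps^4)$ ``important'' vertices that contribute a $\poly(\eps)/k$ fraction to some pairwise terminal min-cut, guaranteeing every remaining vertex contributes at most an $\eta=\poly(\eps)/k$ fraction to \emph{any} cut; second, a concentration argument (Hoeffding for vertices with moderate $x_v/w(v)$, and a direct moment-generating-function bound for the rest) aggregates the per-vertex expected $(1+\eps)$ loss across non-important vertices, with the smallness from the first step being precisely what makes the union bound over $2^k$ cuts succeed.
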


We summarize our results and provide some comparison with previous work in \Cref{table}.

\setlength{\tabcolsep}{0.5em} 
{\renewcommand{\arraystretch}{1.4}
\begin{table}[h]
\centering
	\begin{tabular}{|c|c|c|c|c|}
		\hline
		Graph Type                       & Quality      & Size                          & Contraction-based? & Citation                                                                                      \\  \hline
		\multirow{3}{*}{General }          & $1$          & $2^{2^k}$                & Yes            & \cite{hagerup1998characterizing,khan2014mimicking} \\ \cline{2-5}
		& $1$          & $2^{\Omega(k)}$               & No         &\cite{karpov2017exponential,khan2014mimicking}\\ 
		\cline{2-5}
		& $1$          & $2^{2^{\Omega(k)}}$               & Yes         &\cite{karpov2017exponential}\\\hline
		\multirow{3}{*}{Planar}          & $1$          & $2^{O(k)}$               & Yes            & \cite{krauthgamer2013mimicking,krauthgamer2017refined} \\ \cline{2-5}
		& $1$          & $2^{\Omega(k)}$               & No         &\cite{karpov2017exponential}\\
		\cline{2-5} 
		& $1+\eps$ & $O(k\cdot\poly(\log k/\eps))$ & No                & \Cref{main: upper}                                                           \\ \hline
		\multirow{3}{*}{Quasi-bipartite} & $1$          & $2^{O(k^2\log k)}, \Omega(2^{k})$                  & Yes               & \Cref{quasi_1}                                                              \\ \cline{2-5} 
		& $1+\eps$ & $\tilde O (k/\eps^2)$           & No                & \cite{jambulapati2023sparsifying}                                                    \\ \cline{2-5} 
		& $1+\eps$ & $k^{\tilde\Omega(1/\eps)}, k^{O(1/\eps^2)} f(\eps)$     & Yes               & \Cref{main: lower,quasi_apx}                                                           \\ \hline
	\end{tabular}
\caption{Results on exact or quality-$(1+\eps)$ cut sparsifiers for planar and quasi-bipartite graphs}
\label{table}
\end{table}
}

\subsection{Technical Overview}
\label{sec: tech_overview}

\paragraph{Planar cut sparsifiers.}
Our construction of planar cut sparsifiers follows a similar framework as the planar distance emulator construction in \cite{chang2022near}, based on the intuition that, if we take the dual graph, then min-cuts become min-length partitioning cycles and therefore can be approximated by distance-based approaches. The algorithm consists of three steps: 
\begin{enumerate}
\item reduce general planar graphs to $O(1)$-face instances (where terminals lie on $O(1)$ faces in the planar embedding of the graph);
\item reduce $O(1)$-face instances to $1$-face instances (where all terminals lie on the boundary of a single face; such graphs are also known as \emph{Okamura-Seymour instances}); and
\item construct quality-$(1+\eps)$ planar cut sparsifiers for $1$-face instances.
\end{enumerate}
Step $1$ and Step $3$ are implemented in a similar way as \cite{chang2022near}. Step $2$ requires several new ideas. The goal is to preserve the min-cuts for all terminal partitions. As terminals may lie on $O(1)$ faces, for different terminal partitions, the dual partitioning cycle corresponding to the min-cuts may go around these $O(1)$ faces in different ways, and therefore it is not enough to just preserve pairwise shortest paths in the dual graph. Indeed, as observed by \cite{krauthgamer2017refined}, we need to preserve, for each pair of terminals, and for each pattern (how a path going around the $O(1)$ faces, e.g., on the left side of face $1$, on the right side of face $2$, etc), the shortest path following the pattern and connecting the terminal pair. We call such a sparsifier a \emph{pattern distance emulator}, which a generalization of the standard distance emulators.
As there are $O(1)$ faces, there are $2^{O(1)}$ different patterns (left/right for each face). We manage to construct a near-linear size pattern emulator, incurring an unharmful factor $2^{O(1)}$ loss in its size. 

We believe that the notion of pattern emulators is of independent interest, and should play a role in solving other algorithmic problems on planar graphs.

\paragraph{Upper bounds for quasi-bipartite graphs.}
For exact (quality-$1$) contraction-based cut sparsifiers, a central notion in previous constructions and lower bound proofs is the \emph{profiles} of vertices. Specifically, the profile of a vertex $v$ specifies, for each terminal partition $(S,T\setminus S)$, on which side of the cut $\mc_G(S,T\setminus S)$ lies the vertex $v$.
Vertices of the same profile can be safely contracted together, and the number of different profiles is therefore an upper bound on the size of exact cut sparsifiers. For general graphs, a doubly-exponential upper bound of $2^{2^k}$ \cite{hagerup1998characterizing} on the number of distinct profiles was established, and this was later improved to $2^{\binom{k}{k/2}}$ \cite{khan2014mimicking}. For quasi-bipartite graphs, we will utilize its structure to proved a better (single-exponential) upper bound.

By the definition of quasi-bipartite graphs, there are no edge between non-terminal vertices, so non-terminal vertices form separate stars with terminals, and therefore contribute to terminal min-cuts independently. For example, in a star centered at $v$ with $6$ edges $e_1,\ldots,e_6$ connecting to terminal $t_1,\ldots,t_6$, respectively. The profile of vertex $v$ only depends on the weights of the edges $e_1,\ldots,e_6$: for the terminal partition $(S=\set{t_1,t_2,t_3},T\setminus S=\set{t_4,t_5,t_6})$, $$\text{$v$ lies on the $S$ side of $\mc_G(S,T\setminus S)$ iff } w(e_1)+w(e_2)+w(e_3)>w(e_4)+w(e_5)+w(e_6).$$

This gives us some power to reveal properties of profiles in quasi-bipartite graphs.
For example, if we consider subsets $S\subseteq T$ being $\set{1,2},\set{3,4},\set{1,3},\set{2,4}$, then $v$ cannot lie on the $S$ side for all these terminal min-cuts. Since otherwise, letting $w=\sum_{1\le  i\le 6}w(e_i)$, we get $w(e_1)+w(e_2)>w/2$, $w(e_3)+w(e_4)>w/2$, $w(e_1)+w(e_3)\le w/2$, and $w(e_2)+w(e_4)\le w/2$, a contradiction. This means that there are some ``configurations'' in the family of terminal subsets that are simply not realizable by any vertex profiles in quasi-bipartite graphs. We implement this idea through the notion of VC-dimension, realizing ``configurations'' by ``set systems'' and ``not realizable'' by ``non-shatterable'', and obtain a bound of $k^{O(k^2)}=2^{O(k^2\log k)}$ on the number of  profiles in quasi-bipartite graphs.

For quality-$(1+\eps)$ cut sparsifiers, since we allow a small multiplicative error in accuracy, we are not restricted to contracting vertices with the same profile, but are allowed to moderately manipulate vertex profiles so as to reduce their number. We implement this idea by ``projections onto $O(1/\eps^2)$-size stars''. On the one hand, consider for example a full star centered at $v$ (that is, a star containing edges $(v,t)$ for all $t\in T$) with uniform edge weights, and a subset $S\subseteq T$ with $|S|=|T|/2-1$, so $v$ lies on the $T\setminus S$ side of $\mc_G(S,T\setminus S)$. If we uniformly at random sample $c=O(1/\eps^2)$ edges from it and obtain a ``mimicking substar'' $H_v$, then by central limit theorem, with high probability $H_v$ contains at most $c/2+O(1/\eps)$ edges to $S$ and at least $c/2-O(1/\eps)$ edges to $T\setminus S$. Therefore, even if $H_v$ fails in mimicking the behavior of the full star on the cut $(S,T\setminus S)$, in that it mistakenly selects more edges to $S$ than to $T\setminus S$ and therefore place $v$ on the $S$ side of $\mc_G(S,T\setminus S)$ rather than the $T\setminus S$ side, it only causes an error of $\frac{c/2+O(1/\eps)}{c/2-O(1/\eps)}=1+O(\eps)$ in the min-cut size, which is allowed.

On the other hand, we have shown in the exact case that the number of profiles for stars on a fixed $O(1/\eps^2)$-size terminal set is $O(1/\eps^2)^{O(1/\eps^4)}$ (replacing $k$ with $O(1/\eps^2)$ in the $k^{O(k^2)}$ bound). Since the number of $O(1/\eps^2)$-size terminal subsets is $k^{O(1/\eps^2)}$, we can bound the total number of profiles produced by $O(1/\eps^2)$-size starts by 
 $k^{O(1/\eps^2)}\cdot O(1/\eps^2)^{O(1/\eps^4)}$, obtaining a size bound of $k^{O(1/\eps^2)} f(\eps)$ for quality-$(1+\eps)$ contraction-based cut sparsifiers for quasi-bipartite graphs.

\paragraph{Lower bounds for quasi-bipartite graphs.}
As shown in \cite{chen20241+,chen2024lower}, contraction-based sparsifiers are closely related to the Steiner node version of the classic $0$-Extension problem \cite{karzanov1998minimum}. Specifically, \cite{chen20241+} showed that the best quality achievable by contraction-based flow sparsifiers is bounded by the integrality gap of the semi-metric LP relaxation. For cut sparsifiers, we observe that the best achievable quality is controlled by an even more restricted case of the $0$-Extension with Steiner Nodes problem, where the underlying graph is a boolean hypercube. We focus on this special case, construct a hard hypercube-instance which is also a quasi-bipartite graph, and prove a size lower bound for its $(1+\eps)$-approximation, leading to a same size lower bound for quality-$(1+\eps)$ contraction-based cut sparsifiers for quasi-bipartite graphs.

\paragraph{Concurrent Work.} Independent of work, Das, Kumar, and Vaz, showed in their work \cite{das2024nearly} that quasi-bipartite graphs admit exact non-contraction-based cut sparsifiers of size $2^{k^2}$ and exact contraction-based cut sparsifiers of size $2^{k^3}$. Our \Cref{quasi_1} gives a $2^{O(k^2\log k)}$ size bound for exact contraction-based cut sparsifiers, which is slightly stronger than their $2^{k^3}$ bound, and slightly weaker than their $2^{k^2}$ bound. They also have some results on flow sparsifiers. For example, quasi-bipartite graphs admit exact flow sparsifiers of size $3^{k^3}$, and treewidth-$w$ graphs admit quality-$O(\frac{\log w}{\log\log w})$ flow sparsifiers of size $O(kw)$.

\subsection{Related Work}
\label{sec: related}

\paragraph{Edge sparsifiers.} 
After Karger's result \cite{karger1999random} on cut-preserving edge sparsifiers, there are other work using sampling-based approaches. Benzcur and Karger \cite{benczur1996approximate} sampled edges based on inverse edge-strengths, and obtained a sparsifier of $O(n\log n/\eps^2)$ edges. Fung and Harvey \cite{fung2010graph} sampled edges according to their inverse edge-connectivity, and also obtained the bound of $O(n\log^2 n/\eps^2)$. Spielman and Srivastava \cite{spielman2011graph} sampled edges based on their inverse effective-resistance to obtain a spectral sparsifier (a generalization of cut sparsifiers) with size $O(n\log n/\eps^2)$. This bound was later improved by Batson, Spielman, and Srivastava \cite{batson2012twice} to $O(n/\eps^2)$.

\paragraph{Other work on cut sparsifiers.}
There are some work on (i) constructing better cut sparsifiers for  special families of graphs, for example trees \cite{goranci2017vertex} and planar graphs with all terminals lying on the same face \cite{goranci2017improved}, and graphs with bounded treewidth \cite{andoni2014towards}; (ii) preserving terminal min-cut values up to some threshold value \cite{chalermsook2021vertex,liu2020vertex}; and
(iii) dynamic cut/flow sparsifiers and their utilization in dynamic graph algorithms \cite{durfee2019fully,chen2020fast,goranci2021expander}.

\paragraph{Flow sparsifiers.}
Flow sparsifiers are highly correlated with cut sparsifiers.
Given a graph $G$ and a set $T$ of $k$ terminals, a graph $G'$ is a \emph{flow sparsifier} of $G$ with respect to $T$ with \emph{quality $q$}, iff every $G$-feasible multicommodity flow on $T$ can be routed in $G'$, and every $G'$-feasible multicommodity flow on $T$ can be routed in $G$ if the capacities of edges in $G$ are increased by factor $q$. Flow sparsifiers are stronger than cut sparsifiers, in the sense that quality-$q$ flow sparsifiers are automatically quality-$q$ cut sparsifiers, but the other direction is not true in general.

When no Steiner nodes are allowed,
Leighton and Moitra \cite{leighton2010extensions} showed the existence of quality-$O(\log k/\log\log k)$ flow sparsifiers. 
On the lower bound side, the first quality lower bound in \cite{leighton2010extensions} is $\Omega(\log\log k)$, and this was later improved to $\Omega(\sqrt{\log k/\log\log k})$ by Makarychev and Makarychev \cite{makarychev2010metric}.

In the setting where Steiner nodes are allowed,
Chuzhoy \cite{chuzhoy2012vertex} showed $O(1)$-quality contraction-based flow sparsifiers with size $C^{O(\log\log C)}$ exist, where $C$ is the total capacity of all edges incident to terminals (assuming that all edges have capacity at least $1$).
On the lower bound side, Krauthgamer and Mosenzon \cite{krauthgamer2023exact} showed that there exist $6$-terminal graphs whose quality-$1$ flow sparsifiers must have an arbitrarily large size (i.e., the size bound cannot depend only on $k$ and $\eps$). Chen and Tan \cite{chen20241+} showed that there exists $6$-terminal graphs whose quality-$(1+10^{-18})$ contraction-based flow sparsifiers must have an arbitrarily large size.

\subsection{Organization}
The rest of the paper is organized as follows. We start with some preliminaries and formal definitions in \Cref{sec: prelim}.
We first provide the construction of planar cut sparsifier in \Cref{sec: planar}, proving \Cref{main: upper}. We then show the construction of contraction-based cut sparsifiers for quasi-bipartite graphs in \Cref{sec: quasi_exact} and \Cref{sec: quasi_apx}, proving \Cref{quasi_1} and \Cref{quasi_apx}. Finally, we show the lower bound for quasi-bipartite graphs, giving the proof of \Cref{main: lower} in \Cref{sec: lower}.

\section{Preliminaries}
\label{sec: prelim}

By default, all logarithms are to the base of $2$. 
All graphs are undirected and simple.

Let $G=(V,E,w)$ be an edge-weighted graph, where each edge $e\in E$ has weight $w_e$. 
In this paper, weights can be either representing capacities (in the context of cut values) or lengths (in the context of shortest-path distances).
For a vertex $v\in V$, we denote by $\deg_G(v)$ the degree of $v$ in $G$.
For a pair $v,v'$ of vertices in $G$, we use $\dist_{G}(v,v')$ (or $\dist_{w}(v,v')$) to denote the shortest-path distance between $v$ and $v'$ in $G$, viewing edge weights as lengths.
We may omit the subscript $G$ in the above notations when the graph is clear from the context.

\paragraph{Cut sparsifiers, contraction-based sparsifiers.}
Let $G$ be a graph and let $T$ be a subset of its vertices called \emph{terminals}. 
Let $H$ be a graph with $T\subseteq V(H)$. We say that $H$ is a \emph{cut sparsifier} of $G$ with respect to $T$ with \emph{quality} $q\ge 1$, iff for any partition of $T$ into two non-empty subsets $T_1,T_2$, 
\[\mc_H(T_1,T_2)\le \mc_G(T_1,T_2)\le q\cdot\mc_H(T_1,T_2).\]

We will use the following observation, which is immediate from the definition of cut sparsifiers.
\begin{observation}
\label{obs: chain}
If $(G_1,T)$ is a quality-$q_1$ cut sparsifier of $(G,T)$, and $(G_2,T)$ is a quality-$q_2$ cut sparsifier of $(G_1,T)$, then $(G_2,T)$ is a quality-$q_1q_2$ cut sparsifier of $(G,T)$.
\end{observation}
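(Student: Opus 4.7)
The plan is to unpack the definition of quality-$q$ cut sparsifier twice and chain the resulting inequalities. Fix an arbitrary partition of $T$ into non-empty subsets $T_1,T_2$. By hypothesis, $(G_1,T)$ being a quality-$q_1$ cut sparsifier of $(G,T)$ gives us
\[\mc_{G_1}(T_1,T_2)\le \mc_G(T_1,T_2)\le q_1\cdot\mc_{G_1}(T_1,T_2),\]
and $(G_2,T)$ being a quality-$q_2$ cut sparsifier of $(G_1,T)$ gives us
\[\mc_{G_2}(T_1,T_2)\le \mc_{G_1}(T_1,T_2)\le q_2\cdot\mc_{G_2}(T_1,T_2).\]

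Next I would combine the two chains. The lower bound follows by transitivity: $\mc_{G_2}(T_1,T_2)\le \mc_{G_1}(T_1,T_2)\le \mc_G(T_1,T_2)$. The upper bound follows by substituting the second inequality into the first: $\mc_G(T_1,T_2)\le q_1\cdot\mc_{G_1}(T_1,T_2)\le q_1 q_2\cdot\mc_{G_2}(T_1,T_2)$. Together these read $\mc_{G_2}(T_1,T_2)\le \mc_G(T_1,T_2)\le q_1 q_2\cdot\mc_{G_2}(T_1,T_2)$. Since $(T_1,T_2)$ was arbitrary, $(G_2,T)$ is a quality-$q_1 q_2$ cut sparsifier of $(G,T)$ by definition.

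There is no real obstacle here; the statement is a purely definitional composition property, and the only step worth flagging is making sure the direction of each inequality is preserved when chaining (the ``lower side'' of sparsifier quality is always $1$, so the multiplicative loss only accumulates on the ``upper side''), which is exactly what the computation above reflects.
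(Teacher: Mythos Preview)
Your proof is correct and is exactly the direct unwinding of the definition that the paper intends; the paper itself states the observation is ``immediate from the definition of cut sparsifiers'' and gives no further argument.
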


We say that a graph $H$ is a \emph{contraction-based sparsifier of $G$ with respect to $T$}, iff there exists a partition $\lset$ of vertices in $G$ into subsets where different terminals in $T$ lie in different subsets in $\lset$, and $H$ is obtained from $G$ by contracting, for each $L\in\lset$, vertices in $L$ into a supernode $u_L$, keeping parallel edges and discarding self-loops. 

\section{Proof of \Cref{main: upper}}
\label{sec: planar}

\newcommand{\gluepath}{{\mathsf{Glue}}}
\newcommand{\cutpath}{{\mathsf{Cut}}}

In this section we provide the proof of \Cref{main: upper}. We will use a similar algorithmic framework as \cite{chang2022near}, by first considering the case where all terminals lie on the boundary of one face and then generalizing the arguments to the $O(1)$-face case and eventually to the general case.

Let $G,G'$ be planar graphs with $T\subseteq V(G),V(G')$.
Let $\fset$ be the set of faces in the planar drawing of $G$ that contain all terminals, and we define $\fset'$ similarly for $G'$.
We say instances $(G,T),(G',T)$ are \emph{aligned}, iff there is a one-to-one correspondence between sets $\fset$ and $\fset'$, such that for each face $F\in \fset$, the set of terminals in $G$ lying on $F$ is exactly the set of terminals in $G'$ lying on $F'$, its corresponding face in $\fset'$, and the circular ordering in which these terminals appear on $F$ and $F'$ are also identical.
We say $(G,T)$ is a $|\fset|$-face instance. Throughout this section, all the cut sparsifiers that we will ever construct for the input instance or any subinstance generated from it are aligned.

\subsection{Preparation: removing separator terminals, and dual graphs}

We start with the following divide and conquer lemma, whose proof is deferred to \Cref{apd: Proof of lem: divide}.

\begin{lemma}
\label{lem: divide}
Let $(G,T)$ be a planar instance and let $\gset$ be a collection of subgraphs of $G$ such that the edge sets $\set{E(G') \mid G'\in \gset}$ partition $E(G)$. For each graph $G'\in \gset$, we denote by $T_{G'}$ the set that contains all vertices in $T\cap V(G')$ and all vertices of $G'$ that appear in some other graph in $\gset$.
If for each $(G',T_{G'})$ we are given a quality-$q$ aligned cut sparsifier $(H_{G'}, T_{G'})$, then we can efficiently obtain a quality-$q$ planar cut sparsifier $H'$ of $G$ with $|V(H')|\le \sum_{G'\in \gset}|V(H_{G'})|$.
\end{lemma}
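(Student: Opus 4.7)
The plan is to construct $H'$ by taking the disjoint union of the sparsifiers $\{H_{G'}\}_{G'\in\gset}$ and identifying, for every vertex $v\in V(G)$ that belongs to more than one $T_{G'}$, all of its copies across those sparsifiers into a single vertex. Since each such shared $v$ is contained in all of the corresponding terminal sets by the definition of $T_{G'}$, this identification is well defined, and $|V(H')|\le\sum_{G'\in\gset}|V(H_{G'})|$. Planarity of $H'$ will follow from the aligned hypothesis: a fixed planar embedding of $G$ decomposes, one $G'$-region at a time, along faces on which the shared terminal vertices appear in a fixed cyclic order; since each $(H_{G'},T_{G'})$ is aligned with $(G',T_{G'})$, its terminals appear in the same cyclic order on the corresponding faces, so substituting $H_{G'}$ for $G'$ inside each region preserves planarity.

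For the quality bound $\mc_{H'}(T_1,T_2)\le\mc_G(T_1,T_2)\le q\cdot\mc_{H'}(T_1,T_2)$, I would prove both inequalities by a cut-decomposition argument across the pieces. For the first inequality, I would fix a min-cut in $G$ separating $T_1$ from $T_2$ and let $(V_1,V_2)$ be the partition of $V(G)$ it induces; restricting to each $G'\in\gset$ yields a partition of $V(G')$, and hence of $T_{G'}$, whose cut weight in $G'$ is at most the part of the original cut lying in $E(G')$. Applying the sparsifier inequality $\mc_{H_{G'}}\le\mc_{G'}$ on the same partition of $T_{G'}$ produces, for each piece, a cut in $H_{G'}$ of no greater weight. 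Because $(V_1,V_2)$ assigns each shared vertex to a single globally fixed side, these per-piece cuts paste into a valid cut of $H'$ separating $T_1$ from $T_2$, of total weight at most $\mc_G(T_1,T_2)$ by the edge-partition hypothesis.

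For the second inequality, I would start from a min-cut of $H'$ separating $T_1$ from $T_2$, inducing a partition $(V_1',V_2')$ of $V(H')$. The restriction to each embedded copy of $H_{G'}$ gives a partition of $T_{G'}$, and the quality-$q$ sparsifier property yields a cut in $G'$ separating the same partition whose weight is at most $q$ times the weight of the restriction. Shared vertices again lie on consistent sides (as determined by $(V_1',V_2')$), so the per-piece $G'$-cuts paste into a valid cut of $G$ separating $T_1$ from $T_2$, of total weight at most $q\cdot\mc_{H'}(T_1,T_2)$.

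The main obstacle I expect is making the planarity step precise: one must carefully use the aligned face correspondence to argue that swapping $H_{G'}$ in for $G'$ inside its region of the embedding of $G$ yields a consistent planar drawing of $H'$, even when a single shared vertex is incident to three or more pieces and lies on several shared faces simultaneously. Once planarity is established, the cut-decomposition portion reduces to bookkeeping via the edge partition of $E(G)$ and the sparsifier guarantees on each $H_{G'}$, and should be routine.
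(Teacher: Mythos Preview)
Your proposal is correct and follows essentially the same approach as the paper: construct $H'$ as the union of the $H_{G'}$ with shared boundary vertices identified, and prove both quality inequalities by restricting a min-cut (in $G$ or in $H'$) to each piece, invoking the per-piece sparsifier guarantee on the induced partition of $T_{G'}$, and pasting the results back using that shared vertices lie in $T_{G'}$ and hence have globally consistent sides. The paper's written proof actually omits the planarity argument you flag as the main obstacle; your sketch via the aligned-face correspondence is the intended justification.
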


We say a vertex $v\in V(G)$ is a \emph{separator} iff $G\setminus v$ is disconnected.
Denote its connected components by $G'_1,\ldots,G'_r$.
For each $1\le i\le r$, let $G_i$ be the subgraph of $G$ induced by $V[G'_i]\cup \set{v}$. We say graphs $G_1,\ldots,G_r$ are obtained by \emph{splitting $G$ at $v$}.
We use the following lemma from \cite{chang2022near}.

\begin{claim}[Claim 4.3 in \cite{chang2022near}]
\label{clm: separator}
Let $(G,T)$ be an instance.
Let $U$ be the set of separators in $T$. Let $\gset$ be the set of graphs obtained by splitting $G$ at all vertices in $U$ (that is, we first split $G$ at a vertex in $U$, and then repeatedly split each of the obtained graph at one of its separator terminal, until no resulting graph contains any separator terminal). Then
$\sum_{G'\in \gset}|T\cap V(G')|\le O(|T|)$.
\end{claim}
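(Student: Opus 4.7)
My plan is to combine an exact counting identity with a combinatorial bound on $|\gset|$. As the first step I would establish, by induction over the sequence of splits, the identity
\[
\sum_{G'\in\gset} |T\cap V(G')| \;=\; |T| + |\gset| - 1.
\]
The only nontrivial observation powering the induction is that a single split of a current subgraph $H$ at a separator terminal $v\in T\cap V(H)$ into $r\ge 2$ pieces $H_1,\dots,H_r$ places $v$ as a terminal into every $H_i$ (creating $r-1$ additional terminal-occurrences), while every other terminal of $H$ lies in a unique component of $H\setminus v$ and therefore belongs to exactly one $H_i$ (no duplication). Hence each split increases both the total terminal count and the number of current pieces by the same amount $r-1$, so the identity follows from starting with $|T|$ terminals in the single graph $G$.

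Given the identity, it suffices to prove $|\gset|=O(|T|)$. For this I would form an auxiliary bipartite graph $B$ whose two sides are $\gset$ and the set $W$ of ``shared'' terminals (those lying in at least two subgraphs of $\gset$), with an edge between $G'\in\gset$ and $w\in W$ whenever $w\in V(G')$. Since the splitting procedure is a tree decomposition along single separator terminals, $B$ is a tree on $|\gset|+|W|-1$ edges, every $w\in W$ has $B$-degree at least two (it was split against in some step, so it appears in at least two resulting pieces), and every leaf of $B$ lies in $\gset$. A charging argument on $B$ then assigns each piece $G'\in\gset$ to a distinct terminal of $T$ by routing internal pieces to one of their non-shared terminals in $T\setminus W$ and leaf pieces to their unique shared separator terminal in $W\subseteq T$, yielding $|\gset|=O(|T|)$ and combining with the identity gives the claim.

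The main obstacle is controlling leaf pieces of $B$ whose only terminal is the unique shared separator they attach to: a priori arbitrarily many such trivial pieces can hang off a single separator, threatening the linear bound. The resolution I anticipate is to absorb each such trivial piece at the level of $B$, exploiting that a one-terminal subgraph admits a single-vertex cut sparsifier that is already accounted for by its $B$-neighbor in the downstream application of Lemma~\ref{lem: divide}; after this absorption, every remaining piece contains at least two distinct terminals of $T$ and the charging closes up cleanly. Making this absorption precise and verifying that the final charge per terminal is $O(1)$ is the technical crux of the proof.
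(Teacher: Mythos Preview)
The paper does not supply a proof of this claim; it is quoted from \cite{chang2022near}. So there is no in-paper argument to match against, and I comment on your proposal directly.

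Your identity $\sum_{G'\in\gset} |T\cap V(G')| = |T|+|\gset|-1$ is correct and the induction is clean. You have also correctly located the only obstruction: single-terminal pieces. This is a real issue with the statement as written, not a technicality. If $v$ is a separator terminal with $r$ terminal-free components hanging off it, splitting at $v$ yields $r$ pieces each containing only $v$, so the sum equals $r$ while $|T|$ is $O(1)$; the literal inequality fails. Your absorption fix is the right move for the downstream use (a one-terminal piece has a one-vertex sparsifier and may be merged into its $B$-neighbor for the application of Lemma~\ref{lem: divide}), but it does not rescue the verbatim statement.

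Once you restrict to the non-trivial pieces $\gset'=\{G'\in\gset:|T\cap V(G')|\ge 2\}$, the argument is simpler than the tree-charging you sketch. Each trivial piece contributes exactly $1$ to both sides of your identity, so removing them preserves it:
\[
\sum_{G'\in\gset'}|T\cap V(G')|=|T|+|\gset'|-1.
\]
Now $|T\cap V(G')|\ge 2$ for every $G'\in\gset'$ gives $2|\gset'|\le|T|+|\gset'|-1$, hence $|\gset'|\le |T|-1$ and the sum is at most $2|T|-2$. The auxiliary bipartite tree $B$ is a valid construction, but neither it nor an injective charging is needed; in particular the failure mode you worried about (several leaf pieces charging to one shared separator) simply disappears once you feed the $\ge 2$ lower bound back into the identity.
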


With \Cref{lem: divide}, we will assume that no terminal in $T$ is a separator in $G$, and so if we go around the boundary of a face in $\fset$, we will visit each terminal that lies on this face exactly once. This is since we can split $G$ at all its separator terminals, compute quality-$(1+\eps)$ cut sparsifiers for them separately, and then combine them in the way of \Cref{lem: divide}, paying only an $O(1)$ factor in the total size, as guaranteed by \Cref{clm: separator}.

\paragraph{Dual graphs.}
Cuts in planar graphs are cycles in their dual graphs. We will exploit this property in a similar way as \cite{krauthgamer2017refined} and transform the cut-preserving tasks to distance-preserving tasks. For technical reasons, in order to handle graphs with terminals, our definition of dual graphs are slightly different from the standard planar dual graphs.

Let $(G,T)$ be the input instance and let $\fset$ be the faces that contains all terminals.
Assume each face is a simple cycle. For each $F\in \fset$, add an auxiliary vertex $x_F$, and connect it to every terminal via a new edge, such that these edges are drawn in an internally disjoint way within face $F$, so that they separates face $F$ into subfaces, which we call special subfaces. Then we take the standard planar dual of the modified graph.
Here each special face correspond to a node, which we call \emph{dual terminals}. We then remove all edges in the dual graph connecting a pair of dual terminals. We denote the resulting graph by $G^*$, and denote by $T^*$ the set of dual terminals. See \Cref{fig: dual} for an illustration.
Note that when graph $G$ does not have the property that each face is a simple cycle, as long as we are guaranteed that no terminal is a separator, we can still define dual graphs similarly.
It is also easy to verify that the dual of the dual graph $G^*$ is the original graph $G$ itself. Sometimes we say that we \emph{reverse} $G^*$ to obtain $G$.
Finally, as both $G$ and $G^*$ are planar graphs, and each non-terminal vertex in $G^*$ corresponds to a face in $G$, the number of vertices in $G$ and $G^*$ are within $O(1)$ factor from each other.

\begin{observation}
	$(G,T)$ is a $f$-face instance iff $(G^*,T^*)$ is a $f$-face instance.
\end{observation}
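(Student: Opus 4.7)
My plan is to prove the observation via a symmetric duality argument. By the reverse property stated just before the observation---that the dual of $(G^*,T^*)$ recovers $(G,T)$---it suffices to establish the single inequality $|\fset^*|\le|\fset|$: applying the same inequality to $(G^*,T^*)$ and using $(G^{**},T^{**})=(G,T)$ immediately gives $|\fset|\le|\fset^*|$, hence equality and both directions of the iff.

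To establish $|\fset^*|\le|\fset|$, I will construct, for each $F\in\fset$, a face $\Phi(F)$ of $G^*$, and then show that the collection $\{\Phi(F):F\in\fset\}$ covers $T^*$. The construction uses the auxiliary vertex $x_F$ added in the definition of $\tilde G$. Under the standard planar duality of $\tilde G$, the vertex $x_F$ corresponds to a face $\tilde\Phi(F)$ of $\tilde G^*$ whose bounding cycle consists precisely of the dual edges of the star edges $(x_F,t)$, one per terminal $t$ on $F$. Each such dual edge runs between two dual terminals---the two adjacent special subfaces of $F$ on either side of the star edge---so every edge on the boundary of $\tilde\Phi(F)$ is removed when passing from $\tilde G^*$ to $G^*$. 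Since deleting every edge on the boundary of a face merges it with all its neighboring faces, $\tilde\Phi(F)$ is absorbed into a single face of $G^*$, which I denote by $\Phi(F)$. Every dual terminal corresponding to a special subface of $F$ is one of the vertices on the boundary cycle of $\tilde\Phi(F)$ in $\tilde G^*$, and the edge deletions either keep such a vertex on the boundary of the merged face or leave it as an isolated vertex enclosed by $\Phi(F)$; in either case it lies on $\Phi(F)$. Hence $\{\Phi(F):F\in\fset\}$ is a face cover of $T^*$ in $G^*$ of size at most $|\fset|$.

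The main subtlety I anticipate is handling the degenerate situations in which several $\fset$-faces share boundary vertices or edges in $G$: then different $\Phi(F)$'s may coincide in $G^*$ (which only reduces the cover size, so the bound is unaffected), and some dual terminals may become isolated in $G^*$, in which case one must adopt the convention that an isolated vertex lies on the unique face of $G^*$ enclosing it in the planar embedding (this is again $\Phi(F)$). Neither issue alters the bound, but both will need a careful sentence in the final write-up to respect the precise definition of face-covering used to declare an instance to be $f$-face.
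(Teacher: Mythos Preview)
The paper does not actually prove this observation: it is stated without argument, and the only justification is the sentence that follows it, ``It is easy to see that there exists a natural one-to-one correspondence between $\fset$ and $\fset^*$.'' Your construction of $\Phi(F)$ --- the face of $G^*$ obtained by merging the dual face of $x_F$ with its neighbours after the terminal--terminal edge deletions --- is exactly this natural correspondence, so your approach is fully aligned with what the paper has in mind. The one stylistic difference is that the paper asserts a bijection outright, whereas you prove only the inequality $|\fset^*|\le|\fset|$ and invoke the involution $(G^{**},T^{**})=(G,T)$ for the reverse; this is a perfectly valid and arguably cleaner way to organise the argument, since it spares you from checking injectivity of $\Phi$ directly. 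Your anticipation of the degenerate cases (shared boundary edges between faces of $\fset$, and the resulting possibility of isolated dual terminals) is appropriate and goes beyond anything the paper says.
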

We denote by $\fset^*$ the faces in $G^*$ that contain all dual terminals. It is easy to see that there exists a natural one-to-one correspondence between $\fset$ and $\fset^*$.

\begin{figure}[h]
	\centering
	\subfigure
	{\scalebox{0.1}{\includegraphics{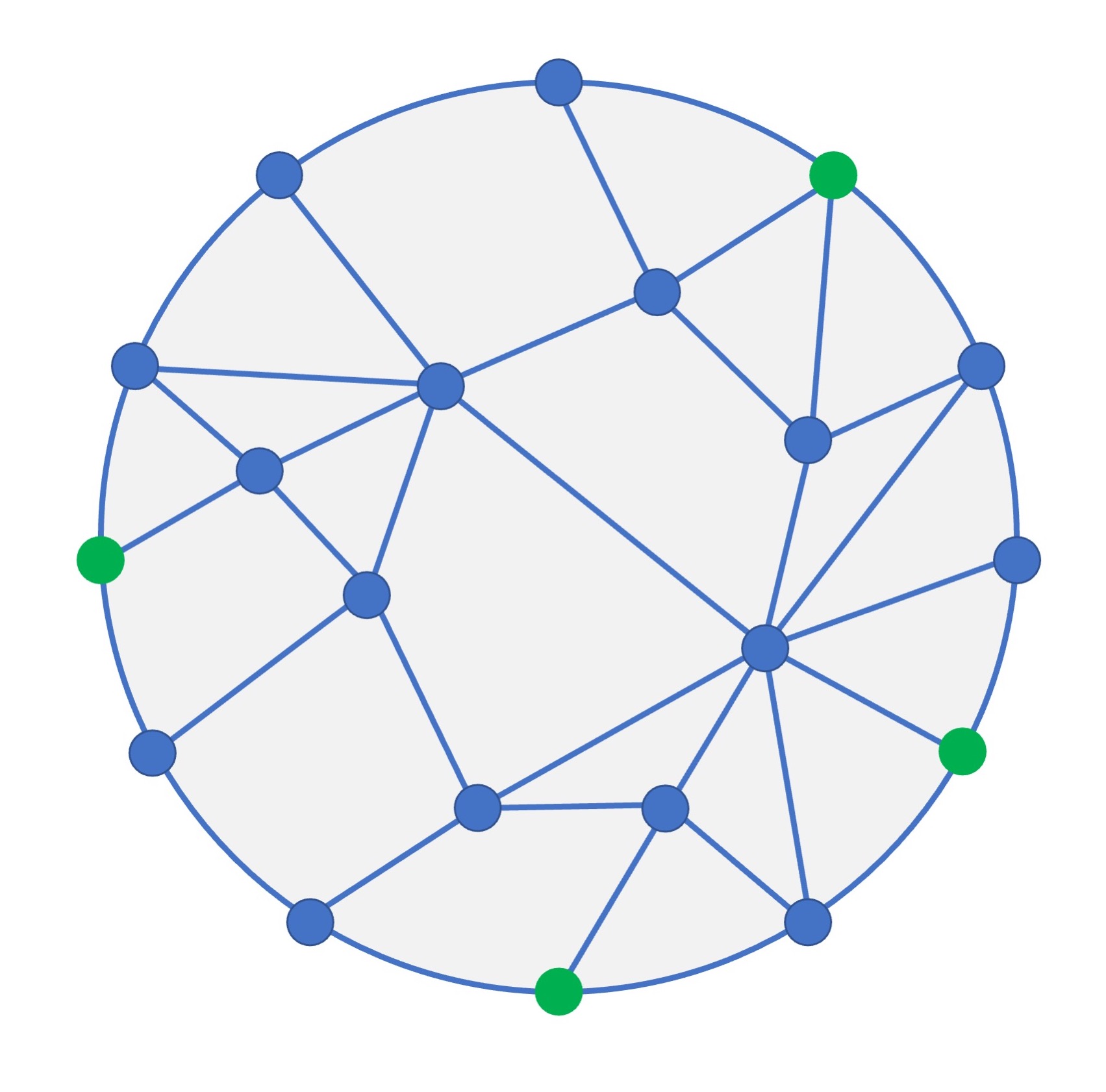}}}
	\hspace{0.7cm}
	\subfigure
	{
		\scalebox{0.1}{\includegraphics{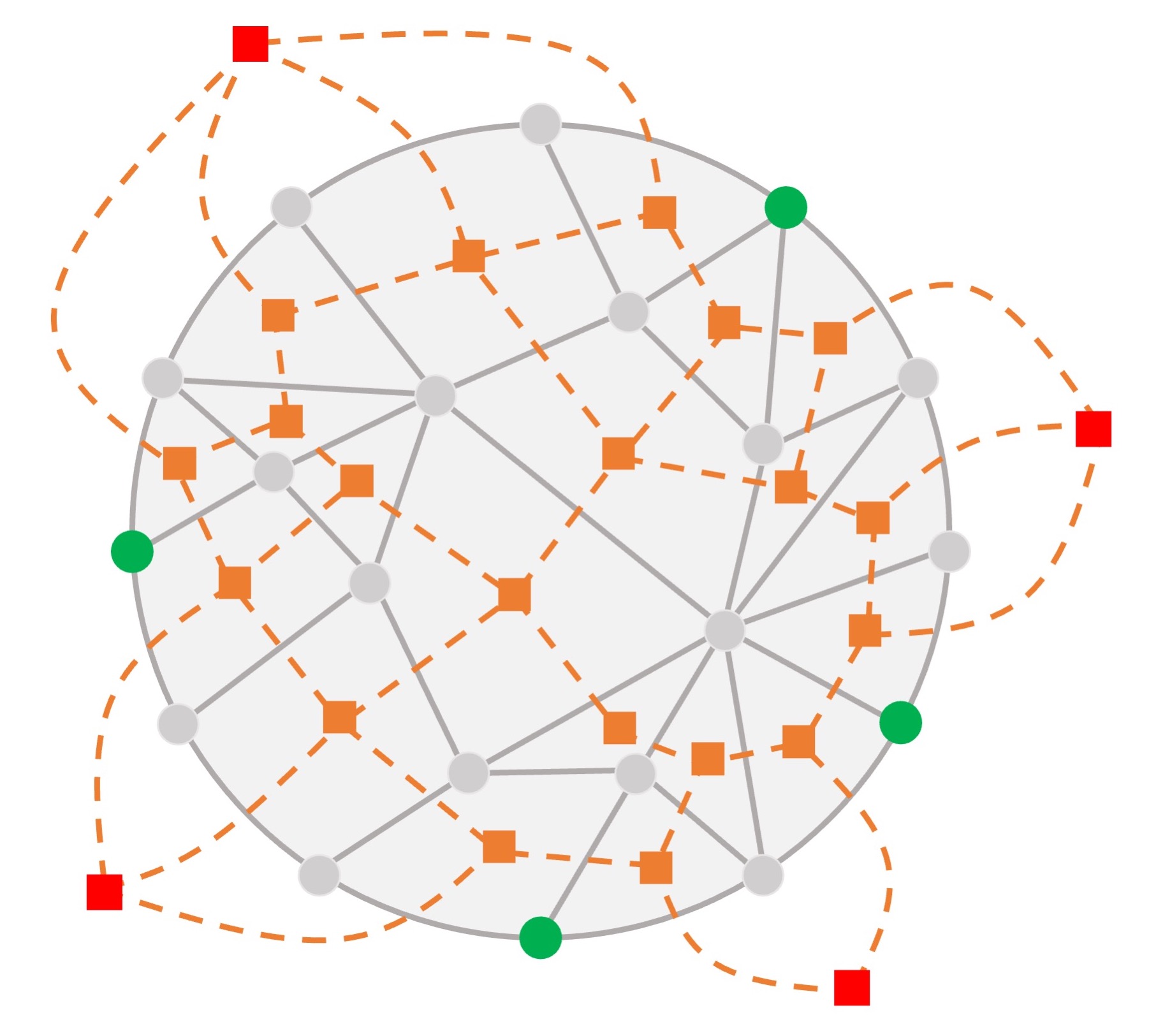}}}
	\caption{An illustration of a one-face instance $(G,T)$ (left) and its dual $(G^*,T^*)$ defined in our way (right). The dual terminals in $T^*$ are marked in red. Taking the dual of (or reverse) $G^*$ we get $G$ back.\label{fig: dual}}
\end{figure}

\subsection{Warm-up: the one-face case}
\label{sec: 1-face}

In this subsection we prove \Cref{main: upper} for the special case where all terminals lie on the boundary of the outer face in the planar embedding of $G$.

\begin{lemma}
	\label{lem: mincut structure}
Let $(G,T)$ be a one-face instance. Then for any partition $(T_1,T_2)$ of the terminals into non-empty subsets, for any min-cut $\hat E$ in $G$ separating $T_1$ from $T_2$, the corresponding edge set $\hat E^*$ in the dual graph $G^*$ is the union of several edge-disjoint shortest paths connecting dual terminals.
\end{lemma}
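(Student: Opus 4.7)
The plan is to use standard planar duality to identify $\hat E^*$ as the restriction to $G^*$ of the dual cycle family enclosing the union of the $T_1$-side components of $G\setminus\hat E$, and then run a rerouting/exchange argument to show that each resulting path must be a shortest path in $G^*$.

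First I would set up the dual picture. Let $\tilde G$ denote the modified graph with auxiliary vertex $x_F$ and the new edges from $x_F$ to each terminal, and let $\tilde G^*$ be its standard planar dual, so that $G^*$ is obtained from $\tilde G^*$ by deleting the edges between dual terminals; these deleted edges are exactly the duals of the new edges $\{(x_F,t):t\in T\}$. Let $U\subseteq V(G)$ denote the union of the connected components of $G\setminus\hat E$ that contain terminals of $T_1$. Because $\hat E$ separates $T_1$ from $T_2$, the edge boundary of $U$ in $G$ is exactly $\hat E$; since $x_F\notin U$, the edge boundary in $\tilde G$ is $\delta_{\tilde G}(U)=\hat E\cup\{(x_F,t):t\in T_1\}$.

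Next I would extract the structural decomposition. By standard planar duality for edge boundaries of vertex subsets, $\delta_{\tilde G}(U)$ dualizes to a disjoint union of simple cycles $\Gamma$ in $\tilde G^*$ (one per connected component of the boundary curve in the planar embedding). The edges of $\Gamma$ that are duals of the new edges $\{(x_F,t):t\in T_1\}$ are precisely the inter-dual-terminal edges lying on $\Gamma$, and these are exactly the edges removed when forming $G^*$. Deleting them from $\Gamma$ breaks each cycle of $\Gamma$ into a collection of edge-disjoint paths in $G^*$ whose endpoints are the incident dual terminals; together these paths constitute $\hat E^*$.

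Finally I would show each such path is a shortest path via an exchange argument. Suppose for contradiction some path $P$ in the decomposition, with endpoints at dual terminals $d_1,d_2$, is not a shortest $d_1$-$d_2$ path in $G^*$, and pick a strictly shorter $d_1$-$d_2$ path $P'$ in $G^*$. Set $\Gamma':=(\Gamma\setminus E(P))\cup E(P')$; since $P\cup P'$ forms a closed walk in $\tilde G^*$, $\Gamma\triangle\Gamma'$ is a cycle-space element, so $\Gamma'$ also lies in the cycle space of $\tilde G^*$ and hence is the edge boundary of some $U'\subseteq V(\tilde G)$. Because $P,P'$ both lie in $G^*$ (hence avoid the inter-dual-terminal edges), the modified boundary reroutes only through the interior of $\tilde G$ between two fixed subfaces of $F$, so $U'$ contains exactly the same terminals as $U$. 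Thus $\hat E':=(\hat E\setminus E(P))\cup E(P')$ is still a $T_1$-$T_2$ cut in $G$, of total weight $w(\hat E)+w(P')-w(P)<w(\hat E)$, contradicting the minimality of $\hat E$. I expect the main obstacle to be making this last topological claim fully rigorous: one must verify that swapping $P$ for $P'$ inside $G^*$ does not flip the inside/outside status of any terminal with respect to $U$, and the key observation for this is that both $P$ and $P'$ start and end at dual terminals (special subfaces strictly interior to $F$), so the rerouting does not cross any terminal on $\partial F$.
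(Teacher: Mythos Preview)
Your approach is valid but genuinely different from the paper's. The paper argues entirely in the primal graph: using the circular order of terminals on the outer face, it finds an \emph{interval} $T'\subseteq T$ whose containing component $C$ of $G\setminus\hat E$ is separated from the rest, proves $\delta(C)$ is exactly $\mc_G(T',T\setminus T')$ (via an exchange with the true interval min-cut), then shows $\hat E\setminus\delta(C)$ is again a min-cut for the reduced partition $(T_1\setminus T',T_2)$ and recurses. Each extracted piece $\delta(C)$ dualizes to a single dual-terminal shortest path. Your route is global and dual-based: decompose $\delta_{\tilde G}(U)^*$ into boundary cycles, cut them at the auxiliary-edge duals, and run one exchange per resulting path. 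The paper's argument is more elementary (no cycle-space reasoning, no Jordan-curve parity) and exploits the one-face geometry via intervals; yours is more uniform and closer to how the lemma is used downstream, but demands more topological bookkeeping.

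Two places in your write-up need tightening. First, ``$\hat E$ separates $T_1$ from $T_2$'' alone does not give $\delta_G(U)=\hat E$; you need minimality (otherwise $G\setminus\hat E$ could have terminal-free components, or $\hat E$ could contain edges inside $V\setminus U$). Second, your claim that deleting the inter-dual-terminal edges ``breaks each cycle of $\Gamma$ into paths'' presupposes that every boundary cycle of $\Gamma$ contains at least one such edge; this too follows from minimality (a cycle missing all of them would bound a terminal-free $W$, and $U\triangle W$ would give a strictly smaller $T_1$--$T_2$ cut), but it should be stated. Finally, in the exchange step define $\Gamma'=\Gamma\triangle(P\triangle P')$ rather than $(\Gamma\setminus E(P))\cup E(P')$, since $P'$ may share edges with $\Gamma\setminus P$; with symmetric difference the cycle-space membership and the strict weight drop both go through cleanly, and your parity argument (no $(x_F,t)^*$ in $P\cup P'$, hence every terminal stays on $x_F$'s side of the swapped curve) is exactly the right justification for the topological claim you flagged.
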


\begin{proof}
	For every non-terminal vertex $v$ in $G$, we say that it is a \emph{$1$-node} iff in graph $G\setminus \hat E$, $v$ lies in the same connected component with some terminal in $T_1$, and we define \emph{$2$-nodes} similarly.
	
	\begin{observation}
		\label{obs: 1 vs 2}
		Every non-terminal $v$ in $G$ is either a $1$-node or a $2$-node, but not both, and every edge in $\hat E$ connects a $1$-node to a $2$-node.
	\end{observation}
	\begin{proof}
		As $\hat E$ is a cut separating terminals in $T_1$ from terminals in $T_2$, in $G\setminus \hat E$ every non-terminal cannot be both a $1$-node and a $2$-node, as otherwise some $T_1$ terminal lies in the same connected component with some $T_2$ terminal, a contradiction.
		
		Assume now that there is a vertex that is neither a $1$-node nor a $2$-node. Consider the connected component $C$ of $G\setminus\hat E$ that contains it, so $C$ contains no terminals. Consider any edge $e$ connecting an vertex in $C$ to a vertex $v'$ outside $C$, clearly $e$ belongs to $\hat E$.
		We claim that $\hat E\setminus e$ is also a cut separating $T_1$ from $T_2$, contradicting the minimality of $\hat E$.
		In fact, if $v'$ is a $1$-node ($2$-node, resp.), then in $G\setminus (\hat E\setminus e)$ all vertices of $C$ are also $1$-nodes ($2$-nodes, resp.); and if $v'$ is neither a $1$-node nor a $2$-node, then in $G\setminus (\hat E\setminus e)$ all vertices of $C$ are neither $1$-nodes nor $2$-nodes. In either case, $T_1$ are still separated from $T_2$ in $G\setminus (\hat E\setminus e)$.
		
		It is now easy to see that every edge in $\hat E$ connects a $1$-node to a $2$-node, since otherwise we can remove such an edge from $\hat E$, still separating $T_1$ from $T_2$, contradicting the minimality of $\hat E$.
	\end{proof}

	Denote $T=\set{t_1,\ldots,t_k}$, where the terminals are indexed according to the order that they appear on the boundary of the outer face. An \emph{interval} is a subsequence $t_i,t_{i+1}\ldots,t_j$, such that all terminals $t_i,t_{i+1}\ldots,t_j$ lies in the same connected component in $G\setminus \hat E$ that does not contain $t_{i-1}, t_{j+1}$.
	We say that an interval is a $1$-interval ($2$-interval, resp.) if all its terminals lie in $T_1$ ($T_2$, resp.).
	
	\begin{observation}
		\label{obs: interval}
		There is an interval $T'$, such that in $G\setminus \hat E$, $T'$ is separated from $T\setminus T'$.
	\end{observation}
	\begin{proof}
		We find such an interval as follows. Consider any interval $T'$. If it is separated from $T\setminus T'$ in $G\setminus \hat E$, then we are done. If not, then it is connected to some other interval $T''$ in $G\setminus \hat E$. They separate the boundary of the outer face into two segments both containing some terminals, but the terminals in one segment are not connected to the terminals in the other segment. We then recurse on one of these segments, start with any interval. Eventually we will find a desired interval.
	\end{proof}
	
	Consider an interval $T'$ and let $C$ be the connected component in $G\setminus \hat E$ that contains $T'$. Assume without loss of generality that $T'\subseteq T_1$. Let $\delta(C)$ be the edges in $G$ with exactly one endpoint in $C$, so clearly $\delta(C)\subseteq \hat E$. 
	
	\begin{claim}
		\label{clm: interval cut}
		$\delta(C)$ is the min-cut in $G$ separating $T'$ from $T\setminus T'$.
	\end{claim}
	\begin{proof}
		Let $E'$ be the mincut in $G$ separating $T'$ from $T\setminus T'$. We show that $(\hat E\setminus \delta(C))\cup E'$ is a cut in $G$ separating $T_1$ from $T_2$. From the minimality of $\hat E$, this means that $w((\hat E\setminus \delta(C))\cup E')\ge w(\hat E)$, implying that $w(\delta(C))\le w(E')$, or equivalently, $\delta(C)$ is the min-cut in $G$ separating $T'$ from $T\setminus T'$.
		In fact, from \Cref{obs: 1 vs 2}, all outside-of-$C$ endpoints of edges in $\delta(C)$ are $2$-nodes.
		Therefore, it is easy to verify that $\hat E\setminus \delta(C)$ is a cut in $G$ separating $T_1\setminus T'$ to $T_2\cup T'$. As $E'$ is a cut separating $T'$ from $T_2$, we conclude that $\hat E\setminus \delta(C))\cup E'$ separates $T_1$ from $T_2$.
	\end{proof}

	\Cref{clm: interval cut} implies that, in the dual graph $G^*$, the dual edges of edges in $\delta(C)$ form a shortest path connecting a pair of terminals in $T^*$.
	In other words, we have managed to extract part of the cut $\hat E$ that is a shortest path in $G^*$. We now recursively extract other parts using the following claim from the rest of $\hat E$.
	
	\begin{claim}
		\label{clm: recursive}
		$\hat E\setminus \delta(C)$ is a min-cut in $G$ separating $T_1\setminus T'$ from $T_2$.
	\end{claim}
	\begin{proof}
		On one hand, we have shown in the proof of \Cref{clm: interval cut} that $\hat E\setminus \delta(C)$ is a cut in $G$ separating $T_1\setminus T'$ to $T_2\cup T'$, so $\hat E\setminus \delta(C)$ does separate $T_1\setminus T'$ from $T_2$.
		On the other hand, if we let $E'$ be a min-cut in $G$ separating $T_1\setminus T'$ from $T_2$, then via similar analysis as in proof of \Cref{clm: interval cut}, we can show that $\delta(C)\cup E'$ is a cut in $G$ separating $T_1$ from $T_2$, implying that $w(\hat E)\le w(\delta(C)\cup E')$ and so $w(\hat E\setminus \delta(C))\le w(E')$.
	\end{proof}
	According to \Cref{clm: recursive}, we are able to repeatedly extract subsets of edges in $\hat E$ that correspond to shortest paths in $G^*$, using \Cref{obs: interval} and \Cref{clm: interval cut}, until $\hat E$ becomes empty. This completes the proof of \Cref{lem: mincut structure}. 
\end{proof}

We are now ready to prove \Cref{main: upper} for one-face instances using \Cref{main: upper}.
We first construct the dual graph $G^*$, and use the result in \cite{chang2022near} to construct a quality-$(1+\eps)$ planar distance emulator $\hat G^*$ with size $O(k\cdot \poly(\log k/\eps))$, and then reverse it to get $\hat G$. 
From our definition of dual graphs and their reverse, $\hat G$ contains all terminals and $|V(\hat G)|=O(k\cdot \poly(\log k/\eps))$. It remains to show that $(\hat G,T)$ is a quality-$(1+\eps)$ cut sparsifier for $(G,T)$. 

Let $(T_1,T_2)$ be a partition of $T$.
On the one hand, the min-cut in $G$ separating $T_1$ from $T_2$ consists of several edge-disjoint dual terminal shortest paths in $G^*$, and as $\hat G^*$ is a $(1+\eps)$ aligned distance emulator for $G^*$, we can find, for each such dual terminal shortest path in $G^*$, a shortest path in $\hat G^*$ connecting its corresponding dual terminals, with at most $(1+\eps)$ times the length. By taking the union of the edges in all these paths in $\hat G^*$ (note however that these paths are not necessarily edge-disjoint in $\hat G^*$), we obtain an edge set in $\hat G$ separating $T_1$ from $T_2$ whose value is at most $(1+\eps)$ times the value of $\mc_G(T_1,T_2)$. 
On the other hand, we apply the same analysis, starting with a min-cut in $\hat G$ separating $T_1$ from $T_2$. As $G^*$ is also a $(1+\eps)$ aligned distance emulator for $\hat G^*$, we can derive similarly that $\mc_{\hat G}(T_1,T_2)\le (1+\eps)\cdot \mc_G(T_1,T_2)$.

\subsection{The $O(1)$-face case}

In this subsection we prove the following lemma.

\begin{lemma}
	\label{lem: O(1) face}
	There exists a universal constant $c>0$, such that for any real number $0<\eps<1$, every $f$-face instance $(H,U)$ has an aligned $f$-face instance $(H',U)$ as its quality-$(1+\eps)$ cut sparsifier, with $|V(H')|\le |U|\cdot (cf\log |U|/\eps)^{cf^2}$.
\end{lemma}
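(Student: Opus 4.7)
The plan is to lift the one-face argument in \Cref{sec: 1-face} to the $f$-face setting by working in the dual graph and introducing a \emph{pattern} refinement of the distance-emulator framework. I first pass to the dual $(H^*,U^*)$, which is still an $f$-face instance with $|V(H^*)|=O(|V(H)|)$ and which is naturally aligned with $(H,U)$. Generalizing \Cref{lem: mincut structure}, any min-cut in $H$ separating a bipartition of $U$ corresponds in $H^*$ to an edge-disjoint union of paths joining pairs of dual terminals. Unlike the one-face case, each such path now has a nontrivial homotopy type relative to the $f-1$ ``holes'' formed by the non-outer terminal faces, so preserving pairwise dual-terminal distances is no longer enough: I need to preserve, for every pair of dual terminals and every homotopy type that can appear in some optimal min-cut, an approximately shortest path of that homotopy type.

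To encode homotopy types concretely, I pick $f-1$ shortest \emph{portal paths} $P_1,\ldots,P_{f-1}$ in $H^*$ whose removal merges all $f$ terminal faces into one, and I define the \emph{pattern} of a dual-terminal path as its signed crossing sequence with $P_1,\ldots,P_{f-1}$. After the standard shortcutting argument, a path that participates in some optimal min-cut crosses each $P_i$ only $O(f)$ times, so the number of relevant \emph{reduced} patterns is at most $(cf)^{O(f)}$. To approximate shortest paths of a given pattern efficiently, I place $O(f\log|U|/\eps)$ portals along each $P_i$ using a standard $\eps$-portal sampling; each pattern then reduces to a prescribed sequence of at most $O(f)$ portal-to-portal hops in the graph $\tilde H^*$ obtained by cutting $H^*$ along $P_1,\ldots,P_{f-1}$, which is a one-face instance on $|U|+O(f^2\log|U|/\eps)$ terminals.

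I then invoke the one-face construction of \Cref{sec: 1-face} (together with the distance emulator of \cite{chang2022near}) on $(\tilde H^*,\tilde U^*)$ with accuracy parameter $\eps/\Theta(f)$, obtaining a planar near-linear-size pattern distance emulator. Reversing it to the primal and re-identifying the two copies of each portal vertex produces a planar $f$-face graph $H'$ aligned with $(H,U)$. The quality analysis shows that every min-cut in $H$ is approximated within $(1+\eps)$ in $H'$, because each of the $O(f)$ pattern segments is $(1+\eps/\Theta(f))$-approximated and the errors compose multiplicatively; the reverse direction follows symmetrically from \Cref{obs: chain}. The size bound is the product of the one-face near-linear size $|U|\cdot\poly(f\log|U|/\eps)$ with the pattern count $(cf)^{O(f)}$, giving the claimed $|U|\cdot(cf\log|U|/\eps)^{cf^2}$.

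The main obstacle is the generalization of \Cref{lem: mincut structure} to many faces: I have to show that every min-cut in an $f$-face $H$ admits a dual decomposition into pattern shortest paths drawn only from the bounded family of reduced patterns, so that preserving this family suffices. The ``$1$-node vs.\ $2$-node'' dichotomy from the one-face proof has to be carried out face-by-face, using the portal paths to certify that each extracted dual sub-path can be routed along a homotopy representative of its class without increasing the cut value. A second delicate point is bookkeeping the quality loss across the $O(f)$ portal-segments of a single pattern and the $O(1)$-factor blow-up between primal and dual, so that per-segment error $\eps/\Theta(f)$ aggregates to global quality $(1+\eps)$ while keeping the size at the stated level.
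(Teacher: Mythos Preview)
Your high-level architecture matches the paper --- pass to the dual, cut along face-connecting shortest paths with portals, reduce to a simpler instance, glue back --- but the execution differs substantially. The paper proceeds by \emph{induction on $f$}: it slices along a \emph{single} shortest path $\Pi$ between two terminal faces, reducing to an $(f-1)$-face instance, recursively constructs a quality-$(1+\eps'')$ pattern emulator there (\Cref{lem: O(1) face emulator}), and glues back via Claims~\ref{clm: cut glue emulator} and~\ref{clm: cut glue}. Its pattern is simply a vector in $\{-1,+1\}^f$ recording, for each terminal face, whether it lies inside or outside the closed curve $P\cup\Pi_{t,t'}$, giving exactly $2^f$ patterns (\Cref{obs: pattern}) rather than your $(cf)^{O(f)}$ crossing sequences. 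The $2^f$ factor is absorbed into the portal set at each inductive step (via pattern-respecting $\eps$-covers, \Cref{lem: pattern cover}), and the size recursion across the $f$ levels is what produces the exponent $cf^2$.

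More importantly, the paper never proves a multi-face analogue of \Cref{lem: mincut structure}. It instead invokes \Cref{thm: pattern for cut} (Theorem~4.6 of \cite{krauthgamer2017refined}) as a black box: a quality-$q$ pattern emulator of the dual is automatically a quality-$q$ cut sparsifier of the primal. The ``main obstacle'' you flag --- that every $f$-face min-cut decomposes in the dual into pattern-shortest paths drawn from a bounded family --- is precisely what that citation delivers, and you do not supply a proof. Likewise, your assertion that a dual path participating in an optimal min-cut crosses each $P_i$ only $O(f)$ times, and your size accounting (``one-face size times pattern count'' yielding the exponent $cf^2$), are both left unjustified; the paper's inductive route together with its reliance on \cite{krauthgamer2017refined} sidesteps both issues entirely.
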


Terminal-separating min-cuts in $G$ dual-terminal shortest paths in our graph $G^*$. Therefore, from now on we focus on preserving dual-terminal shortest paths in $G^*$. Recall that $\fset^*$ is the collection of faces in $G^*$ that contains all dual terminals. The following notion is central in our algorithm.

\paragraph{Pattern-shortest paths, pattern distances, and pattern emulators.}
For a pair $t,t'$ of dual terminals in graph $G^*$, there are more than one way for them to be connected by a path, depending on how the path goes around other faces in $\fset^*$, called \emph{patterns},  A rigorous way of defining patterns was given in \cite{krauthgamer2017refined}, which we describe below.

Fix a point $\nu^*$ in the plane.
Let $\gamma$ be a closed curve (not necessarily simple). A vertex $v\notin \gamma$ is said to be \emph{inside} $\gamma$ iff any simple curve connecting $\nu^*$ to $v$ crosses $\gamma$ an odd number of times, otherwise it is said to be \emph{outside} $\gamma$.
For a pair $t,t'$ of dual terminals in $G^*$, we denote by $\Pi_{t,t'}$ the shortest path connecting them in $G^*$.
For every face $F\in \fset^*$, we place a point $z_F$ inside its interior. Consider now a path $P$ in $G^*$ connecting $t,t'$. Note that the union of $P$ and $\Pi_{t,t'}$ is a closed curve. Now the \emph{pattern} of $P$ is defined to be a $|\fset^*|$-dimensional vector $\pat(P)=(\phi_F)_{F\in \fset^*}$, where $\phi_F=1$ iff $z_F$ is inside the closed curve formed by the union of $P$ and $\Pi_{t,t'}$ and $\phi_F=-1$ iff $z_F$ is outside.

\begin{observation}
\label{obs: pattern}
Between every pair of terminals there are $2^{f}$ different patterns.
\end{observation}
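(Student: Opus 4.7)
The plan is to verify both bounds suggested by the phrase ``$2^f$ different patterns'': the easy upper bound, and the realizability of every configuration. First I would simply recall the definition: a pattern is a vector $\pat(P) = (\phi_F)_{F \in \fset^*}$ indexed by the $f$ faces of $\fset^*$, with each coordinate $\phi_F \in \{-1, +1\}$. Hence $\pat(P)$ lives in the cube $\{-1,+1\}^{\fset^*}$, which has cardinality $2^{|\fset^*|} = 2^f$, giving immediately that no more than $2^f$ patterns can be realized by any path between $t$ and $t'$. This part requires no topology beyond the definition.

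For the matching lower bound I would show that every vector in $\{-1,+1\}^{\fset^*}$ is actually realized by some $t$-$t'$ path. Pick a target vector $\chi \in \{-1,+1\}^{\fset^*}$ and let $S = \{F : \chi_F \ne \phi_F(\Pi_{t,t'})\}$ be the coordinates where $\chi$ differs from the pattern of the reference shortest path $\Pi_{t,t'}$ (with the natural convention $\pat(\Pi_{t,t'}) = (-1,\dots,-1)$, or whatever value the degenerate closed curve is assigned). For each $F \in S$, let $C_F$ be a simple cycle in $G^*$ enclosing $z_F$ but no other $z_{F'}$; such a cycle exists since each face of $\fset^*$ is surrounded by edges of $G^*$ (and we may assume WLOG that the reference points $z_{F'}$ are placed deep enough inside each face to be separated from one another). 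Starting from $\Pi_{t,t'}$ and symmetric-differencing with each $C_F$ (for $F \in S$) produces a closed walk from $t$ to $t'$ whose union with $\Pi_{t,t'}$ toggles the parity of the number of crossings of any simple curve from $\nu^*$ to $z_F$ exactly for $F \in S$, and leaves the parity unchanged for $F \notin S$. The resulting $t$-$t'$ walk can then be converted to a simple path with the same pattern by shortcutting, since the inside/outside status of each $z_F$ is preserved under removal of trivial sub-loops.

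The main obstacle is the second step: one needs a clean topological argument that toggling by $C_F$ changes $\phi_F$ and only $\phi_F$. The clean way to see this is to note that the symmetric difference $\Pi_{t,t'} \triangle C_F$ forms a closed curve whose union with $\Pi_{t,t'}$ is homologous (modulo $2$) to $C_F$, and $C_F$ separates $z_F$ from $z_{F'}$ for $F' \ne F$ and from $\nu^*$. From there, the parity-of-crossings definition of $\phi_F$ gives the desired toggling behavior. Combining the two bounds completes the observation. I expect this topological bookkeeping to be the only delicate point; the upper bound half is a one-line counting argument.
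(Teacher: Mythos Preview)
Your upper-bound argument matches the paper's proof exactly: the paper's entire proof of this observation is the single sentence that a pattern is an $f$-dimensional $\pm 1$ vector, hence there are $2^f$ possibilities. The paper does not prove (or need) realizability.

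Your lower-bound half---showing every vector in $\{-1,+1\}^f$ is realized by some $t$-$t'$ path---is additional work not present in the paper. Looking at where the observation is invoked (bounding $|Y| \le O(2^f \cdot |T|/\eps')$ by summing over all patterns), only the upper bound is used; the paper's reading of ``there are $2^f$ different patterns'' is simply ``the set of pattern vectors has cardinality $2^f$'', not ``all $2^f$ are realized between every terminal pair''. Your realizability sketch also has genuine gaps as written: the symmetric difference of $\Pi_{t,t'}$ with a cycle $C_F$ that may be vertex-disjoint from $\Pi_{t,t'}$ is not a $t$-$t'$ walk at all, and shortcutting a walk to a simple path can change the pattern (removing a sub-loop that winds once around $z_F$ flips $\phi_F$). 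If you did want realizability, a cleaner construction is to splice into $\Pi_{t,t'}$ a detour that reaches the boundary of $F$ and goes around it once---but again, none of this is needed for the observation as the paper states and uses it.
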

\begin{proof}
As a pattern is a $|\fset^*|=|\fset|=f$-dimensional $+/-$ vector. There are $2^f$ possibilities.
\end{proof}

Let $P,P'$ be simple paths with same endpoints, the above definition of pattern implies that paths $P,P'$ \emph{have the same pattern} iff all points $\set{z_F}_{F\in \fset^*}$ lie outside the closed curve $P\cup P'$.
We will also consider paths $P,P'$ with one common endpoint $t$. Let $R$ be a path such that the other-endpoints of $P$ and $P'$ both lie on $R$. In this case we say that paths $P,P'$ have the \emph{same pattern with respect to $R$}, iff all points $\set{z_F}_{F\in \fset^*}$ lie ourside the closed curve formed by the union of $P$, $P'$, and the subpath of $R$ between the other-endpoint of $P$ and the other-endpoint of $P'$.

For a pattern $\pat$, the $t$-$t'$ \emph{$\pat$-shortest path} is defined to be the shortest path among all $t$-$t'$ paths with pattern $\pat$. Its length is defined to be the \emph{$\pat$-distance} between $t,t'$, which we denote by $\dist^{\Phi}(t_1,t_2)$.
Let $(G,T), (H,T)$ be aligned instances. We say that $(H,T)$ is a \emph{quality-$q$ pattern emulator} of $(G,T)$, iff for every pair $t_1,t_2$ of terminals and every pattern $\pat$, the $\pat$-distance between $t_1,t_2$ in $G$ is within factor $q$ from the $\pat$-distance between $t_1,t_2$ in $H$.
We use the following result from \cite{krauthgamer2017refined}.

\begin{theorem}[Adaptation of Theorem 4.6 in \cite{krauthgamer2017refined}]
\label{thm: pattern for cut}
Let $(G,T),(H,T)$ be aligned instances. If the dual instance $(H^*,T^*)$ is a quality-$q$ pattern emulator of the dual instance $(G^*,T^*)$, then $(H,T)$ is a quality-$q$ cut sparsifier of $(G,T)$.
\end{theorem}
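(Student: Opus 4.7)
The plan is to reduce the cut-sparsification condition for $(G, T)$ versus $(H, T)$ to a pattern-shortest-path preservation condition for $(G^*, T^*)$ versus $(H^*, T^*)$, exploiting planar cut-cycle duality. The key structural claim, which generalizes \Cref{lem: mincut structure} to the multi-face setting, is that for every terminal partition $(T_1, T_2)$ and every inclusion-minimal min-cut $\hat E \subseteq E(G)$ separating $T_1$ from $T_2$, the dual edge set $\hat E^*$ in $G^*$ decomposes into edge-disjoint paths $P_1, \ldots, P_m$, where each $P_i$ joins two dual terminals lying on a common face of $\fset^*$ and is the pattern-shortest path in $G^*$ for its own pattern $\pat(P_i)$. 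I would establish the decomposition via the standard planar cut-cycle correspondence: in the unpruned planar dual, $\hat E^*$ traces a collection of closed curves, and since edges joining pairs of dual terminals were removed in the construction of $G^*$, these closed curves are cut at dual terminals into the desired $P_i$. The pattern-shortest property then follows from minimality of $\hat E$: substituting any strictly shorter same-pattern path for some $P_i$ in $G^*$ and dualizing back would yield a strictly lighter cut in $G$ still separating $T_1$ from $T_2$, because two $t$-$t'$ paths of the same pattern enclose the same subset of $\fset^*$-faces and therefore induce the same terminal separation in the aligned planar embedding.

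Given this structural lemma, I would prove $\mc_H(T_1, T_2) \le \mc_G(T_1, T_2)$ as follows. Decompose a min-cut $\hat E$ of $G$ into $P_1, \ldots, P_m$. The pattern-emulator hypothesis furnishes, for each $i$, a path $P_i'$ in $H^*$ with the same endpoints and pattern $\pat(P_i)$ of length at most $\mathrm{length}(P_i)$. Using the alignment of $(G, T)$ and $(H, T)$, which gives a bijection between $\fset^*$ and the corresponding dual-face collection for $H^*$ preserving the cyclic order of dual terminals around each face, the union $\bigcup_i P_i'$ dualizes back to an edge set $\hat E_H$ in $H$ that still separates $T_1$ from $T_2$. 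Summing lengths then yields $\mc_H(T_1, T_2) \le w(\hat E_H) \le \sum_i \mathrm{length}(P_i) = w(\hat E) = \mc_G(T_1, T_2)$. The reverse inequality $\mc_G(T_1, T_2) \le q \cdot \mc_H(T_1, T_2)$ follows by running the same argument in the opposite direction, using the other side of the quality-$q$ pattern-emulator guarantee to lift each pattern-shortest path in $H^*$ to a same-pattern path in $G^*$ that is at most $q$ times as long.

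The main obstacle is the topological claim used in both substitution steps: replacing a dual path with any same-endpoint, same-pattern path preserves the induced terminal separation in the aligned planar embedding. I would handle this by showing that any two $t$-$t'$ paths of the same pattern in $G^*$, closed with the reference shortest path $\Pi_{t,t'}$, produce closed curves homotopic in the plane punctured at the face representatives $\{z_F : F \in \fset^*\}$, so they bound the same subsets of terminals when pulled back into $G$. Alignment then transfers the entire argument verbatim between $G$ and $H$, closing the proof of both inequalities and yielding the theorem.
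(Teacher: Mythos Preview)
The paper does not give its own proof of this theorem: it is stated as an adaptation of Theorem~4.6 in \cite{krauthgamer2017refined} and used as a black box, so there is no in-paper argument to compare against. Your outline is the expected strategy and matches the shape of the cited result: decompose an inclusion-minimal terminal min-cut in $G$, via duality, into dual-terminal paths in $G^*$; argue each such path is pattern-shortest by a substitution/minimality argument; then use the pattern-emulator hypothesis together with alignment to transport each path to $H^*$ and reassemble a valid $(T_1,T_2)$-cut in $H$ of comparable weight, and run the symmetric direction for the factor-$q$ inequality. The one-face special case of the decomposition step is exactly \Cref{lem: mincut structure}, which the paper does prove, and you correctly identify that the multi-face generalization must track patterns.

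The obstacle you flag is the genuine one. The assertion that replacing $P_i$ by any same-endpoint same-pattern path preserves the induced terminal separation is the crux, and your homotopy sketch is the right idea but would need to be made more precise than ``homotopic in the plane punctured at $\{z_F\}$'': what you actually need is that the closed curve $P_i\cup P_i'$ encloses no $z_F$ (immediate from the definition of ``same pattern''), and then that this, together with the endpoints being fixed dual terminals on a common face of $\fset^*$, forces every other dual terminal to lie on the same side of $P_i'$ as of $P_i$. That second step is where alignment of $(G,T)$ and $(H,T)$ is doing real work, and it is what the cited paper spells out. One further small point: the transported paths $P_i'$ in $H^*$ need not be edge-disjoint, so $w(\hat E_H)\le \sum_i \mathrm{length}(P_i')$ is only an inequality; this is harmless for the direction you need, but should be said explicitly (the paper makes the analogous remark in the one-face argument).
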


\subsection{Constructing pattern emulators}

We will now prove the following lemma, which, combined with \Cref{thm: pattern for cut}, implies \Cref{lem: O(1) face}.

\begin{lemma}
\label{lem: O(1) face emulator}
There exists a universal constant $c>0$, such that for any $0<\eps<1$, every $f$-face instance $(G,T)$ admits a quality-$(1+\eps)$ pattern emulator $(H,T)$ with $|V(H)|\le |T|\cdot (cf\log |T|/\eps)^{cf^2}$.
\end{lemma}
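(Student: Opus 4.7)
The plan is to prove \Cref{lem: O(1) face emulator} by induction on the number of faces $f$. The base case $f=1$ reduces pattern emulation to ordinary distance emulation, since the unique pattern makes a $\pat$-shortest path between two terminals just a shortest path; invoking the near-linear planar one-face distance emulator of \cite{chang2022near} (as already used in \Cref{sec: 1-face}) yields a quality-$(1+\eps)$ emulator of size $|T|\cdot\poly(\log|T|/\eps)$, well within the stated bound for $f=1$.

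For the inductive step, I would reduce an $f$-face instance $(G,T)$ to a single $(f-1)$-face instance by ``merging'' two terminal-containing faces. Pick two faces $F_1,F_2\in\fset$ and compute an approximate shortest path $Q$ in $G$ between a vertex on $F_1$ and a vertex on $F_2$. Place a set $P$ of \emph{portals} along $Q$ at geometrically spaced intervals, using the standard planar portal technique, dense enough that any path crossing $Q$ can be rerouted through some portal with multiplicative length increase at most $1+\eps/(cf)$; a count of $|P|=O(f\log|T|/\eps)$ suffices. Cut the planar embedding of $G$ along $Q$ (duplicating each portal into two copies on either side of the cut), which glues $F_1$ and $F_2$ into one face and yields an $(f-1)$-face instance $(G',T\cup P)$. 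Invoke the inductive hypothesis on $(G',T\cup P)$ with error parameter $\eps/(cf)$ to obtain a pattern emulator $H'$, then reverse the cut by re-identifying the duplicated portal copies to get the final emulator $H$ for $(G,T)$. The quality degrades as $(1+\eps/(cf))^{f}\le 1+\eps$; the size blow-up per recursion is $O(f\log|T|/\eps_{\text{current}})$, but because $\eps_{\text{current}}$ decreases geometrically across the $f$ levels, the compounded size factor works out to $(cf\log|T|/\eps)^{O(f^2)}$, matching the stated bound.

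Correctness relies on a \emph{pattern correspondence}: to each pattern $\pat$ on the $f$ faces of $G$ one must associate a pattern $\pat'$ on the $f-1$ faces of $G'$ (the merged face inheriting its coordinate from the parity of crossings of $Q$), such that a $\pat$-shortest path in $G$ lifts to a path in $G'$ of pattern $\pat'$ with only small multiplicative detour, and conversely. \textbf{The main obstacle} is proving a \emph{pattern-preserving portal rerouting lemma} — the analogue of the standard portal-rerouting lemma for planar distances, but now carrying the topological data of patterns. Since patterns depend on parities of how a path winds around faces, introducing portals along a curve that separates two faces subtly interacts with these parities, and one must argue carefully that the rerouting never flips a pattern coordinate unintentionally and that the pattern $\pat'$ reconstructed after un-cutting $H'$ faithfully encodes $\pat$. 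Once this rerouting lemma is established and paired with the pattern correspondence, the inductive construction yields the claimed size and quality bounds.
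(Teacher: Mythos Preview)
Your overall plan matches the paper's: induct on $f$, cut along a shortest path connecting two terminal faces, add portals on that path as new terminals, recurse on the resulting $(f-1)$-face instance, then glue back. You also correctly identify the central difficulty, that rerouting through portals must respect patterns.

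The genuine gap is your portal count. You assert that $|P|=O(f\log|T|/\eps)$ portals at ``geometrically spaced intervals'' along $Q$ suffice to reroute any crossing path with multiplicative stretch $1+O(\eps/f)$. This fails: geometric spacing measured from an endpoint of $Q$ puts the nearest portal to a crossing point $x$ at distance $\Theta(\eps\cdot\dist_Q(\text{endpoint},x))$, a quantity that can be arbitrarily large relative to the length of the terminal path being rerouted. The correct device is a (pattern-respecting) $\eps$-cover \emph{per terminal and per pattern}: for each terminal $t$ and each pattern $\Phi$ one needs $O(1/\eps)$ portals on $Q$, and their positions depend on where the $\Phi$-shortest $t$-to-$Q$ path lands and on $\dist^\Phi(t,Q)$. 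The paper states this as \Cref{lem: pattern cover} and takes the union $Y=\bigcup_{t,\Phi}Y_{t,\Phi}$, yielding $|Y|=O(2^f|T|/\eps')$ portals --- linear in $|T|$, not polylogarithmic.

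This also changes the size recursion: the new terminal set satisfies $|T'|=\Theta(2^f|T|/\eps')$ rather than $|T|+O(f\log|T|/\eps)$, so each inductive level multiplies $|T|$ by roughly $2^f f^2/\eps$. The paper absorbs this multiplicative growth by letting the exponent jump from $c(f-1)^2$ to $cf^2$; your size accounting (``$O(f\log|T|/\eps)$ blow-up per level compounding to $(cf\log|T|/\eps)^{O(f^2)}$'') would have to be redone. A minor additional point: $Q$ should be an exact shortest path, not an approximate one, since the $\eps$-cover argument relies on distances along $Q$ being additive.
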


We prove \Cref{lem: O(1) face emulator} by induction on $f$. The base case where $f=1$ has been proved in \Cref{sec: 1-face}.
Assume that we are given a $f$-face instance $(G,T)$.
We will convert it into an $(f-1)$-face instance and apply the inductive hypothesis.
Throughout this subsection, we use the parameters $\eps'=\eps/2f^2$ and $\eps''=(1-1/f^2)\cdot \eps$.

We first pick a pair $t,t'$ of terminals that lie on distinct faces, denoted by $\alpha,\alpha'$, respectively, and compute the shortest path $\Pi_{t,t'}$ in $G$ connecting $t$ to $t'$, such that $\Pi_{t,t'}$ does not contain any terminal as inner vertices. Denote $\Pi=\Pi_{t,t'}$. Then we compute a set of vertices on $\Pi$ that we call \emph{portals}. 

\paragraph{Portals on $\Pi$.}
We use the notion of $\eps$-covers defined in \cite{klein1998fully,thorup2004compact}.
Let $R$ be a shortest path and let $v$ be a vertex that does not lie in $R$. An \emph{$\eps$-cover} of $v$ in $R$ is a set $C(v,R)$ of vertices, such that for any vertex $x\in R$, there exists some $y\in C(v,R)$, such that 
$\dist(v,x)\le \dist(v,y)+\dist(y,x)\le (1+\eps)\cdot\dist(v,x).$
It has been shown \cite{klein1998fully,thorup2004compact} that for every $\eps>0$, shortest path $R$ and vertex $v\notin R$, there exists an $\eps$-cover of $v$ in $R$ of size $O(1/\eps)$.
As we want to construct pattern emulators, we will need to modify the definition of $\eps$-covers into a ``pattern version'' accordingly, as follows.

Let $G$ be an $f$-face instance, $R$ be a shortest path in $G$ that does not intersect any face internally, and $v$ be a vertex that does not lie in $R$. 
Let $Q$ be a path connecting $v$ to some vertex in $R$, and let $\Phi$ be its pattern. A \emph{$\Phi$-respecting $\eps$-cover} of $v$ in $R$ is a set $C(v,R,\Phi)$ of vertices, such that for any vertex $x\in R$, there exists some $y\in C(v,R,\Phi)$, such that the shortest $v$-$y$ path with the same pattern as $\Phi$, concatenated with the $y$-$x$ subpath of $R$, is within factor $(1+\eps)$ in length with the shortest $v$-$x$ path with the same pattern as $\Phi$. In other words (slightly abusing the notations),
$$\dist^\Phi(v,x)\le \dist^\Phi(v,y)+\dist(y,x)\le (1+\eps)\cdot\dist^\Phi(v,x).$$
We use the following lemma, whose proof is similar to the previous result of $\eps$-cover and is deferred to \Cref{apd: Proof of lem: pattern cover}.

\begin{lemma}
\label{lem: pattern cover}
For any planar instance $G$, shortest path $R$ in $G$, vertex $v\notin R$, and pattern $\Phi$, there exists a $\Phi$-respecting $\eps$-cover of $v$ in $R$ of size $O(1/\eps)$.
\end{lemma}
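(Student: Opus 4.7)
The plan is to adapt the standard greedy $\eps$-cover construction for shortest paths in planar graphs to the pattern-distance setting. The only ingredient of the classical proof that requires new justification is a Lipschitz-type inequality along $R$ for the pattern distance $\dist^\Phi(v,\cdot)$; once this is in hand, the rest of the argument is a direct translation of the classical one.

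\textbf{Pattern invariance under extension along $R$.} First I will establish the following topological claim: if $Q,P$ are two $v$-to-$y$ paths of the same pattern w.r.t.\ $R$ (with $y\in R$), and $y'\in R$ is any other vertex, then the extended paths $Q' := Q\cdot R[y,y']$ and $P' := P\cdot R[y,y']$ also have the same pattern w.r.t.\ $R$. By hypothesis, $z_F$ lies outside the closed curve $Q\cup P$ for every face $F\in \fset^*$. After extending, the closed curve relevant to same-pattern-w.r.t.-$R$ for $Q',P'$ is $Q'\cup P'$ (their other-endpoints coincide at $y'$, so the $R$-subpath between them is trivial); viewed as a $1$-cycle, however, $Q'-P' = Q-P$, so the winding number around every $z_F$ is unchanged. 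Consequently, extending any representative of pattern $\Phi$ along $R$ yields another representative of $\Phi$, i.e., the extension operation preserves pattern. This gives immediately the Lipschitz inequality
\[\dist^\Phi(v,y') \;\le\; \dist^\Phi(v,y) + \dist(y,y') \qquad \forall\, y,y' \in R,\]
by taking the shortest $\Phi$-pattern $v$-to-$y$ path and extending along $R[y,y']$ (note $\dist(y,y')$ equals the length of $R[y,y']$ since $R$ is a shortest path). Swapping $y,y'$ gives the reverse bound, so $x\mapsto \dist^\Phi(v,x)$ is $1$-Lipschitz on $R$ with respect to the $R$-path metric.

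\textbf{Greedy construction.} With the Lipschitz property in hand, the classical greedy $O(1/\eps)$-portal construction transfers verbatim, with $\dist^\Phi(v,\cdot)$ replacing $\dist(v,\cdot)$ throughout. Let $y^* \in \arg\min_{y\in R}\dist^\Phi(v,y)$; split $R$ at $y^*$ into two arcs and process each arc separately. On each arc, inductively pick portals $y_0 = y^*, y_1, y_2,\ldots$ by letting $y_{i+1}$ be the farthest vertex of $R$ with $\dist(y_i, y_{i+1}) \le \eps\cdot\dist^\Phi(v, y_i)$. For any $x$ between $y_i$ and $y_{i+1}$, the Lipschitz bound gives $\dist^\Phi(v, y_i) \le \dist^\Phi(v,x)/(1-\eps)$, so
\[\dist^\Phi(v,y_i) + \dist(y_i, x) \;\le\; (1+\eps)\dist^\Phi(v,y_i) \;\le\; \tfrac{1+\eps}{1-\eps}\dist^\Phi(v,x) \;=\; (1+O(\eps))\cdot\dist^\Phi(v,x),\]
and rescaling $\eps$ by a constant recovers the required $(1+\eps)$ factor. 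The portal count on each arc is $O(1/\eps)$ by the same charging argument as in the classical case: the initial phase near $y^*$, where $\dist(y^*,y_i)\le 2\dist^\Phi(v,y^*)$, contributes $O(1/\eps)$ portals, and beyond that the Lipschitz inequality forces $\dist^\Phi(v,y_{i+1})\ge (1+\Omega(\eps))\dist^\Phi(v,y_i)$, so the remaining ``geometric'' phase contributes at most $O(1/\eps)$ additional portals.

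The main obstacle is the topological step: one must carefully translate the algebraic-topology identity $Q'-P'=Q-P$ into the enclosure-based definition of ``same pattern w.r.t.\ $R$'' used in the paper, verifying that winding numbers (and hence enclosures of the points $z_F$) are invariant under the extension operation. Once this is done, the Lipschitz inequality and the greedy counting are routine adaptations of the non-pattern proof.
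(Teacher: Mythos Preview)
Your approach is essentially the same as the paper's: both start at the $\Phi$-nearest point on $R$, run a greedy portal selection along each arc, and rely on the Lipschitz inequality $\dist^\Phi(v,y')\le \dist^\Phi(v,y)+\dist(y,y')$ for $y,y'\in R$. Your explicit topological justification of this inequality (pattern invariance under extension along $R$, via the $1$-cycle identity $Q'-P'=Q-P$) is more careful than the paper, which simply invokes ``triangle inequality'' without further comment; this is a genuine contribution of your write-up.

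Where your sketch diverges from the paper, and where it has a gap, is the counting argument. First, the greedy rule ``let $y_{i+1}$ be the farthest vertex with $\dist(y_i,y_{i+1})\le\eps\cdot\dist^\Phi(v,y_i)$'' can stall in weighted graphs when the very next edge on $R$ is heavier than $\eps\cdot\dist^\Phi(v,y_i)$. Second, and more substantively, the two-phase count you outline does not yield $O(1/\eps)$ as stated: geometric growth of $\dist^\Phi(v,y_i)$ alone does not bound the number of portals, since $R$ may be arbitrarily long and there is no a priori upper bound on $\dist^\Phi(v,y_i)$. The paper sidesteps both issues by using a different greedy (take $u_{i+1}$ to be the \emph{first} vertex violating $\dist^\Phi(v,u_i)+\dist(u_i,u_{i+1})\le(1+\eps)\dist^\Phi(v,u_{i+1})$) and a potential argument: the quantity $\dist^\Phi(v,u_i)-\dist(z,u_i)$ always lies in $[-\dist^\Phi(v,z),\dist^\Phi(v,z)]$ (by the two-sided Lipschitz bound you established) and drops by at least $\tfrac{\eps}{2}\dist^\Phi(v,z)$ at every step, so there are at most $4/\eps$ portals per arc regardless of the length of $R$. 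Swapping in this potential argument fixes your proof.
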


We now define a set $Y$ of vertices on $\Pi$ as follows. For each terminal $t$ and for each pattern $\Phi$, let $Y_{t,\Phi}$ be the pattern-respecting $\eps'$-cover of $v$ in $R$ of size $O(1/\eps')$ given by \Cref{lem: pattern cover}. We then let set $Y=\bigcup_{t,\Phi}Y_{t,\Phi}$.
From \Cref{obs: pattern} and \Cref{lem: pattern cover}, 
$|Y|\le O(2^f\cdot |T|/\eps')$.

We then slice the graph $G$ open along the path $\Pi$ connecting faces $\alpha,\alpha'$.
Specifically, we duplicate path $\Pi$ into two copies $\Pi_1,\Pi_2$ and place them closely at two sides (which we call $1$-side and $2$-side, respectively) of its original image, such that the thin strip between them connect faces $\alpha,\alpha'$ into a single face $\beta$.
All vertices $x$ on $P$ into their copies $x_1,x_2$ ($x_1$ lies on $\Pi_1$ and $x_2$ lies on $\Pi_2$), including the terminals $t,t'$ that now have copies $t_1,t_2$ and $t'_1,t'_2$. All weights on edges remain the same. 
It is easy to verify that the obtained graph, which we denote by $G'$, is a $(f-1)$-face instance.
Let $T'$ be the set that contains all copies of terminals and portals in $Y$.
We then construct a quality-$(1+\eps'')$ pattern emulator $H'$ of $G'$ with respect to $T'$, and then glue, for each portal $y\in Y$, its copies $y_1,y_2$ back to their original vertex $y$. The resulting graph is denoted by $H$ and is returned as the pattern emulator for $G$.
For a complete description of the cut and glue operations, please refer to Appendix B of \cite{chang2022near}.
See \Cref{fig: splitting h-hole after} for an illustration.

\begin{figure}[h!]
	\centering
	\subfigure[Graph $G$: faces $\alpha, \alpha'$ (shaded gray), path $\Pi$ (blue), and portals (purple). ]{\scalebox{0.5}{\includegraphics[scale=0.15]{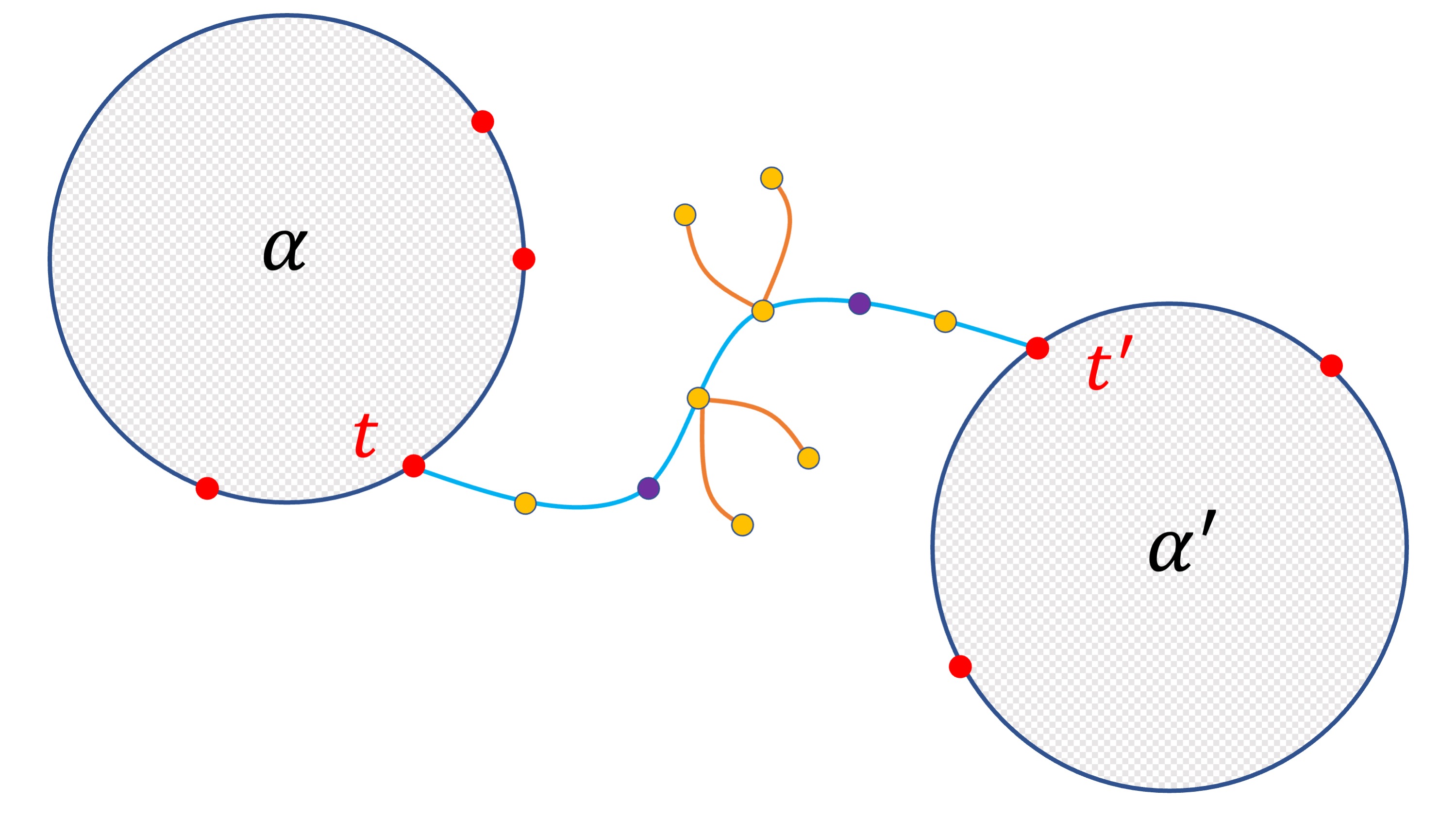}\label{fig: splitting h-hole before}}}
	\hspace{0.2cm}
	\subfigure[Graph $G'$: the new hole $\beta$ (shaded gray), and paths $\Pi_1,\Pi_2$ (blue).]{\scalebox{0.5}{\includegraphics[scale=0.15]{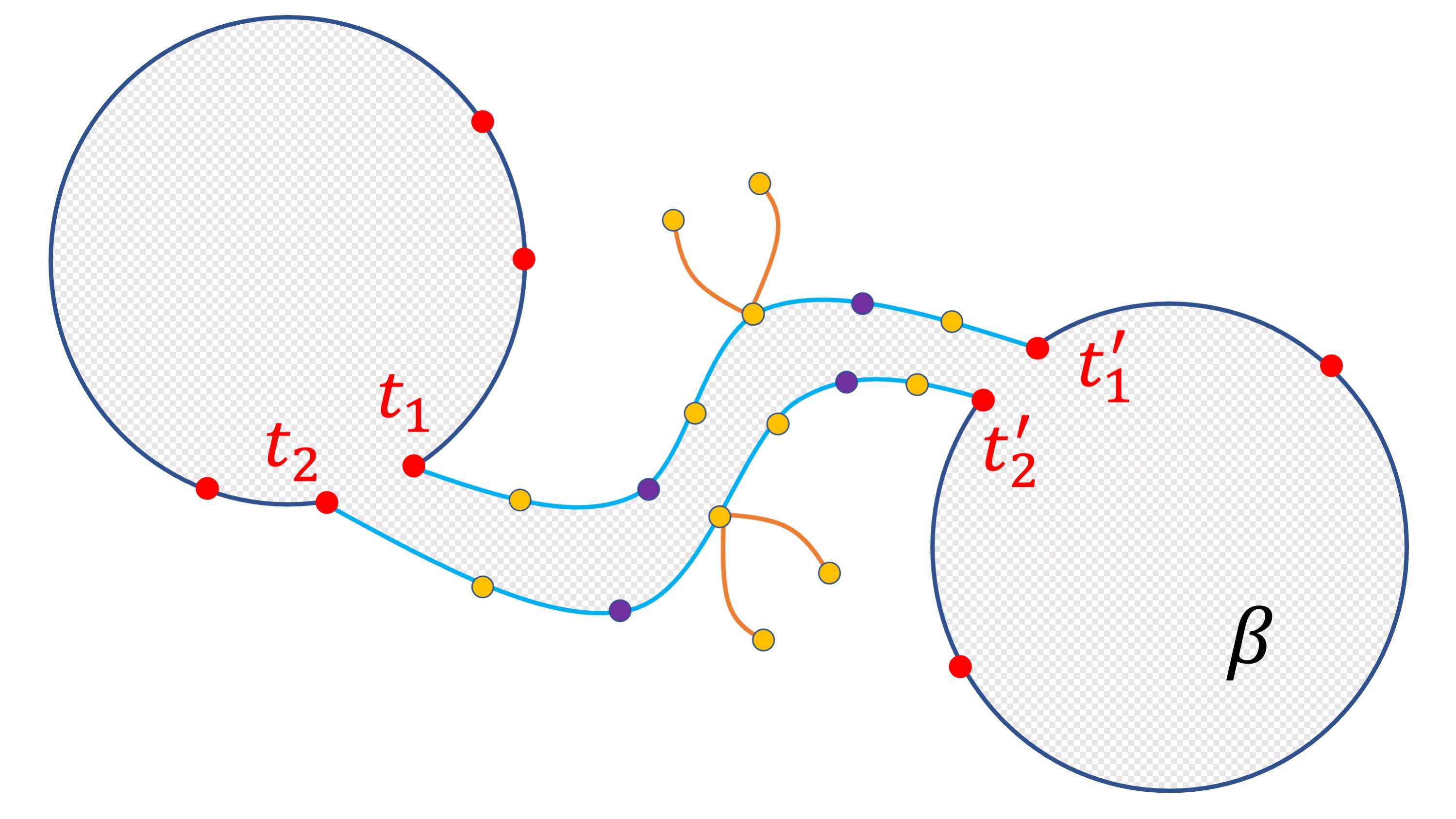}\label{fig: splitting h-hole after}}}
	\subfigure[Graph $H$: holes $\alpha,\alpha'$ and portals are restored.]{\includegraphics[scale=0.08]{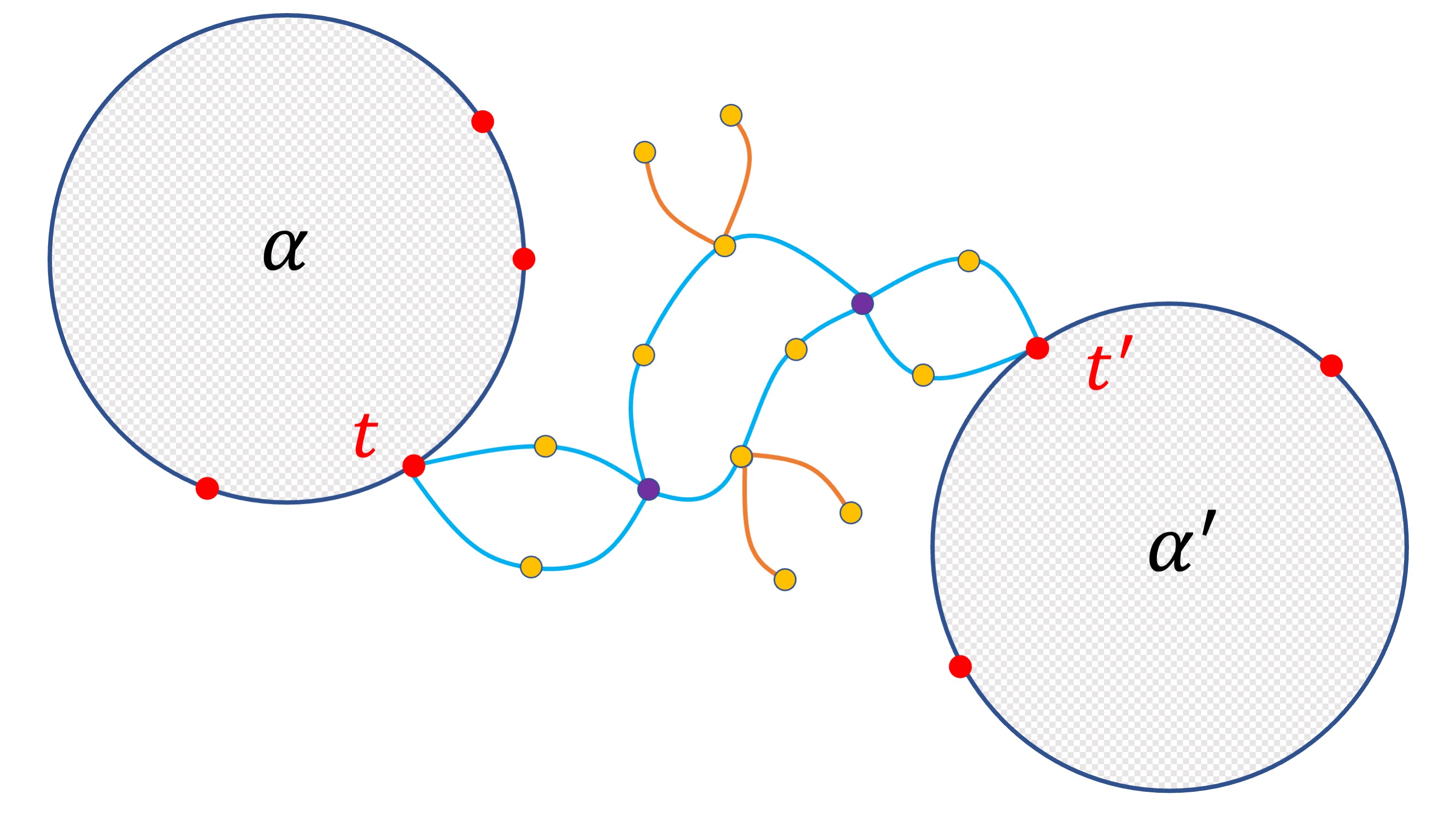}\label{fig: gluepath_2}}
	\caption{An illustration of cutting open and gluing an $f$-face instance along path $\Pi$.\label{fig: splitting h-hole}}
\end{figure}

It remains to complete the proof by induction. We first bound the size of graph $H$.
 
\paragraph{Size of $H$.} Since graph $H$ is obtained from graph $H'$ by gluing the portals on paths $\Pi_1,\Pi_2$ back to their original vertices, $|V(H)|\le |V(H')|$, so it suffices to bound the number of vertices in $H'$.
Recall that $T'$ is the set that contains all portals and terminals in $G'$, then
\[
\begin{split}
|V(H')| &
\le |T'|\cdot \bigg(\frac{cf\cdot \log |T'|}{\eps''}\bigg)^{c(f-1)^2}
\le \frac{c\cdot 2^f\cdot |T|}{\eps/2f^2}\cdot \bigg(\frac{cf\cdot \log (c\cdot 2^f\cdot |T|/\eps')}{\eps\cdot (1-1/f^2)}\bigg)^{c(f-1)^2}\\
&  \le |T|\cdot \bigg(\frac{cf}{\eps}\bigg)^{c(f-1)^2+f}\cdot \bigg(\frac{\log (cf^2\cdot 2^f\cdot |T|/\eps)}{(1-1/f^2)}\bigg)^{c(f-1)^2}\\
& \le |T|\cdot \bigg(\frac{cf}{\eps}\bigg)^{c(f-1)^2+f}\cdot \bigg(\log |T|+ \log (cf^2\cdot 2^f/\eps)\bigg)^{c(f-1)^2}\\
& \le |T|\cdot \bigg(\frac{cf^2 \log |T|}{\eps}\bigg)^{cf^2},
\end{split}
\]
where we have used the previously set parameters $\eps'=\eps/2f^2$ and $\eps''=(1-1/f^2)\cdot \eps$, and the fact that $(1-\frac{1}{f^2})^{-c(f-1)^2}\le e^{c}< c^{c}$, as $c$ is large enough.

\paragraph{Quality of $H$.}
Recall that we sliced $G$ open along $\Pi$ to obtain $G'$, computed a $(1+\eps'')$ pattern emulator $H'$ for $G'$, and then glue $H'$ back to obtain $H$.
For the purpose of analysis, image that we have also glued $G'$ back to obtain $G''$. The analysis is complete by the following claims.

\begin{claim}
\label{clm: cut glue emulator}
$(H,T)$ is a quality-$(1+\eps'')$ pattern emulator for $(G'',T)$.
\end{claim}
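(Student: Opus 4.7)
The plan is to exploit the parallel cut-and-glue structure: $G''$ is obtained from $G'$ by identifying the two copies $y_1,y_2$ of each portal $y\in Y$, and $H$ is obtained from $H'$ by performing the \emph{identical} identifications. Since non-portal vertices on $\Pi_1,\Pi_2$ are never identified, any path in $G''$ can cross from one side of the former $\Pi$ to the other only through the portal identifications. Consequently, every simple $t_1$-$t_2$ path $P$ in $G''$ decomposes uniquely into a sequence of maximal sub-paths $P_1,\dots,P_r$ that live inside $G'$; each $P_i$ has both endpoints in $T'$ (a terminal or a portal copy), and consecutive sub-paths meet at a paired portal copy. An identical decomposition is available for paths in $H$.

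To deduce the claim I would formalize a pattern-compatibility rule: a canonical (invertible) translation between the pattern $\Phi$ of $P$ in the $f$-face graph $G''$ and the tuple $(\Phi_1,\dots,\Phi_r)$ of sub-path patterns in the $(f-1)$-face graph $G'$, together with the portal-crossing sequence of $P$. Because the gluing is combinatorially identical on the $G'$- and $H'$-sides, the same rule applies on both sides. With this rule in hand, let $P$ be a $\Phi$-shortest $t_1$-$t_2$ path in $G''$ with decomposition $P_1,\dots,P_r$ and sub-path patterns $\Phi_1,\dots,\Phi_r$; invoking the inductive hypothesis that $(H',T')$ is a quality-$(1+\eps'')$ pattern emulator of $(G',T')$, replace each $P_i$ by a $\Phi_i$-path $P'_i$ in $H'$ with the same endpoints and length at most $(1+\eps'')|P_i|$. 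Concatenating $P'_1,\dots,P'_r$ in $H$ along the original portal-crossing sequence produces a $t_1$-$t_2$ walk in $H$ of length at most $(1+\eps'')\cdot\dist^{\Phi}_{G''}(t_1,t_2)$, and by the compatibility rule this walk realizes pattern $\Phi$ in $H$. The reverse inequality follows by swapping the roles of $(G',G'')$ and $(H',H)$.

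The main obstacle is the pattern-compatibility rule. In $G''$ the pattern records enclosure of the face-point $z_F$ for every one of the $f$ faces, in particular for $z_\alpha$ and $z_{\alpha'}$, whereas in $G'$ these two faces have merged into $\beta$ and carry only one point $z_\beta$. A careful topological accounting is required to show how the enclosures of $z_\alpha$ and $z_{\alpha'}$ (with respect to the closed curve formed by $P$ and the reference shortest $t_1$-$t_2$ path) are recovered from the enclosures of $z_\beta$ for each sub-path together with the ordered sequence and ``side'' of portal crossings, which together encode how the path winds past the two faces. Once this topological rule is stated and proved once (it then applies verbatim to $H/H'$ since the underlying cut-and-glue is the same), the rest of the argument is the clean inductive assembly outlined above.
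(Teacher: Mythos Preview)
Your approach is essentially the same as the paper's: decompose a $\Phi$-shortest path $P$ in $G''$ at the portals into sub-paths $P^0,\dots,P^r$ that lift to $T'$-$T'$ paths in $G'$, replace each by a same-pattern path $Q^i$ in $H'$ of length at most $(1+\eps'')\,w(P^i)$ via the emulator property of $(H',T')$, concatenate in $H$, and argue symmetry for the other direction. The paper also records the side information $a_i,b_i\in\{1,2\}$ at each portal crossing, exactly your ``portal-crossing sequence,'' and then handles your ``pattern-compatibility rule'' in one terse sentence: since each $Q^i$ has the same pattern as $P^i$ in the $(f-1)$-face graph, the region bounded by $Q^i$ and $P^i$ contains no face point, and hence the concatenated $Q$ has the same pattern as $P$ after gluing. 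So your flagged obstacle is precisely the step the paper treats most informally; your plan matches the paper, and your caution about the topological bookkeeping is well placed rather than a gap.
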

\begin{proof}
Consider any pair $t,t'$ of terminals. 
Let $P$ be some pattern shortest path from $t$ to $t'$ in $G''$.
If $P$ contains some portal in $Y$, then we let $y^1,\ldots,y^r$ be the portals appearing on $P$ in this order.
Decompose $P$ into $r+1$ subpaths: $P^0$ between $t$ and $y^1$; for each $1\le i\le r-1$, $P^i$ between $y^{i}$ and $y^{i+1}$, and $P_r$ between $y^{r}$ and $t'$.
Assume that for each $1\le i\le r-1$, the path $P^i$ starts on the $a_i$-side of the glued paths and ends on the $b_i$-side of the glued paths, where $a_i,b_i\in \set{1,2}$. Define $a(t), b(t')\in \set{1,2}$ similarly for terminals $t,t'$.
As portals are considered as terminals in instance $G'$ and $H'$ is a quality-$(1+\eps)$ pattern emulator for $G'$ with respect to them, there exist in $H'$ 
\begin{itemize}
\item a path $Q^0$ connecting the copy $t_{a(t)}$ of $t$ to the copy $y^1_{b_0}$ of $y^1$ with the same pattern as $P^0$ in $G'$, such that $w(Q^0)\le (1+\eps'')\cdot w(P^0)$;
\item for each $1\le i\le r-1$, a path $Q^i$ connecting the copy $y^{i-1}_{a_{i}}$ of $y^{i-1}$ to the copy $y^{i}_{b_i}$ of $y^i$ with the same pattern as $P^i$ in $G'$, such that $w(Q^i)\le (1+\eps'')\cdot w(P^i)$;
\item a path $Q^r$ connecting the copy $y^r_{a_r}$ of $y^r$ to the copy $t'_{b(t')}$ of $t'$ with the same pattern as $P^r$ in $G'$, such that $w(Q^r)\le (1+\eps'')\cdot w(P^r)$.
\end{itemize}

For each $1\le i\le r-1$, as the region surrounded by $P^i$ and $P$ contains no other face, and $Q^i$ is in the same pattern with $P^i$, the region surrounded by $Q^i$ and $P$ contains no other face. Therefore, if we concatenate all these paths $Q_0,\ldots,Q_r$ in $H$, we obtain a path $Q$ connecting $t$ to $t'$ in $H$ with the same pattern as $G''$. Moreover,
\[w(Q)=\sum_{0\le i\le r}w(Q^i)\le \sum_{0\le i\le r}(1+\eps'')\cdot w(P^i)=(1+\eps'')\cdot w(P).\]
The arguments for showing that for any pattern-shortest path $Q$ from $t$ to $t'$ in $H''$, there exists a path $P$ in $G''$ with $w(P)\le w(Q)$ that has the same pattern with $Q$ is symmetric. Altogether, the proof is complete.
\end{proof}

\begin{claim}
	\label{clm: cut glue}
	$(G'',T)$ is a quality-$(1+\eps')$ pattern emulator for $(G,T)$.
\end{claim}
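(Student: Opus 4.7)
The plan is to show that for every pair of terminals $t_1,t_2\in T$ and every pattern $\Phi$, the $\Phi$-distances in $G$ and $G''$ agree up to a multiplicative factor of $(1+\eps')$, treating the two directions separately.

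The direction $\dist^{\Phi}_G(t_1,t_2)\le \dist^{\Phi}_{G''}(t_1,t_2)$ is the easier one: any simple $t_1$-$t_2$ path $Q$ in $G''$ descends to a walk in $G$ by identifying each pair of copies $v_1,v_2$ of a non-portal vertex $v\in V(\Pi)$ back to $v$. This projection preserves the total length and the planar embedding modulo the identification, so the projected walk encircles the same set of faces as $Q$ and carries the same pattern $\Phi$; shortcutting this walk to a simple $t_1$-$t_2$ path of pattern $\Phi$ cannot increase the length.

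For the reverse direction, I would fix a $\Phi$-shortest $t_1$-$t_2$ path $P$ in $G$ and decompose it as $P=P_0\cdot S_1\cdot P_1\cdots S_r\cdot P_r$, where each $S_i$ is a maximal subpath that traverses $\Pi$-edges and each $P_i$ is an excursion internally disjoint from $\Pi$ and lying on a single side of $\Pi$. The lift of $P$ to $G''$ goes through unchanged whenever, at each internal junction, either the two adjacent excursions $P_{i-1},P_i$ lie on the same side of $\Pi$, or $S_i$ already contains a portal at which sides can be swapped. The remaining bad junctions, namely side-switches along $\Pi$-segments without portals, are exactly what the $\Phi$-respecting $\eps'$-covers from \Cref{lem: pattern cover} are designed to patch. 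The boundary junctions near $t_1$ and $t_2$ are immediate: letting $x$ be the first $\Pi$-vertex visited by $P$ and $\Phi_0$ be the pattern of $P_0$ viewed as a $t_1$-$x$ path, the minimality of $P$ gives $w(P_0)=\dist^{\Phi_0}_G(t_1,x)$, and the cover $Y_{t_1,\Phi_0}$ supplies a portal $y$ with $\dist^{\Phi_0}_G(t_1,y)+\dist_\Pi(y,x)\le(1+\eps')\,w(P_0)$, so we may replace $P_0$ by the $\Phi_0$-shortest $t_1$-$y$ path concatenated with the $\Pi$-segment from $y$ to $x$ (lifted on either side at the portal $y$), yielding a valid $G''$-prefix of length at most $(1+\eps')\,w(P_0)$ and the same pattern. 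A symmetric argument at $t_2$ handles the suffix.

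The main obstacle is controlling the intermediate bad junctions, since the cover lemma is phrased for terminals rather than for generic vertices on $\Pi$. My plan is to reduce each intermediate side-switch to a terminal-based rerouting: either by charging it to the appropriate subpath pattern through a telescoping application of the cover property, or by first perturbing $P$ so that its essential side-switches occur only near the endpoints, using that any excursion of $P$ that does not affect the pattern can be collapsed onto $\Pi$ itself at no extra length cost since $\Pi$ is a shortest path in $G$. Combined with a careful accounting of the multiplicative overhead across all rerouted pieces, the total blow-up should fit within $(1+\eps')$; the slack $\eps'=\eps/2f^2$ provides room to absorb the compounded errors from $O(f)$ rerouting steps. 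I expect this final bookkeeping, in particular ensuring that the product of the per-step rerouting factors does not exceed $(1+\eps')$, to be the most delicate part of the argument.
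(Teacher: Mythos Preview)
Your decomposition at every $\Pi$-intersection differs from the paper's argument, and the difference introduces a real accounting error.

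The paper reroutes \emph{once}. It fixes a single intersection vertex $x$ of $P$ with $\Pi$ (chosen so that the suffix $P[x,t']$ is $\Pi$-free and therefore lifts directly to one side of $G''$), applies the $\eps'$-cover of the terminal $t$ for the pattern $\Phi$ of the prefix $P[t,x]$ to obtain a portal $y$, and returns the $G''$-path obtained by concatenating (i) a $\Phi$-shortest $t$--$y$ path, (ii) the $\Pi$-segment from $y$ to $x$ on the correct side, and (iii) the lift of $P[x,t']$. A single cover application yields exactly one factor of $(1+\eps')$, which is the claimed quality.

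Your plan instead invokes the cover at \emph{each} intermediate side-switch and then asserts that ``the total blow-up should fit within $(1+\eps')$'', with the slack $\eps'=\eps/2f^2$ absorbing ``compounded errors from $O(f)$ rerouting steps''. This does not work: every cover application is an $\eps'$-cover and contributes a factor $(1+\eps')$, so $O(f)$ of them compound to $(1+\eps')^{O(f)}\approx 1+O(f)\eps'$, which strictly exceeds the $(1+\eps')$ that the claim asserts. The slack between $\eps'$ and $\eps$ is spent \emph{after} this claim, in the product $(1+\eps')(1+\eps'')\le 1+\eps$ that closes the induction on $f$; it is not available here. With your bookkeeping the induction would only yield quality $1+O(f)\eps'$ at this step, and the size bound in \Cref{lem: O(1) face emulator} would not follow.

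Your treatment of the two boundary junctions (near $t_1$ and $t_2$) is correct and is essentially what the paper does. What you should discard is the attempt to patch the intermediate side-switches one by one. Choose $x$ to be the last intersection of $P$ with $\Pi$, so that the suffix lifts for free, and absorb the entire prefix $P[t_1,x]$ into a \emph{single} cover call at the terminal $t_1$; any further interaction of the rerouted prefix with $\Pi$ is then handled by sliding along $\Pi$ (which is a shortest path) rather than by additional cover applications.
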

\begin{proof}
Consider any pair $t,t'$ of terminals. 
Let $P$ be some pattern shortest path from $t$ to $t'$ in $G$. If path $P$ is vertex-disjoint from $\Pi$, then the same path exists in $G''$, and it is easy to verify that it has the same pattern as $P$ in $G$. We assume from now on that $P$ intersects $\Pi$, and assume without loss of generality that $P$ intersects $\Pi$ at a vertex $x$ and leaves from the $1$-side.
Let $\Phi$ be the pattern of the subpath of $P$ between $t$ and $x$.
	
From the property of pattern-respecting $\eps'$-cover, there exists a vertex $y\in Y$, such that 
$$\dist^{\Phi}(t,x)\le \dist^{\Phi}(t,y)+\dist(y,x)\le (1+\eps')\cdot \dist^{\Phi}(t,x).$$ 
Consider the path in $G''$ formed by the concatenation of (i) the $\Phi$-pattern shortest path connecting $t$ to $y$; (ii) the subpath of $P_2$ connecting $y$ to $x_1$; and (iii) the copy of the subpath of $P$ between $x_1$ and $t'$.
From the inequality above, such a path has total length at most $(1+\eps')\cdot w(P)$ and has the same pattern with $P$.

On the other hand, it is easy to observe that between every pair of terminals, any pattern shortest path only has greater length in $G''$. Altogether, the proof is complete.	
\end{proof}

From \Cref{clm: cut glue} and \Cref{clm: cut glue emulator}, $(H,T)$ is a $(1+\eps'')\cdot (1+\eps')\le (1+\eps)$ pattern emulator for $(G,T)$. This completes the proof by induction of \Cref{lem: O(1) face emulator}.

\subsection{Completing the proof of \Cref{main: upper}}

In this section, we complete the proof of \Cref{main: upper} using the results in previous subsections.

\paragraph{Structured $r$-divisions.} Let $G$ be a planar graph on $n$ vertices and let $T$ be its terminals. For any integer $r>1$, a \emph{structured $r$-division} of $G$ is a collection $\gset$ of subgraphs of $G$, such that
\begin{itemize}
\item $|\gset|=O(n/r)$;
\item the edge sets $\set{E(G')\mid G'\in \gset}$ partitions $E(G)$;
\item for each $G'\in \gset$, $|V(G')|\le r$, $|T\cap V(G')|=O(1+|T|\cdot r/n)$, and if we call vertices in $G'$ that appear in some other graph in $\gset$ its \emph{boundary vertices}, then
\begin{itemize}
\item the number of boundary vertices is $O(\sqrt{r})$;
\item in the planar drawing of $G'$ induced by the planar drawing of $G$, all boundary vertices lie on $O(1)$ faces.
\end{itemize}
\end{itemize}

We use the following lemma in \cite{chang2022near} for computing structured $r$-divisions.

\begin{lemma}[Lemma 5.5 in \cite{chang2022near}]
\label{lem: r-division}
Given any planar instance $(G,T)$ and any integer $r\ge 0$, we can efficiently compute a structured $r$-division.
\end{lemma}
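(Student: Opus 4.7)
The plan is to build the division by recursively applying a balanced planar cycle separator. At each recursive step I would invoke Miller's simple-cycle separator theorem on the current piece with a weight function that simultaneously tracks vertex count and terminal count, using the standard reduction that a single cycle separator of length $O(\sqrt{n'})$ can balance a constant-dimensional family of weights up to constants. This yields a simple closed cycle cutting the piece into two regions, each carrying at most a constant fraction of both the vertices and the terminals. I recurse on each side and stop whenever a piece has at most $r$ vertices. A geometric-sum argument bounds the number of leaves by $O(n/r)$ and the total separator length by $O(n/\sqrt{r})$, and the terminal-balance clause $|T\cap V(G')|=O(1+|T|\cdot r/n)$ follows because the terminal mass shrinks by a constant factor at each level, with the additive $1$ absorbing the residue and any boundary terminals contributed by the last separator.

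The more delicate clause is that the boundary vertices of each piece $G'$ lie on $O(1)$ faces in the drawing of $G'$ induced from $G$. Cutting along a simple closed cycle in a plane embedding turns that cycle into the boundary of a single new face of each of the two resulting pieces, so the boundary vertices of a leaf are exactly the intersection of $G'$ with those ancestor separator cycles that actually bound $G'$ on some side. A naive bound gives $O(\log(n/r))$ such ancestor cycles, which is too coarse. To reduce this to $O(1)$ I would use a hierarchical cycle-separator decomposition of the type employed by Reif, Klein, and Borradaile--Klein in work on planar shortest paths, in which each region is simultaneously divided by $O(1)$ nested cycles at a single level, so that every descendant piece inherits boundary on only $O(1)$ faces throughout the recursion. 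An alternative is to build a standard Frederickson $r$-division first and then post-process by gluing sibling leaves that share a common ancestor separator cycle, until the boundary of each final piece touches only a constant number of faces, verifying that the merge preserves the $O(n/r)$ piece count and the $O(\sqrt{r})$ per-piece boundary bound up to constant factors.

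The main obstacle is precisely this $O(1)$-face control: the other three clauses (piece count, vertex bound, $O(\sqrt{r})$ boundary, terminal balance) are routine once a balanced cycle separator is in place, whereas keeping the hole count of every leaf constant requires either the hierarchical-separator construction described above or a careful merging/re-splitting post-processing step. I expect the technical weight of the proof to lie in setting up that structural modification and then re-verifying, after the modification, that all four clauses of the structured $r$-division definition continue to hold.
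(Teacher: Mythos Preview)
The paper does not prove this lemma at all: it is quoted verbatim as Lemma~5.5 of \cite{chang2022near} and used as a black box. There is therefore no ``paper's own proof'' to compare your proposal against; the authors simply import the result.

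That said, your sketch is a reasonable outline of how such structured $r$-divisions are obtained in the literature (e.g., the Klein--Mozes--Sommer line of work). You correctly identify that the piece count, per-piece vertex bound, $O(\sqrt{r})$ boundary size, and terminal balance all follow from a standard recursive application of a balanced cycle separator, and that the genuinely nontrivial clause is keeping the number of holes per piece at $O(1)$. Your two suggested remedies---either a hierarchical separator scheme or a post-processing merge/re-split pass---are both in the spirit of known constructions, but as stated they are quite vague. In particular, the merging idea needs care: gluing sibling pieces to reduce hole count can blow up the per-piece vertex bound past $r$, and re-splitting can reintroduce holes, so one has to argue that this process terminates with all invariants intact. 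The standard construction (roughly) first computes an $r$-division ignoring holes, then recursively re-divides any piece with too many holes using a cycle separator that balances hole count rather than vertex count; showing that this second recursion does not create too many pieces or too much boundary is the crux. Your proposal gestures at this but does not supply the argument. If you want a self-contained proof, that is where the real work lies; otherwise, citing \cite{chang2022near} (or the original $r$-division papers) is what the present paper does.
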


We now prove \Cref{main: upper}. The algorithm is simply: compute a structured $r$-division $\gset$ of the input graph $G$, construct cut sparsifiers for graphs $G'\in \gset$ using \Cref{lem: O(1) face}, and then glue them together to obtain a cut sparsifier for $G$.
It turns out that we need to apply the algorithm several times in order to obtain a near-linear size cut sparsifier, each time with different parameters $r,\eps$.

\begin{lemma}
\label{lem: recursive}
Given any planar instance $(H,U)$ and any $0<\eps<1$, we can efficiently compute a quality-$(1+\eps)$ planar cut sparsifier $(\hat H, U)$ with size $|V(\hat H)|\le O\big(\sqrt{|V(H)| |U|}\cdot (\log |V(H)|/\eps)^{c'}\big)$, for some universal constant $c'>0$.
\end{lemma}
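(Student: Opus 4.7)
The plan is to implement the divide-and-conquer sketch preceding the lemma with a single-level structured $r$-division, using Lemma~\ref{lem: O(1) face} as a black box on each piece and gluing via Lemma~\ref{lem: divide}. Write $n := |V(H)|$ and $k := |U|$. If $n \le C k$ for a suitable absolute constant $C$, I simply return $\hat H = H$, which is a trivial quality-$1$ cut sparsifier of size $n = O(\sqrt{nk})$. So assume $n \gg k$ from now on.

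The main step is to invoke Lemma~\ref{lem: r-division} with parameter $r := n/k$ to obtain a structured $r$-division $\gset$ of $H$. By the definition of a structured $r$-division, $|\gset| = O(n/r) = O(k)$, each piece $G' \in \gset$ satisfies $|V(G')| \le r$, has $O(\sqrt{r})$ boundary vertices lying on $O(1)$ faces of the planar drawing of $G'$ inherited from $H$, and contains $O(1 + k r/n) = O(1)$ original terminals from $U$. For each $G'$, let $T_{G'}$ be the union of its original terminals and its boundary vertices, as in Lemma~\ref{lem: divide}. Then $|T_{G'}| = O(\sqrt{r}) = O(\sqrt{n/k})$, and these terminals lie on $O(1)$ faces of $G'$ (the boundary vertices on $O(1)$ faces by the $r$-division guarantee, plus one further face chosen for each of the $O(1)$ original terminals).

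Now I apply Lemma~\ref{lem: O(1) face} with accuracy $\eps$ to each pair $(G', T_{G'})$, which is an $O(1)$-face instance, to obtain an aligned planar quality-$(1+\eps)$ cut sparsifier $(H_{G'}, T_{G'})$ with
\[
|V(H_{G'})| \le |T_{G'}| \cdot \poly(\log|T_{G'}|/\eps) = O(\sqrt{n/k}) \cdot \poly(\log n/\eps).
\]
Finally I feed the collection $\{(H_{G'},T_{G'})\}_{G' \in \gset}$ into Lemma~\ref{lem: divide}, which returns an aligned planar quality-$(1+\eps)$ cut sparsifier $(\hat H, U)$ of $(H,U)$ whose size is bounded by the sum of the piece sparsifier sizes:
\[
|V(\hat H)| \le \sum_{G' \in \gset} |V(H_{G'})| \le O(k) \cdot O(\sqrt{n/k}) \cdot \poly(\log n/\eps) = O\bigl(\sqrt{n k}\bigr) \cdot (\log n/\eps)^{c'}
\]
for some universal constant $c'>0$, matching the claimed bound.

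The only nontrivial point to argue carefully is that each piece $(G', T_{G'})$ truly is an $O(1)$-face instance in the sense required by Lemma~\ref{lem: O(1) face}: the boundary vertices contribute $O(1)$ faces by the structured $r$-division property, and the particular choice $r = n/k$ forces only $O(1)$ original terminals per piece, each of which can be covered by picking one face. Beyond this the argument is bookkeeping, since Lemma~\ref{lem: O(1) face} is applied as a black box, Lemma~\ref{lem: divide} preserves both quality and planarity while merely summing vertex counts, and the $(1+\eps)$ qualities from the pieces combine without blowup.
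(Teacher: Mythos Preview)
Your proof is correct and follows essentially the same approach as the paper: compute a structured $r$-division with $r=n/k$, apply Lemma~\ref{lem: O(1) face} to each piece (which is an $O(1)$-face instance because boundary vertices lie on $O(1)$ faces and there are only $O(1)$ original terminals per piece), and glue via Lemma~\ref{lem: divide}. Your handling of the base case $n\le Ck$ and your remark that the $O(1)$ original terminals may require $O(1)$ extra faces are details the paper leaves implicit, but the argument is the same.
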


\begin{proof}
Denote $n=|V(H)|$ and $k=|U|$. Let $c'$ be a constant greater than the square of the product of the constant $c$ in \Cref{lem: O(1) face} and all hidden constants in the definition of structured $r$-divisions.

We first compute a structured $r$-division for $H$ with $r=n/k$ using the algorithm from \Cref{lem: r-division}, and obtain a collection $\hset$ of subgraphs of $H$, each being a $O(1)$-face instance. We then apply the algorithm from \Cref{lem: O(1) face} to each $H'\in \hset$ and obtain a $(1+\eps)$-cut sparsifier $\hat H'$, and finally glue all of them together to obtain $\hat H$.
From \Cref{lem: O(1) face} (here note that $f=O(1)$ and $f^2c\le c'$),
\[
|V(\hat H)|\le \sum_{H'\in \hset}|V(\hat H')|\le O\bigg(k\cdot \sqrt\frac{n}{k}\bigg)\cdot \bigg(\frac{cf \log n}{\eps}\bigg)^{cf^2}\le O\bigg(\sqrt{nk}\cdot \bigg(\frac{\log n}{\eps}\bigg)^{c'}\bigg).
\]
From \Cref{lem: divide}, the graph $\hat H$ is a quality-$(1+\eps)$ planar cut sparsifier. 
 This completes the proof.
\end{proof}

We now complete the proof of \Cref{main: upper} using \Cref{lem: recursive}. We start by applying the algorithm in \cite{krauthgamer2017refined} to obtain an exact cut sparsifier $(G_0,T)$ of $(G,T)$ with $|V(G)|=2^{2k}\cdot \poly(k)\le 2^{3k}$.
Then the algorithm consists of two phases.

In the first phase, we set parameters $L=\log k$ and $\eps'=\eps/6L$, and sequentially for each $i=1,\ldots,L$, we apply algorithm \Cref{lem: recursive} to graph $G_{i-1}$ to obtain graph $G_i$.
We show that $|V(G_L)|=(k/\eps)^{O(1)}$ (applying the next lemma with $i=L=\log k$).

\begin{claim}
For each $1\le i\le L$, $|V(G_i)|\le 2^{3k/2^i}\cdot (18k\log k/\eps)^{4c'}$.
\end{claim}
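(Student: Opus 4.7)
The natural approach is induction on $i$, with the base case $i=1$ handled by plugging the trivial bound $|V(G_0)|\le 2^{3k}$ into \Cref{lem: recursive}. The invariant is designed so that the exponential term $2^{3k/2^i}$ halves in the exponent with each application (coming from the square root in \Cref{lem: recursive}), while the polynomial ``slack'' factor $(18k\log k/\eps)^{4c'}$ stays put. The substitution $\eps'=\eps/6L=\eps/(6\log k)$ is precisely calibrated so that the factor $1/\eps'$ introduced by \Cref{lem: recursive} equals $6\log k/\eps$, which combined with the logarithm of the vertex count gives exactly the base $18k\log k/\eps$ appearing in the claim.

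For the \textbf{base case} $i=1$, \Cref{lem: recursive} applied to $(G_0,T)$ with accuracy $\eps'$ gives
\[
|V(G_1)|\le O\Big(\sqrt{|V(G_0)|\cdot k}\cdot (\log |V(G_0)|/\eps')^{c'}\Big)\le O\Big(2^{3k/2}\sqrt k\cdot (3k/\eps')^{c'}\Big),
\]
and since $3k/\eps'=18k\log k/\eps$, the $\sqrt k$ factor and the $O(\cdot)$ constant are absorbed into the exponent $4c'$ (using $c'$ large enough), yielding the claimed bound.

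For the \textbf{inductive step}, assume $|V(G_{i-1})|\le 2^{3k/2^{i-1}}\cdot (18k\log k/\eps)^{4c'}$. Applying \Cref{lem: recursive} to $G_{i-1}$, the square-root factor becomes
\[
\sqrt{|V(G_{i-1})|\cdot k}\le 2^{3k/2^{i}}\cdot \sqrt k\cdot (18k\log k/\eps)^{2c'}.
\]
For the polylog factor, note $\log |V(G_{i-1})|\le 3k/2^{i-1}+4c'\log(18k\log k/\eps)=O(k)$, so
\[
(\log |V(G_{i-1})|/\eps')^{c'}\le (O(k\log k)/\eps)^{c'}\le (18k\log k/\eps)^{2c'},
\]
again absorbing constants into the exponent. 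Multiplying the two bounds together gives $|V(G_i)|\le 2^{3k/2^i}\cdot (18k\log k/\eps)^{4c'}$, closing the induction.

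The only mildly subtle step will be bookkeeping: making sure $c'$ (chosen in \Cref{lem: recursive}) is large enough that the hidden constants from the $O(\cdot)$ in \Cref{lem: recursive}, the extra $\sqrt k$ factor, and the conversion from $(O(k\log k)/\eps)^{c'}$ to $(18k\log k/\eps)^{2c'}$ all fit into doubling the exponent from $2c'$ to $4c'$. This is a routine check, not a genuine obstacle; the real content is entirely in the geometric halving of the exponent driven by the square-root factor in \Cref{lem: recursive}, which is what makes $\log k$ iterations suffice to bring the size down from $2^{O(k)}$ to $\poly(k/\eps)$.
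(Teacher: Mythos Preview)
Your proposal is correct and follows essentially the same approach as the paper: induction on $i$, applying \Cref{lem: recursive} at each step so that the square root halves the exponent $3k/2^{i-1}\mapsto 3k/2^i$, while the $(\log|V(G_{i-1})|/\eps')^{c'}$ factor is bounded by $(O(k)\cdot 6\log k/\eps)^{c'}$ and absorbed into the slack $(18k\log k/\eps)^{4c'}$. The only cosmetic difference is that the paper takes $i=0$ as its (trivial) base case rather than $i=1$.
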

\begin{proof}
We prove by induction on $i$. The claim is true for $i=0$.
From \Cref{lem: recursive},
\[
\begin{split}
|V(G_i)| & \le \sqrt{|V(G_{i-1})|\cdot k}\cdot \bigg(\frac{\log (2^{3k})}{\eps'}\bigg)^{c'}
\le 
\sqrt{|V(G_{i-1})|}\cdot k\cdot\bigg(\frac{3k}{\eps/6\log k}\bigg)^{c'}\\
& \le \sqrt{2^{3k/2^i}\cdot (18\log k/\eps)^{4c'}}\cdot\bigg(\frac{3k}{\eps/6\log k}\bigg)^{2c'}\\
& \le 2^{3k/2^{i+1}}\cdot (18\log k/\eps)^{2c'}\cdot \bigg(\frac{3k}{\eps/6\log k}\bigg)^{2c'}
\le 2^{3k/2^i}\cdot (18k\log k/\eps)^{4c'}.
\end{split}
\]
\end{proof}

In the second phase, we set parameters $\bar L=2\log k\log k$ and $\bar\eps'=\eps/6L$, and sequentially for each $i=1,\ldots,\bar L$, we apply algorithm \Cref{lem: recursive} to graph $G_{L+i-1}$ to obtain graph $G_{L+i}$.

\begin{claim}
	For each $1\le i\le \bar L$, $|V(G_{L+i})|\le k^{1+5c'/2^i}\cdot (5c'\log k/\bar\eps')^{2c'}$.
\end{claim}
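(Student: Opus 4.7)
My plan is to prove this claim by induction on $i$, in direct parallel with the Phase 1 computation above. At the inductive step I will apply \Cref{lem: recursive} to $(G_{L+i-1}, T)$ with quality parameter $\bar\eps'$, which gives
\[
|V(G_{L+i})| \le O\!\left(\sqrt{|V(G_{L+i-1})|\cdot k}\cdot \left(\log|V(G_{L+i-1})|/\bar\eps'\right)^{c'}\right).
\]
The whole argument is set up to exploit an algebraic feature of the target bound $k^{1+5c'/2^i}(5c'\log k/\bar\eps')^{2c'}$: substituting the inductive hypothesis into $\sqrt{|V(G_{L+i-1})|\cdot k}$ halves the excess $k$-exponent from $5c'/2^{i-1}$ to $5c'/2^i$ and simultaneously halves the polylogarithmic exponent from $2c'$ to $c'$, yielding exactly $k^{1+5c'/2^i}(5c'\log k/\bar\eps')^{c'}$. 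Thus the inductive step reduces to absorbing the extra factor $(\log|V(G_{L+i-1})|/\bar\eps')^{c'}$ coming from \Cref{lem: recursive} into the remaining $(5c'\log k/\bar\eps')^{c'}$ slack of the target.

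For the base case $i=1$, I plan to feed the Phase 1 bound $|V(G_L)| \le 8\cdot(18k\log k/\eps)^{4c'}$ together with the relation $\bar\eps' = \eps/(6\log k)$ into \Cref{lem: recursive}. The $k$-exponent I obtain from $\sqrt{|V(G_L)|\cdot k}$ is $1/2 + 2c'$, which sits comfortably below $1 + 5c'/2$, and the resulting polylogarithmic prefactor can be absorbed into $(5c'\log k/\bar\eps')^{2c'}$ by a direct computation that uses the spare $k$-power. The inductive step then unrolls in exactly the same way.

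The main technical obstacle will be to uniformly bound $\log|V(G_{L+i-1})| \le 5c'\log k$ in $i$, so that the final absorbing step goes through cleanly. Using the inductive hypothesis,
\[
\log|V(G_{L+i-1})| \le (1+5c'/2^{i-1})\log k + 2c'\log(5c'\log k/\bar\eps'),
\]
and under the standard assumption $\eps \ge 1/\poly(k)$ (otherwise one could simply return $G$) the second term is $O(c'\log\log k)$, so the total is $\le 5c'\log k$ once $c'$ is a sufficiently large absolute constant. This closes the induction. Finally, plugging in $i = \bar L = 2(\log k)^2$ makes $5c'/2^{\bar L}$ subconstant, so
\[
|V(G_{L+\bar L})| \le k\cdot (5c'\log k/\bar\eps')^{2c'} = O\big(k\cdot \poly(\log k/\eps)\big),
\]
which is exactly the size bound for the quality-$(1+\eps)$ planar cut sparsifier promised by \Cref{main: upper}.
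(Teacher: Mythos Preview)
Your proposal is correct and follows essentially the same approach as the paper: induction on $i$, applying \Cref{lem: recursive} at each step, and using that the square root halves the excess $k$-exponent while the polylogarithmic factor $(\log |V(G_{L+i-1})|/\bar\eps')^{c'}$ refills the other half of $(5c'\log k/\bar\eps')^{2c'}$. The paper anchors the induction at $i=0$ (checking the Phase~1 output already satisfies the bound with exponent $1+5c'$) rather than at $i=1$ as you do, and it simply writes $\log(k^{5c'})$ in the first line without further comment, whereas you make explicit the justification $\log|V(G_{L+i-1})|\le 5c'\log k$ and the implicit assumption $\eps\ge 1/\poly(k)$ needed for it; otherwise the arguments coincide.
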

\begin{proof}
	We prove by induction on $i$. The claim is true for $i=0$.
	From \Cref{lem: recursive},
	\[
	\begin{split}
	|V(G_{L+i})| & \le \sqrt{|V(G_{L+i-1})|\cdot k}\cdot \bigg(\frac{\log (k^{5c'})}{\bar\eps'}\bigg)^{c'}
	\le 
	\sqrt{|V(G_{L+i-1})|}\cdot \sqrt{k}\cdot\bigg(\frac{5c'\log k}{\bar\eps'}\bigg)^{c'}\\
	& \le  k^{\frac{1}{2}\cdot (1+1+5c'/2^{i-1})}\cdot \sqrt{(5c'\log k/\bar\eps')^{2c'}}\cdot\bigg(\frac{5c'\log k}{\bar\eps'}\bigg)^{c'}\\
	& \le k^{1+5c'/2^i}\cdot (5c'\log k/\bar\eps')^{2c'}.
	\end{split}
	\]
\end{proof}
Applying $i=\bar L$, we get that $|V(G_{L+\bar L})|\le O(k\cdot \poly(\log k/\eps))$ (as $c'$ is a universal constant).

On the other hand, from \Cref{obs: chain}, we get that instance $(G_{L+\bar L},T)$ is a cut sparsifier of  instance $(G,T)$ with quality
$(1+\eps/6L)^L\cdot (1+\eps/6\bar L)^{\bar L}\le (1+\eps)$. The proof is complete.

\section{Proof of \Cref{quasi_1}}
\label{sec: quasi_exact}

In this section, we provide the proof of \Cref{quasi_1}. 

We first prove the upper bound that every quasi-bipartite graph with $k$ terminals admits an exact contraction-based cut sparsifier on $k^{O(k^2)}$ vertices. 
Let $G$ be a quasi-bipartite graph with a set $T$ of $k$ terminals.
Throughout, we will assume that for every $\emptyset \subsetneq S\subsetneq T$, there is a unique min-cut in $G$ separating $S$ from $T\setminus S$. This can be achieved by slightly perturbing the edge weights $\set{w(e)}_{e\in E(G)}$.

We use the definition of \emph{profiles} studied in previous works \cite{hagerup1998characterizing,khan2014mimicking}.
For a vertex $v\in V(G)$, its profile $\pi^v$ is defined to be a $(2^{|T|}-2)$-dimensional vector whose coordinates are indexed by proper subsets $S$ of $T$. Specifically, for every $\emptyset \subsetneq S\subsetneq T$, $\pi^v_S=1$ iff $v$ lies on the side of $S$ in the $(S,T \setminus S)$ min-cut in $G$, otherwise $\pi^v_S=0$. 
Let $\Pi(G)$ be the collection of all distinct profiles of the vertices in $G$.
The following result was proved in \cite{hagerup1998characterizing}.

\begin{lemma}
Every graph $G$ admits a quality-$1$ contraction-based cut sparsifier of size $|\Pi(G)|$.
\end{lemma}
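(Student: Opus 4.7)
The plan is to show that contracting vertices according to their profile yields the required sparsifier. Let $\lset$ be the partition of $V(G)$ into equivalence classes under $v\sim v'$ iff $\pi^v=\pi^{v'}$. First I would check that any two distinct terminals $t,t'\in T$ lie in different classes: evaluating the profile at the coordinate $S=\{t\}$ gives $\pi^t_{\{t\}}=1$ while $\pi^{t'}_{\{t\}}=0$ (since the unique min-cut separating $\{t\}$ from $T\setminus\{t\}$ places $t$ on the $\{t\}$-side and $t'$ on the opposite side). Hence contracting each class gives a well-defined contraction-based sparsifier $H$ with exactly $|\Pi(G)|$ supernodes.

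It remains to verify that $H$ has quality $1$, i.e., for every partition $(T_1,T_2)$ of $T$, we have $\mc_H(T_1,T_2)=\mc_G(T_1,T_2)$. The inequality $\mc_H(T_1,T_2)\ge \mc_G(T_1,T_2)$ is immediate: any cut in $H$ separating $T_1$ from $T_2$ pulls back, via the contraction map, to an edge set in $G$ of the same total weight that also separates $T_1$ from $T_2$.

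For the reverse direction, let $\hat E$ be the unique min-cut in $G$ separating $T_1$ from $T_2$, and let $(A,B)$ be the induced bipartition of $V(G)$ with $T_1\subseteq A$ and $T_2\subseteq B$. By definition of the profile, for every vertex $v\in V(G)$, the coordinate $\pi^v_{T_1}$ equals $1$ iff $v\in A$ and $0$ iff $v\in B$. Consequently, any two vertices with the same profile must lie on the same side of $(A,B)$, so every class $L\in\lset$ is entirely contained in $A$ or entirely in $B$. This means that after contracting each $L\in\lset$ into a supernode, the cut $\hat E$ descends to a valid cut in $H$ separating $T_1$ from $T_2$, and its total weight is preserved (none of the edges of $\hat E$ become self-loops, since their endpoints lie in different classes). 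Thus $\mc_H(T_1,T_2)\le w(\hat E)=\mc_G(T_1,T_2)$.

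The argument has no real obstacle; the only subtlety worth being explicit about is why contracting same-profile vertices cannot turn a cut edge into a self-loop, which follows from the observation above that the two sides of any terminal min-cut are unions of profile classes. The uniqueness assumption on min-cuts (secured by perturbing weights) is used only to make the profile coordinates well-defined; all the equalities above are for the min-cut value itself and are therefore preserved when the perturbation is removed by a standard limit argument.
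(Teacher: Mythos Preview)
Your proof is correct and is precisely the standard argument (from \cite{hagerup1998characterizing,khan2014mimicking}) that the paper is invoking: contract vertices with identical profiles, observe that terminals have distinct profiles, and note that for every terminal bipartition $(T_1,T_2)$ the coordinate $\pi^{\cdot}_{T_1}$ is constant on each profile class, so the min-cut in $G$ survives the contraction intact while contraction can only increase cut values. The paper does not give its own proof of this lemma but simply cites it, so there is nothing further to compare.
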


Therefore, to prove \Cref{quasi_1}, it is sufficient to prove the following lemma.

\begin{lemma} \label{quasi_profile}
For any quasi-bipartite graph $G$ with $k$ terminals, $|\Pi(G)|=k^{O(k^2)}$.
\end{lemma}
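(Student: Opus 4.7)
The plan is to exploit the special structure of quasi-bipartite graphs: every non-terminal $v$ satisfies $N_G(v)\subseteq T$, and non-terminals are pairwise non-adjacent. This will let me read off the profile of a non-terminal from a single linear inequality in its own edge weights, after which counting distinct profiles becomes a standard hyperplane-arrangement problem.

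\textbf{Step 1 (local characterization of profiles).} Terminals contribute at most $k$ profiles. For a non-terminal $v$, I define a weight vector $a^v\in\mathbb{R}_{\ge 0}^{T}$ by $a^v_t=w(v,t)$ if $(v,t)\in E(G)$ and $a^v_t=0$ otherwise. I claim that for every $\emptyset\subsetneq S\subsetneq T$,
\[
\pi^v_S=1\ \Longleftrightarrow\ \sum_{t\in S}a^v_t>\sum_{t\in T\setminus S}a^v_t.
\]
The reason is that, in any candidate cut separating $S$ from $T\setminus S$, the contribution of $v$ to the cut value equals $\sum_{t\in N(v)\cap(T\setminus S)}w(v,t)$ if $v$ is placed on the $S$ side and $\sum_{t\in N(v)\cap S}w(v,t)$ otherwise. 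Because non-terminals form an independent set, flipping $v$ to the cheaper side does not affect the contribution of any other vertex, so the unique min-cut must place $v$ on whichever side is strictly cheaper; the perturbation assumption rules out ties.

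\textbf{Step 2 (profiles as sign patterns).} Setting $x_S=\mathbf{1}_S-\mathbf{1}_{T\setminus S}\in\{-1,+1\}^T$, Step~1 rewrites as $\pi^v_S=1\iff \langle a^v,x_S\rangle>0$. Hence the whole profile $\pi^v$ is determined by the sign pattern of the family of linear functionals
\[
\mathcal{L}=\bigl\{\langle \,\cdot\,,x_S\rangle:\ \emptyset\subsetneq S\subsetneq T\bigr\}
\]
evaluated at $a^v\in\mathbb{R}^T$, and two non-terminals with different profiles must induce different sign patterns of~$\mathcal{L}$.

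\textbf{Step 3 (counting sign patterns).} Since $x_{T\setminus S}=-x_S$, the functionals in $\mathcal{L}$ define at most $2^{k-1}-1$ distinct hyperplanes through the origin in $\mathbb{R}^{T}\cong\mathbb{R}^k$. By the standard bound on hyperplane arrangements (Zaslavsky), $m$ hyperplanes in $\mathbb{R}^k$ produce at most $\sum_{i=0}^{k}\binom{m}{i}=O(m^k)$ cells, and hence at most that many sign patterns. Plugging in $m\le 2^{k-1}$ gives at most $2^{O(k^2)}$ profiles from non-terminals; including the $k$ terminal profiles yields $|\Pi(G)|\le k+2^{O(k^2)}\le k^{O(k^2)}$, as desired. (This is the ``set system / VC-dimension'' bound mentioned in the technical overview: the class of subsets of $2^T$ realizable as $\{S:\pi^v_S=1\}$ is a subclass of origin-halfspaces on the hypercube points $\{x_S\}$, which has VC-dimension at most~$k$.)

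\textbf{Main obstacle.} The conceptually non-trivial part is Step~1: passing from a global min-cut statement to a pointwise inequality depending only on $v$'s incident edge weights. Everything rests on the fact that quasi-bipartiteness makes the non-terminals an independent set, so moving $v$ between the two sides of the cut is a strictly local operation. Once Step~1 is established, Steps~2 and~3 are a routine application of hyperplane arrangement / VC bounds, and the slack between $2^{O(k^2)}$ and $k^{O(k^2)}$ is comfortable.
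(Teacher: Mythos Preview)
Your proof is correct and in fact yields the slightly sharper bound $|\Pi(G)|\le 2^{O(k^2)}$, which of course implies the stated $k^{O(k^2)}$. The route, however, differs from the paper's. Both arguments start from the same local characterization (your Step~1, which the paper also states): in a quasi-bipartite graph, $\pi^v_S$ is determined by the sign of $\langle a^v,x_S\rangle$. From there the paper works in the \emph{primal}: it views $\Pi(G)$ as a set system over the ground set $\mathcal U=\{S:\emptyset\subsetneq S\subsetneq T\}$ and bounds its VC-dimension directly by $O(k\log k)$, via a pigeonhole argument showing that any collection of more than $2k\log k$ sets $S$ contains a ``similar pair'' $\mathcal U_1,\mathcal U_2$ (same size, same terminal-incidence counts) which forces a linear obstruction to shattering; Sauer--Shelah then gives $(2^k)^{O(k\log k)}=k^{O(k^2)}$. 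You instead work in the \emph{dual}: each $S$ becomes a point $x_S\in\{\pm1\}^k$ and each profile becomes an origin halfspace, so the count is immediately bounded by the number of cells of an arrangement of $\le 2^{k-1}$ central hyperplanes in $\mathbb R^k$ (equivalently, Sauer--Shelah with the VC-dimension $k$ of linear threshold functions). Your approach is shorter, uses an off-the-shelf geometric fact, and shaves a $\log k$ from the exponent; the paper's approach is more self-contained and exposes the combinatorial obstruction (similar pairs) explicitly.
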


To prove \Cref{quasi_profile}, we will show that $\Pi(G)$, when viewed as a family of sets (instead of vectors), has bounded VC-dimension.
Specifically, consider the ground set $\uset=\set{S\mid \emptyset \subsetneq S\subsetneq T}$.
Now each profile $\pi^v$ naturally defines a subset of the ground set $\uset$: $\Pi^v=\set{S\mid \pi^v_S=1}$.
Abusing the notation, we also write $\Pi(G)=\set{\Pi^v\mid v\in V(G)}$.
We will show that the set family $\Pi(G)$ on the ground set $\uset$ has VC-dimension $O(k \log k)$. Then from Sauer-Shelah lemma \cite{shelah1972combinatorial,sauer1972density},  
$$|\Pi(G)|\le |\uset|^{O(k \log k)}\le (2^k)^{O(k \log k)} = k^{O(k^2)}.$$  

\paragraph{Shattering sets and VC dimension.} Let $\fset$ be a set family on the ground set $\uset$.
Let $U$ be a subset of $\uset$. We say that the family $\fset$ \emph{shatters} $U$, iff for every subset $U'\subseteq U$, there exists some set $F\in \fset$, such that $F\cap U=U'$. The VC-dimension of the family $\fset$ is the maximum size of a set $U$ shattered by $\fset$.

The remainder of this section is dedicated to the proof of the following claim.

\begin{claim}
\label{clm: VC-dim}
The VC-dimension of $\Pi(G)$ on the ground set $\uset$ is $O(k \log k)$.
\end{claim}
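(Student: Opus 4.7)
My plan is to translate each non-terminal's profile into a sign pattern of a linear functional on $\mathbb{R}^k$, and then bound the number of such sign patterns using the combinatorics of hyperplane arrangements.

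\emph{Step 1 (local characterization of profiles).} For a non-terminal vertex $v$, let $w_v = (w(v,t_1),\ldots,w(v,t_k)) \in \mathbb{R}_{\geq 0}^{k}$ denote the vector of edge weights from $v$ to the $k$ terminals (with $0$ for missing edges). Because $G$ is quasi-bipartite, no two non-terminals are adjacent, so in the (unique, by the blanket perturbation assumption) min-cut separating $S$ from $T\setminus S$ each non-terminal $v$ independently joins whichever side is cheaper. Hence $\pi^v_S = 1$ iff $\sum_{t_i \in S} w(v,t_i) > \sum_{t_i \notin S} w(v,t_i)$, equivalently iff $\langle y_S,\, w_v \rangle > 0$, where $y_S := 2\chi_S - \mathbf{1} \in \{-1,+1\}^{k}$ and $\chi_S \in \{0,1\}^{k}$ is the indicator of $S \subseteq T$. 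Thus $\Pi^v \cap U$ is determined by the sign vector $(\mathrm{sign}\langle y_{S_i}, w_v\rangle)_{i=1}^{d}$ for any $U = \{S_1, \ldots, S_d\} \subseteq \uset$.

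\emph{Step 2 (counting sign patterns).} Suppose $\Pi(G)$ shatters some $U = \{S_1,\ldots,S_d\} \subseteq \uset$, so $\{\Pi^v \cap U : v \in V(G)\}$ realizes all $2^d$ subsets of $U$. The $k$ terminals contribute at most $k$ subsets (for a terminal $t$, $\Pi^t \cap U = \{S_i : t \in S_i\}$). Every remaining subset must come from a non-terminal and thus corresponds to a distinct open region of the central hyperplane arrangement $\{w \in \mathbb{R}^k : \langle y_{S_i}, w\rangle = 0\}_{i=1}^{d}$. By Zaslavsky's theorem, such an arrangement has at most $2\sum_{j=0}^{k-1}\binom{d-1}{j} = O(d^{k-1})$ regions.

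\emph{Step 3 (solving for $d$).} Combining,
\[
2^d \;\leq\; O(d^{k-1}) + k,
\]
which, after taking logarithms, forces $d \leq (k-1)\log_2 d + O(\log k)$ and hence $d = O(k \log k)$. This is the claimed VC-dimension bound.

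The main obstacle is Step~1: carefully justifying that in a quasi-bipartite graph every non-terminal is placed into the $(S, T\setminus S)$ min-cut independently, with its side determined solely by its own incident edge weights. This relies on two ingredients: the absence of non-terminal-to-non-terminal edges (which decouples the per-non-terminal optimizations) and the section's standing perturbation assumption that all terminal min-cuts are unique (so the inequality above is strict). Once Step~1 is in hand, Steps~2 and~3 are routine hyperplane-arrangement combinatorics.
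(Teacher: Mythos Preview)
Your argument is correct and constitutes a genuinely different proof from the paper's. The paper argues combinatorially: it calls two disjoint equal-size subfamilies $\uset_1,\uset_2\subseteq\uset'$ \emph{similar} if every terminal lies in the same number of sets of $\uset_1$ as of $\uset_2$, shows (via the identity $\sum_{S\in\uset_1}w_v(S)=\sum_{S\in\uset_2}w_v(S)$) that any $\uset'$ containing a similar pair cannot be shattered, and then uses a pigeonhole count on the ``terminal-incidence vectors'' of the $\binom{|\uset'|}{\lfloor|\uset'|/2\rfloor}$ half-size subfamilies to force a similar pair once $|\uset'|>2k\log k$.

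Your route instead recognizes that each non-terminal profile is the trace on $\{y_S:S\in\uset\}\subseteq\{\pm1\}^k$ of a homogeneous open halfspace $\{y:\langle y,w_v\rangle>0\}$ in $\mathbb{R}^k$, and then bounds the number of sign patterns by the region count of a central arrangement. This is cleaner conceptually and connects the problem to standard VC theory: since homogeneous halfspaces in $\mathbb{R}^k$ have VC-dimension exactly $k$, the non-terminal part of $\Pi(G)$ already has VC-dimension at most $k$, and Sauer--Shelah then gives $2^d-k\le\sum_{j\le k}\binom{d}{j}\le(ed/k)^k$, which in fact yields $d=O(k)$ rather than $O(k\log k)$. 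Your write-up loses this sharpening only because you absorbed the $1/(k{-}1)!$ factor into the big-$O$ when writing the region count as $O(d^{k-1})$. The paper's argument, by contrast, is entirely self-contained (no appeal to arrangement or VC bounds) but does not seem to tighten to $O(k)$ without further work.
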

\begin{proof}
For every subset $S \subseteq T$, we denote by $w_v(S)$ the total weight of all the edges between $v$ and the terminals in $S$. Since $G$ is a quasi-bipartite graph, for every $S$, the value of $\pi^v_S$ only depends on $w_v(S)$ and $w_v(T\setminus S)$. Specifically, for every $v$, $\pi^v_S=1$ iff $w_v(S)>w_v(T\setminus S)$.

Recall that the ground set $\uset$ contains all proper subsets of $T$ as its elements. Here we consider the subsets of $\uset$. We say that two subsets $\uset_1,\uset_2\subseteq \uset$ are \emph{similar}, iff
\begin{itemize}
\item $\uset_1,\uset_2$ are disjoint subsets of $\uset$;
\item $|\uset_1|=|\uset_2|$; and
\item for every terminal $t\in T$, $\big|\set{S\in \uset_1\mid t\in S}\big|=\big|\set{S\in \uset_2\mid t\in S}\big|$.
\end{itemize}
The proof of \Cref{clm: VC-dim} is completed by the following two observations.

\begin{observation}
Let $\uset_1,\uset_2$ be a similar pair. Then any subset $\uset'\subseteq \uset$ that contains both $\uset_1$ and $\uset_2$ is not shattered by $\Pi(G)$.
\end{observation}
\begin{proof}
As $\uset_1$ and $\uset_2$ are similar, for every non-terminal $v\in V(G)$, $\sum_{S\in \uset_1} w_{v}(S) = \sum_{S \in \uset_2} w_{v}(S)$ and $\sum_{S \in \uset_1} w_v(T \setminus S) = \sum_{S \in \uset_2} w_v(T \setminus S)$ must hold. 
If there exists a set $\Pi^v\in \Pi(G)$ with $\Pi^v\cap \uset'=\uset_1$, then by definition,
$\pi^v_S=1$ for all $S \in \mathcal{U}_1$ and $\pi^v_S=0$ for all $S \in \mathcal{U}_2$. 
However, this means that
\[
\sum_{S\in \uset_1} w_{v}(S)> |\uset_1|\cdot \frac{w_{v}(T)}{2}= |\uset_2|\cdot \frac{w_{v}(T)}{2}
> \sum_{S\in \uset_2} w_{v}(S) = \sum_{S\in \uset_1} w_{v}(S),
\]
a contradiction.
\end{proof}

\begin{observation}
Every subset $\uset'\subseteq \uset$ with $|\uset'|>2k\log k$ admits a similar pair $\uset_1,\uset_2\subseteq \uset'$.
\end{observation}
\begin{proof}
Let $\uset'$ be a subset of $\uset$ with $|\uset'|>k\log k$.
First, there are $\binom{|\uset'|}{\floor{|\uset'|/2}}$ size-$\floor{|\uset'|/2}$ subsets of $\uset'$. Second, for such each such subset $\uset''$, if we define the $T$-dimensional vector $\alpha[\uset'']=(\alpha[\uset'']_t)_{t\in T}$ as
$\alpha[\uset'']_t=\big|\set{S\in \uset''\mid t\in S}\big|$, then there are a total of $(\floor{|\uset'|/2}+1)^k$ possible vectors $\alpha[\uset'']$.

This impies that, when $|\uset'|>2k\log k$, $\binom{|\uset'|}{\floor{|\uset'|/2}}>(\floor{|\uset'|/2}+1)^k$, and so there must exist a pair $\mathcal{U}'_1,\mathcal{U}'_2$  of distinct subsets of $\mathcal{U}'$, such that $\card{\mathcal{U}_1} = \card{\mathcal{U}_2} = \floor{|\uset'|/2}$ and the vectors $\alpha[\uset_1],\alpha[\uset_2]$ are identical.
It is then easy to verify that $\uset_1\setminus \uset_2,\uset_2\setminus \uset_1$ are a similar pair, both contained in $\uset'$. 
\end{proof}
\end{proof}

We now prove the lower bound that for every integer $k$, there exists a quasi-bipartite graph with $k$ terminals whose exact contraction-based cut sparsifier must contain $\Omega(2^k)$ vertices.

We construct a graph $G$ as follows. The terminal set contains $k$ vertices and is denoted by $T$. For every subset $\emptyset\subsetneq S \subsetneq T$ with even size, we add a new vertex $v_S$ and connect it to every vertex of $S$.
Every edge is given a capacity that is chosen uniformly at random from the interval $(1-2^{k},1+2^{-k})$. In this way we can guarantee that with high probability, sets $E',E''$ of edges of $G$ have the same total weight iff $E'=E''$.
And clearly, $G$ is a quasi-bipartite graph with $|V(G)|=\Omega(2^k)$.

The proof of the $\Omega(2^k)$ size lower bound is completed by the following claims.

\begin{claim}
\label{clm: profiles}
Every pair of distinct vertices have different profiles.
\end{claim}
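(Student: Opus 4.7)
The plan is to argue case-by-case by the types of the two distinct vertices, using the key structural fact that, because $G$ is quasi-bipartite with generic edge weights, the unique min-cut for a partition $(A, T\setminus A)$ places each Steiner vertex $v_S$ on side $A$ iff $w(v_S, S\cap A) > w(v_S, S\setminus A)$, while a terminal $t$ is always on the side containing it. Writing $\vec{\chi}(A)\in\{-1,+1\}^T$ for the $\pm 1$ indicator of $A$ and $\vec{w}(v_S)\in\mathbb{R}^T$ for the weight vector supported on $S$, we have $\pi^{v_S}_A = \mathbf{1}\{\vec{w}(v_S)\cdot\vec{\chi}(A)>0\}$ and $\pi^t_A = \mathbf{1}\{t\in A\}$, so each profile is the sign pattern on the Boolean hypercube of a specific linear threshold function (LTF).

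Two distinct terminals are immediately distinguished by $A=\{t\}$. For a terminal $t$ and a Steiner $v_S$, I would also try $A=\{t\}$: if $t\notin S$ then $w(v_S,S\cap\{t\})=0<w(v_S,S)$, and if $t\in S$ then the size $|S\setminus\{t\}|\ge 1$ dominates a single edge once one accounts for the $O(2^{-k})$-scale perturbations (cleanly so for $|S|\ge 4$), giving $\pi^{v_S}_{\{t\}}=0\ne 1=\pi^t_{\{t\}}$.

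The heart of the proof, and the main obstacle, is the Steiner-vs-Steiner case. Given $v_S\ne v_{S'}$, my first attempt is $A\in\{S,S'\}$: taking $A=S$ forces $\pi^{v_S}_S=1$, while $\pi^{v_{S'}}_S=0$ whenever $|S'\cap S|<|S'|/2$, and symmetrically for $A=S'$. These handle all pairs with $|S\cap S'| < \max(|S|,|S'|)/2$. For the remaining tight regime $|S\cap S'|\ge \max(|S|,|S'|)/2$, I would appeal to Chow's theorem on LTFs: the Chow parameter $\mathbb{E}[f_{v_S}\cdot\chi_i]$ vanishes for $i\notin S$ (since $f_{v_S}$ does not depend on $\chi_i$) and, under the random weight assumption, is almost surely nonzero for $i\in S$. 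Hence if $S\ne S'$, the Chow tuples of $f_{v_S}$ and $f_{v_{S'}}$ differ on any coordinate in $S\triangle S'$, so by Chow's theorem the two LTFs are distinct as Boolean functions, and there exists $\vec\chi\in\{-1,+1\}^T$ where they disagree. Since both functions evaluate to $+1$ at $\vec\chi=\vec 1$ and to $-1$ at $\vec\chi=-\vec 1$, the distinguishing $\vec\chi$ must correspond to a proper nonempty $A\subsetneq T$, yielding $\pi^{v_S}_A\ne\pi^{v_{S'}}_A$. The technical delicacy here is verifying the Chow-parameter non-vanishing claim in the presence of only polynomially small perturbations, which I would do by direct computation showing $\mathbb{E}[f_{v_S}\chi_i] = \Pr_\chi\!\left(|\!\!\sum_{j\in S\setminus\{i\}}\!\!w_j\chi_j| < w_i\right) > 0$ almost surely for $i\in S$.
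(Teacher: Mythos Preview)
The central step of your argument --- the Chow-parameter non-vanishing --- fails under the paper's weight model. You assert that $\hat f_{v_S}(i)=\Pr_\chi\bigl[\lvert\sum_{j\in S\setminus\{i\}}w_j\chi_j\rvert<w_i\bigr]>0$ almost surely for $i\in S$, but since $|S|$ is even, the inner sum has an \emph{odd} number of terms, each in $(1-2^{-k},1+2^{-k})$, so its absolute value is always within $|S|\cdot 2^{-k}$ of a nonzero odd integer and hence bounded below by roughly $1-|S|\,2^{-k}$. Whether $w_i$ clears that bar is a nontrivial event. Concretely, for $S=\{1,2,3,4\}$ the three strict inequalities $w_1+w_2<w_3+w_4$, $w_1+w_3<w_2+w_4$, $w_1+w_4<w_2+w_3$ cut out an open region of positive probability on which $\hat f_{v_S}(1)=0$ and $f_{v_S}$ collapses to $\mathrm{MAJ}(\chi_2,\chi_3,\chi_4)$.

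This is not just a gap in your route; it breaks the Claim itself. Take $S=\{1,2,3,4\}$ and $S'=\{2,3,4,5\}$: with positive probability the weights on $v_S$ make coordinate $1$ irrelevant as above, and independently the weights on $v_{S'}$ make coordinate $5$ irrelevant, whence $f_{v_S}=f_{v_{S'}}=\mathrm{MAJ}(\chi_2,\chi_3,\chi_4)$ and the two profiles are identical on every $U$. The paper's own proof takes a different, purely combinatorial tack --- choose $U$ to be roughly half of $S\cap S'$ and optionally adjoin $S\setminus S'$ --- but it too fails on this pair: every such $U$ satisfies $|U\cap\{2,3,4\}|\ge 2$ and hence places both vertices on the $U$ side. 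Your terminal-versus-Steiner case has the matching loose end you flagged: for $|S|=2$ the terminal $t\in S$ with the larger incident weight has the \emph{same} profile as $v_S$. None of this sinks the $\Omega(2^k)$ lower bound --- one only needs $\Omega(2^k)$ distinct profiles, and the collisions above are rare --- but the Claim as stated is false, and neither your Chow argument nor the paper's explicit construction can be completed.
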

\begin{proof}
Consider a pair $v=v_S,v'=v_{S'}$ of vertices in $G$.
As $G$ is a quasi-bipartite graph, the profile of $v$ only depends on incident edges of $v$.
We will show that there always exists a subset $U\subseteq T$ of terminals, such that $v,v'$ lie on different sides of the $(U,T\setminus U)$ min-cut in $G$.

Recall that $S,S'$ are distinct subsets of $T$ with even size. Assume w.l.o.g. that $S\setminus S'\ne \emptyset$. If $|S\cap S'|$ is even, then we let $U$ contains exactly half of the elements in $S\cap S'$. Now if $v$ lies on the $U$ side of the $(U,T\setminus U)$ min-cut in $G$, then we are done, as $v'$ lies on the $T\setminus U$ size of the min-cut, then we are done we add to $U$. Otherwise, we add all elements in $S\setminus S'$, ensuring that $v$ lies on the $T\setminus U$ side of the $(U,T\setminus U)$ min-cut while $v'$ lies on the $U$ side.
Similarly, if $|S\cap S'|$ is even, then we let $U$ contains $\frac{|S\cap S'|+1}{2}$ elements in $S\cap S'$, and we can adjust $U$ similarly (by optionally adding to it all elements of $S\setminus S'$) to ensure that $v,v'$ lie on different sides of the $(U,T\setminus U)$ min-cut in $G$.
\end{proof}

\begin{claim}
No pair of vertices can be contracted in an exact contraction-based cut sparsifier of $G$.
\end{claim}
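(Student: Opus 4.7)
The plan is to reduce the claim to showing that every pair of distinct vertices in $G$ has a distinct profile. Once we have that, the standard fact kicks in: if $u,v$ disagree in profile at coordinate $U$, then the \emph{unique} min-cut $\mc_G(U,T\setminus U)$ places them on opposite sides; merging them into a single supernode destroys exactly this cut, while every other $(U,T\setminus U)$-cut of $G$ has strictly greater weight because of the random perturbation of edge weights. Hence the value of the $(U,T\setminus U)$-min-cut in the contracted graph strictly exceeds $\mc_G(U,T\setminus U)$, contradicting the assumption that the contraction-based sparsifier is exact.

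Verifying distinct profiles splits into three cases. Two non-terminals $v_S,v_{S'}$ are handled by \Cref{clm: profiles}. Two distinct terminals are, by the very definition of a contraction-based sparsifier, forced to lie in different subsets of the partition, so there is nothing to verify. The interesting case is a terminal $t$ together with some $v_S$. When $t\notin S$, the cut $U=\{t\}$ witnesses the difference because $v_S$ has no edge to $t$, yielding $\pi^{v_S}_{\{t\}}=0\ne 1=\pi^t_{\{t\}}$. When $t\in S$ with $|S|\ge 4$, the same cut $U=\{t\}$ sends $v_S$ to the $T\setminus\{t\}$ side, since its single edge to $t$ of weight $\approx 1$ is outweighed by its $|S|-1\ge 3$ other edges to terminals in $S\setminus\{t\}$ whose total weight is $\approx|S|-1$; so again $\pi^{v_S}_{\{t\}}=0\ne 1=\pi^t_{\{t\}}$.

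The main obstacle is the remaining case $t\in S$ with $|S|=2$, say $S=\{t,t'\}$: here $v_S$ has only two near-equal edges, and its profile inevitably coincides with that of whichever of $t,t'$ receives the heavier edge, a coincidence forced with probability one by the random perturbation. I would handle this by noting that any two vertices $v_{\{t,t'\}}$ and $v_{\{t,t''\}}$ sharing a common terminal $t$ have distinct profiles by \Cref{clm: profiles}, so at most one size-two neighborhood $v_S$ can ever coincide in profile with a given terminal. Consequently at most $k$ contractions of this ``terminal-absorbing'' type are ever possible, and the resulting sparsifier still has at least $(2^{k-1}-1)+k-k=2^{k-1}-1=\Omega(2^k)$ supernodes, which yields the desired lower bound --- noting that the literally stronger wording ``no pair at all can be contracted'' would additionally require restricting the construction to subsets of size $\ge 4$, which does not change the asymptotic count.
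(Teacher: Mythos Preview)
Your core argument matches the paper's: distinct profiles plus the random-perturbation guarantee (distinct edge sets have distinct total weights) force any contraction across a profile difference to strictly raise some terminal min-cut. The paper's proof is briefer because it only treats pairs of non-terminal vertices $v_S,v_{S'}$, invoking \Cref{clm: profiles} directly and never examining terminal/non-terminal pairs. That already suffices for the $\Omega(2^k)$ bound, since the $\Omega(2^k)$ non-terminals must occupy pairwise distinct parts in any exact contraction.

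You go further and correctly observe that the literal statement fails for the construction as written: when $|S|=2$, the vertex $v_{\{t,t'\}}$ shares the profile of whichever of $t,t'$ carries the heavier incident edge, so that pair \emph{can} be contracted without changing any terminal min-cut. Your two remedies --- either accounting for at most $k$ such terminal-absorptions, or restricting the construction to even $|S|\ge 4$ --- both work and preserve the $\Omega(2^k)$ count. This is a genuine edge case the paper glosses over by tacitly reading ``vertices'' as ``the non-terminal vertices $v_S$''.
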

\begin{proof}
Assume for contradiction that vertices $v_S,v_{S'}$ are contracted. We have shown in \Cref{clm: profiles} that there exists a subset $U$ of $T$, such that vertices $v_S,v_{S'}$ lie on different sides of the $(U,T\setminus U)$ min-cut in $G$.
Therefore, if $v_S,v_{S'}$ are merged in the sparsifier, then the set $E''$ of edges constituting the $(U,T\setminus U)$ min-cut in the cut sparsifier $H$ must be different from the set $E'$ of edges constituting the $(U,T\setminus U)$ min-cut in $G$, and therefore they must have different total weight (meaning that $H$ cannot be an exact cut sparsifier for $G$), since we have assumed that sets $E',E''$ of edges of $G$ have the same total weight iff $E'=E''$.
\end{proof}

\section{Proof of \Cref{quasi_apx}}
\label{sec: quasi_apx}

In this section, we provide the proof of \Cref{quasi_apx}, showing that every quasi-bipartite graph with $k$ terminals admits a quality-$(1+\eps)$ contraction-based cut sparsifier of size $k^{O(1/\eps^2)}\cdot (1/\eps)^{O(1/\eps^4)}$. 


\subsection{Sparsifying stars by sampling}
\label{subsec: sparsified star}

Throughout, we use the parameter $c = \ceil{100/\eps^2}$. 

For every non-terminal $v$ in $G$, denote by $G_v$ the star induced by all incident edges of $v$. We will construct another star $H_v$ that contains at most $c$ edges to mimick the behaviors of $G_v$, as follows.

Denote by $w$ the total weight of all edges in $G_v$.
We say an edge $e$ is \emph{heavy} iff $w(e)\ge w/c$, otherwise we say it is \emph{light}.
Denote by $h$ the number of heavy edges in $G_v$ and by $w^h$ the total weight of them, so $h\le c$.
We define a distribution $\dset$ on light edges as follows: the probability for a light edge $e$ is 
$\text{Pr}_{\dset}(e)=\frac{w(e)}{w-w^h}.$
If $|V(G_v)|\le c$, then we let $H_v=G_v$.
Otherwise, we sample $c-h$ light edges from $\dset$ with replacement. For each sampled edge, we assign to it a new weight of $\frac{w-w^h}{c-h}$.
The graph $H_v$ simply contains all heavy edges in $G_v$ and all sampled light edges (with their new weight).

For a subset $S\subseteq T$, let $w_S(v)$ be the total weight of edges in $G_v$ connecting $v$ to a vertex in $S$, so $w=w_T(v)$, and similarly let $w'_S(v)$ be the total weight of edges in $H_v$ connecting $v$ to a vertex in $S$. 
In this subsection we will omit $v$ and only write $w_S,w'_S$ for notational simplicity.
The following observation is immediate.

\begin{observation}
\label{obs: property}
Graph $H_v$ contains at most $c$ edges with a total weight of $w$. Moreover, for every subset $S\subseteq T$, $\ex{w'_S}=w_S$.
\end{observation}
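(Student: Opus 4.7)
The plan is a direct verification from the construction, splitting cases by whether $|V(G_v)| \le c$, and within the non-trivial case, separately accounting for the deterministic (heavy) contribution and the random (light) contribution.

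For the edge-count and total-weight claims, both are deterministic. In the trivial case $H_v = G_v$, so $H_v$ has $|V(G_v)|-1 \le c-1$ edges of total weight $w$ and there is nothing to check. In the non-trivial case $H_v$ consists of exactly the $h$ heavy edges of $G_v$ (kept with their original weights) together with $c-h$ sampled light edges (each assigned weight $(w-w^h)/(c-h)$), giving $c$ edges in total and total weight $w^h + (c-h)\cdot \frac{w-w^h}{c-h} = w$.

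For the expectation claim I will use linearity of expectation. In the trivial case $w'_S = w_S$ deterministically. In the non-trivial case, write $w^h_S$ (respectively $w^\ell_S$) for the total weight of heavy (respectively light) edges of $G_v$ going to $S$, so that $w_S = w^h_S + w^\ell_S$. The heavy edges are kept intact in $H_v$, contributing exactly $w^h_S$ to $w'_S$. For a single light sample, the probability of picking an edge going to $S$ equals $\sum_{e \text{ light}, e \text{ to } S} w(e)/(w-w^h) = w^\ell_S/(w-w^h)$ by the definition of $\mathcal{D}$, and the assigned weight is $(w-w^h)/(c-h)$, so its expected contribution to $w'_S$ is $w^\ell_S/(c-h)$. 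Summing over the $c-h$ samples (linearity alone suffices; no independence is needed) gives $w^\ell_S$, and adding the deterministic $w^h_S$ yields $w_S$.

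There is no real obstacle here; the only subtlety is that the heavy/light decomposition of $G_v$ is faithfully reproduced in $H_v$, which follows immediately because the construction retains every heavy edge deterministically and samples light edges from the correctly normalized distribution $\mathcal{D}$ supported on the light edges.
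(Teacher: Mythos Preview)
Your proof is correct and matches the paper's treatment: the paper simply states the observation is immediate from the construction and gives no further argument, and your case split with the heavy/light decomposition and linearity of expectation is exactly the intended verification.
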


We now measure the similarity of graphs $H_v$ and $G_v$ in terms of terminal cuts.
For a subset $S\subseteq T$, if $w'_S>w/2$, then in graph $H_v$, $v$ lies on the $S$ side of the $(S, T\setminus S)$ min-cut, otherwise $v$ lies on the $T\setminus S$ side. 
We prove the following claim.

\begin{claim}
\label{clm: difference in contribution}
For a subset $S\subseteq T$ with $w_S<w/2$, $\big(w-2\cdot w_S\big) \cdot \pr{w'_S>w/2} \le \eps \cdot w_S$.
\end{claim}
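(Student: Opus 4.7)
The plan is to control $\pr{w'_S > w/2}$ by combining a variance argument, which handles the regime where $w_S$ is not too small, with a multiplicative Chernoff bound, which handles the complementary regime of tiny $w_S$. Write $\delta = w/2 - w_S > 0$, so that $w - 2w_S = 2\delta$ and $\set{w'_S > w/2} = \set{w'_S - w_S > \delta}$. Decomposing $w'_S = w^h_S + \tfrac{w-w^h}{c-h}\,Y$, where $Y$ is a $\mathrm{Binomial}(c-h,\, w^\ell_S/(w-w^h))$ variable with $w^\ell_S = w_S - w^h_S$, two quantitative ingredients will drive everything: (i) $\mathrm{Var}(w'_S) \le w\cdot w_S/c$, and (ii) each sampled light term contributes at most $w/c$ to $w'_S$. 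Both follow at once from the inequality $(w-w^h)/(c-h) \le w/c$, which is an immediate consequence of the heavy-edge threshold $w(e) \ge w/c$ (which forces $w^h \ge hw/c$).

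For the main regime I apply the one-sided Chebyshev (Cantelli) inequality $\pr{w'_S > w/2} \le \sigma^2/(\sigma^2+\delta^2)$, where $\sigma^2 = \mathrm{Var}(w'_S)$. Multiplying by $2\delta$ and applying AM-GM to the denominator gives the clean bound
\[
(w - 2w_S)\cdot \pr{w'_S > w/2} \;\le\; \sigma \;\le\; \sqrt{w\cdot w_S/c} \;=\; \tfrac{\eps}{10}\sqrt{w\cdot w_S},
\]
which is at most $\eps w_S$ exactly when $w_S \ge w/100$. This disposes of the moderate regime.

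The remaining case $w_S < w/100$ is the main obstacle, because the variance bound is now off by a potentially unbounded factor, and one must exploit exponential decay instead. First note that if $w^h_{T\setminus S} \ge w/2$ then $w'_S \le w - w^h_{T\setminus S} \le w/2$ deterministically, so $\pr{w'_S > w/2} = 0$ and the claim is trivial. Otherwise $w - w^h \ge 49w/100$, and $\set{w'_S > w/2} = \set{Y \ge k^*}$ for $k^* = \lceil (c-h)(w/2 - w^h_S)/(w-w^h)\rceil$. The step-size inequality $(c-h)/(w-w^h) \ge c/w$ together with $w^h_S \le w_S \le w/100$ forces $k^* \ge 49c/100 = 49/\eps^2$, while $\ex{Y}/k^* = w^\ell_S/(w/2 - w^h_S) \le (100/49)(w_S/w)$. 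The multiplicative Chernoff bound $\pr{Y \ge k^*} \le (e\,\ex{Y}/k^*)^{k^*}$ then yields $\pr{w'_S > w/2} \le (C w_S/w)^{49/\eps^2}$ for an absolute constant $C < 6$. Since $C w_S/w < 1/10$ and the exponent is at least $49$, pulling one factor of $C w_S/w$ out of the power gives
\[
(w - 2w_S)\cdot \pr{w'_S > w/2} \;\le\; w\cdot (C w_S/w)\cdot (1/10)^{49/\eps^2 - 1} \;=\; C w_S \cdot (1/10)^{49/\eps^2 - 1},
\]
which is comfortably below $\eps w_S$ for every $\eps \in (0,1]$. The delicate step is verifying that the Chernoff exponent is really $\Omega(1/\eps^2)$; this is precisely the point at which the inequality $(c-h)/(w-w^h) \ge c/w$, built into the sampling design via the heavy-edge threshold, pays off.
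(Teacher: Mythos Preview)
Your proof is correct and shares the paper's high-level structure (case-split on $w_S/w$, tail bound on the sampled light mass, multiply by the gap $w-2w_S$), but the execution differs in two useful ways. For the moderate regime you invoke Cantelli's inequality and the elementary bound $2\delta\sigma^2/(\sigma^2+\delta^2)\le\sigma$; the paper instead uses the additive Chernoff form $e^{-c\delta\alpha/(2+\delta)}$ and needs a separate trivial case for $w_S$ near $w/2$. Your variance route collapses the paper's Cases~1--2 into a single line, at the harmless cost of pushing the case boundary down to $w/100$ rather than $w/4$. Second, you handle heavy edges explicitly via the binomial decomposition $w'_S=w^h_S+\tfrac{w-w^h}{c-h}\,Y$ and the step-size inequality $(w-w^h)/(c-h)\le w/c$, whereas the paper simply asserts that ``heavy edges \ldots\ can only help the concentration'' and proceeds as if all edges were light; your treatment makes the role of the heavy-edge threshold transparent. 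The small-$w_S$ multiplicative Chernoff argument is essentially the paper's Case~3, just with the constants worked out more carefully.
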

\begin{proof}
Assume by scaling that $w=1$.
We use the following Chernoff bound on negative associated random variables.
	
    \begin{lemma} [\cite{dubhashi1996balls}] \label{chernoff}
		Suppose $X_1,\dots,X_n$ are negatively associated random variables taking values in $\{0,1\}$. Let $X=\sum_{1\le  i\le n}X_i$ and $\mu=\ex{X}$. Then for any $\delta>0$,
		\begin{itemize}
			\item $\pr{X > (1+\delta)\mu} \le e^{-\frac{\delta^2 \mu}{(2+\delta)}}$; and
			\item $\pr{X > (1+\delta)\mu} \le \frac{e^{-\mu}}{(1+\delta)^{1+\delta}}$.
		\end{itemize}
	\end{lemma}
Let $X_e$ be the random variable for the new weight of edge $e$ in $H_v$ (that is, if $e$ is heavy, then $X_e=w(e)$, if $e$ is light and sampled, $X_e=\frac{w-w^h}{c-h}$, and if $e$ is light and not sampled, $X_e=0$). According to our sampling process, variables $\set{X_e}_{e\in E(G_v)}$ are negatively associated.
From \Cref{chernoff}, letting $\alpha = 1/2 - w_S$ and $\delta = \alpha/w_S$, we have
	\begin{itemize}
		\item $\pr{w'_S > 1/2} \le e^{-\frac{c \delta^2 w_S}{(2+\delta)}} = e^{-\frac{c \delta \alpha }{(2+\delta)}}$; and
		\item $\pr{w'_S > 1/2} \le \frac{e^{-c w_S}}{(1+\delta)^{1+\delta}} \le \frac{e^{-c w_S}}{\big(\frac{1}{2w_S}\big)^{\frac{1}{2w_S}}}$;
	\end{itemize}
(we have assumed that all edges are light, as the heavy edges have weights unchanged and can only help the concentration property of $w'_S$).
We consider the following three cases. 
	\paragraph{Case 1: $1/2 - \eps/5 < w_S \le 1/2$.} In this case, $(1-2\cdot w_S)<2\cdot\eps/5<\eps\cdot w_S$.
	\paragraph{Case 2: $1/4 < w_S < 1/2 - \eps/5$.} In this case, since $\delta < 1$,
	\begin{align*}
	\pr{w'_S > 1/2} \le e^{-\frac{c \delta \alpha}{3}} < e^{- \frac{2 c \alpha^2}{3}} < \frac{3}{2 c \alpha^2} < \frac{\eps}{12\alpha},
	\end{align*}
	where the second inequality is because $w_S < 1/2$; the third one is becuase for any $x>0$, $e^{-x}<1/x$; and the last one is because $c=\ceil{100/\eps^2}$ and $\alpha>\eps/5$. Therefore, $2\alpha \cdot \pr{w'_S > 1/2} < \eps/6 < \eps\cdot w_S$. 
	\paragraph{Case 3: $w_S \le 1/4$.} In this case, $2\cdot w_S \le 1/2$, so $\pr{w'_S > 1/2} \le \frac{e^{-c w_S}}{(1/2w_S)^2} \le \frac{4w_S}{c}$. Therefore, $2\alpha \cdot \pr{w'_S > 1/2} < \frac{12 w_S}{c} < \eps\cdot w_S$, since $\alpha < 1/2$.
\end{proof}

The following statement is a byproduct of the Case 3 above and will be useful later.
\begin{claim} \label{prop:v2}
	If $w_S \le 1/(2e)$, $\pr{w'_S > 1/2} \le e^{-\frac{w}{2w_S}}$.
\end{claim}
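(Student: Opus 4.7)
The plan is to reuse the second Chernoff estimate already invoked in the proof of Claim~\ref{clm: difference in contribution} (Case~3) and to extract a slightly cleaner exponent from it using the stronger hypothesis $w_S \le 1/(2e)$.

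First, after the same normalization $w = 1$ as in the proof of Claim~\ref{clm: difference in contribution}, set $\alpha = 1/2 - w_S$ and $\delta = \alpha/w_S$, so that $1+\delta = 1/(2w_S)$. The negative association of the sampling variables $\{X_e\}_{e \in E(G_v)}$, together with the second bullet of Lemma~\ref{chernoff}, yields
\[
\pr{w'_S > 1/2} \;\le\; \frac{e^{-c w_S}}{(1+\delta)^{1+\delta}} \;=\; \frac{e^{-c w_S}}{(1/(2w_S))^{1/(2w_S)}}.
\]

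Next, I would use the hypothesis $w_S \le 1/(2e)$, which is equivalent to $1/(2w_S) \ge e$, to conclude $\ln(1/(2w_S)) \ge 1$, and therefore
\[
(1/(2w_S))^{1/(2w_S)} \;=\; \exp\!\left(\tfrac{1}{2w_S} \ln \tfrac{1}{2w_S}\right) \;\ge\; e^{1/(2w_S)}.
\]
Substituting this back and discarding the nonpositive term $-c w_S$ in the exponent gives
\[
\pr{w'_S > 1/2} \;\le\; e^{-c w_S} \cdot e^{-1/(2w_S)} \;\le\; e^{-1/(2w_S)}.
\]
Undoing the normalization $w = 1$ replaces $1$ by $w$ in the exponent and yields $\pr{w'_S > w/2} \le e^{-w/(2 w_S)}$, which is the desired inequality.

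There is essentially no obstacle here beyond keeping the scaling consistent; the entire content of the claim is already packaged inside the Chernoff bound used in Case~3 of Claim~\ref{clm: difference in contribution}, and the hypothesis $w_S \le 1/(2e)$ is precisely what is needed to turn the denominator $(1+\delta)^{1+\delta}$ into the clean form $e^{1+\delta}$.
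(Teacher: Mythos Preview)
Your proposal is correct and matches the paper's approach: the paper simply states that Claim~\ref{prop:v2} is a byproduct of Case~3 in the proof of Claim~\ref{clm: difference in contribution}, and you have spelled out exactly that byproduct, using the same second Chernoff bullet and observing that the hypothesis $w_S \le 1/(2e)$ is precisely what turns $(1/(2w_S))^{1/(2w_S)}$ into $e^{1/(2w_S)}$.
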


\subsection{Construction of the cut sparsifier}

\subsubsection*{Step 1. Computing special cuts and keeping important vertices}

For every pair $t,t'$ of terminals in $T$, we compute a min-cut in $G$ separating $t$ and $t'$. This min-cut induces a terminal partition $(S_{t,t'},T\setminus S_{t,t'})$ where $t\in S_{t,t'}$ and $t'\in T\setminus S_{t,t'}$. We call such partitions \emph{special cuts}.

\begin{claim} \label{lem:contribution}
For every vertex $v$ and every subset $S\subseteq T$, there is a special terminal cut $(\tilde S,T\setminus \tilde S)$, such that 
$$\frac{\min \{w_{\tilde S}(v),w(v)-w_{\tilde S}(v)\}}{\mc_G(\tilde S)} \ge \frac{1}{k}\cdot \frac{\min \{w_S(v),w(v)-w_S(v)\}}{\mc_G(S)}.$$
\end{claim}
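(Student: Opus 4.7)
}

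The plan is, given $v$ and $S$, to identify a specific pair of terminals $t,t'$ and take $(\tilde S, T\setminus \tilde S)$ to be the special cut $(S_{t,t'}, T\setminus S_{t,t'})$. Write $a = \min\{w_S(v), w(v)-w_S(v)\}$. Without loss of generality assume $w_S(v) \le w(v) - w_S(v)$, so $a = w_S(v)$. If $a=0$ the inequality is trivial, so we may assume both $\sum_{t\in S} w(v,t) > 0$ and $\sum_{t\in T\setminus S} w(v,t) > 0$, in particular $1 \le |S| \le k-1$.

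First I would pick $t \in S$ to be the terminal maximizing $w(v,t)$, and $t' \in T\setminus S$ to be the terminal maximizing $w(v,t')$. By averaging, $w(v,t) \ge w_S(v)/|S| \ge w_S(v)/k = a/k$. Since we assumed $w_{T\setminus S}(v) \ge w_S(v)$, we likewise have $w(v,t') \ge w_{T\setminus S}(v)/|T\setminus S| \ge w_S(v)/k = a/k$. Now let $(\tilde S, T\setminus \tilde S)$ be the special cut induced by a min-cut separating $t$ from $t'$; by construction $t \in \tilde S$ and $t' \in T\setminus \tilde S$. Consequently
\[
w_{\tilde S}(v) \;\ge\; w(v,t) \;\ge\; a/k, \qquad w(v) - w_{\tilde S}(v) \;\ge\; w(v,t') \;\ge\; a/k,
\]
so $\min\{w_{\tilde S}(v), w(v)-w_{\tilde S}(v)\} \ge a/k$, which handles the numerator of the left-hand side.

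It remains to bound the denominator. Observe that $(S, T\setminus S)$ itself separates $t$ from $t'$, so the value of the $t$-$t'$ min-cut is at most $\mc_G(S)$. On the other hand, the $t$-$t'$ min-cut is a cut separating $\tilde S$ from $T\setminus \tilde S$ (since these are exactly its terminal sides), so its value is at least $\mc_G(\tilde S)$; and conversely any $(\tilde S, T\setminus \tilde S)$-cut separates $t$ from $t'$, yielding equality. Thus $\mc_G(\tilde S) \le \mc_G(S)$, and combining with the numerator bound gives the claimed inequality. There is no real obstacle here: the only subtlety is the identification of the $t$-$t'$ min-cut value with $\mc_G(\tilde S)$, which I expect to dispose of in a single sentence.
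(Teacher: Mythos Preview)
Your proposal is correct and follows essentially the same approach as the paper: pick $t\in S$, $t'\in T\setminus S$ with $w(v,t),w(v,t')\ge a/k$ by averaging, take $\tilde S=S_{t,t'}$, and use that $\mc_G(\tilde S)$ equals the $t$-$t'$ min-cut value, which is at most $\mc_G(S)$ since $(S,T\setminus S)$ separates $t$ from $t'$. The only cosmetic difference is that you spell out the equality $\mc_G(\tilde S)=\text{(value of $t$-$t'$ min-cut)}$ explicitly, whereas the paper takes it for granted.
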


\begin{proof}
Denote $\alpha=\frac{\min \{w_{S}(v),w(v)-w_{S}(v)\}}{\mc_G(S)}$, so $\frac{w_S(v)}{\mc_G(S)},\frac{w(v)-w_S(v)}{\mc_G(S)}\ge \alpha$. Therefore, there exist terminals $t\in S$ and $t' \notin S$ such that both $(v,t),(v,t')$ have weight at least $\frac{\alpha \cdot \mc_G(S)}{k}$. Consider the special cut $(S_{t,t'},T\setminus S_{t,t'})$ which is the min-cut in $G$ separating terminals $t,t'$, so $\mc_G(S_{t,t'})\le \mc_G(S)$.
Denote $\tilde S=S_{t,t'}$.
On the other hand, since the cut $(\tilde S,T\setminus \tilde S)$ separates $t,t'$, $(v,t),(v,t')$ cannot both lie in $E(v,\tilde S)$ or $E(v,T\setminus \tilde S)$. Therefore, $w_{\tilde S}(v),w(v)-w_{\tilde S}(v)\ge \frac{\alpha \cdot \mc_G(S)}{k}$.
Altogether,
\[\frac{\min\set{w_{\tilde S}(v),w(v)-w_{\tilde S}(v)}}{\mc_G(\tilde S)}\ge \frac{\alpha \cdot \mc_G(S)}{k\cdot \mc_G(\tilde S)}=\frac{\alpha}{k}=\frac{1}{k}\cdot \frac{\min \{w_S(v),w(v)-w_S(v)\}}{\mc_G(S)}.\]
\end{proof}

Let $\eta = \frac{\eps^4}{1000k}$. We call a vertex $v$ \emph{important} if there is a special terminal cut $(S,T\setminus S)$ such that the contribution of $v$ on it (that is, the value of $\mc_{G_v}(S,T\setminus S)$) is at least $(\eta/k)\cdot \mc_G(S)$. There are at most $k \cdot k/\eta  = O(k^3/\eps^4)$ important vertices. We keep all important vertices and their incident edges in the cut sparsifier. That is, important vertices are not contracted with any other vertices.

According to \Cref{lem:contribution}, every non-important vertex $v$ contributes at most $\eta \cdot \mc_G(S)$ to any cut $\mc(S,T\setminus S)$. We handle them next.

\subsubsection*{Step 2. Contracting non-important vertices based on their sparsified stars}

Recall that for each non-important non-terminal $v$, we have constructed in \Cref{subsec: sparsified star} a sparsified star $H_v$ mimicking the original star $G_v$ in $G$. Let $\pi_v$ be the profile of vertex $v$ in graph $H_v$. For each profile $\pi$, we contract all non-important vertices $v$ with $\pi_v=\pi$ into a single node. This completes the construction of the contraction-based sparsifier. Denote the resulting graph by $G'$.

Clearly, $G'$ is obtained from $G$ by contractions only, so it is indeed a contraction-based sparsifier. The next observation shows that it contains at most 
$k^{O(1/\eps^2)}\cdot (1/\eps)^{O(1/\eps^4)}$ vertices.

\begin{observation}
$|V(G')|\le k^{O(1/\eps^2)}\cdot (1/\eps)^{O(1/\eps^4)}$.
\end{observation}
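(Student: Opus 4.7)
I will partition $V(G')$ into three groups and bound each separately: (i) the $k$ terminals, which are never contracted; (ii) the supernodes originating from important vertices, of which there are at most $O(k^3/\eps^4)$ by the counting in Step~1; and (iii) the supernodes obtained by contracting non-important vertices according to their profiles $\pi_v$ in the sparsified stars $H_v$. Both (i) and (ii) are clearly absorbed in the target bound, so the whole task reduces to bounding the number of distinct profiles $\pi_v$ attained by non-important vertices.

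The crucial structural observation will be that $H_v$ is a star centered at $v$ with at most $c=\lceil 100/\eps^2\rceil$ edges, each going to some terminal. Writing $N_v\subseteq T$ for the (at most $c$-element) neighborhood of $v$ in $H_v$, the profile $\pi_v$ is fully determined by $N_v$ together with the ``internal profile'' of $v$ restricted to $N_v$: for every $S\subseteq T$, whether $v$ lies on the $S$-side in the cut induced by $H_v$ depends only on the comparison between $\sum_{t\in S\cap N_v}w(v,t)$ and $\sum_{t\in N_v\setminus S}w(v,t)$, which depends only on $S\cap N_v$. Thus the map $\pi_v\mapsto (N_v,\pi_v|_{N_v})$ is injective, and it suffices to bound the total number of such pairs.

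I will then count pairs. The number of choices for $N_v$ is at most $\sum_{i\le c}\binom{k}{i}\le k^{O(c)}=k^{O(1/\eps^2)}$. For each fixed $N_v$ of size at most $c$, consider the sub-quasi-bipartite graph consisting of $v$ and its star on $N_v$ (which we may regard as a quasi-bipartite graph with terminal set $N_v$ of size at most $c$); applying \Cref{quasi_profile} to this auxiliary graph bounds the number of internal profiles by $c^{O(c^2)}=(1/\eps^2)^{O(1/\eps^4)}=(1/\eps)^{O(1/\eps^4)}$. Multiplying, the total number of profiles $\pi_v$ is at most
\[
k^{O(1/\eps^2)}\cdot (1/\eps)^{O(1/\eps^4)},
\]
which dominates $k+O(k^3/\eps^4)$ and gives the claimed bound on $|V(G')|$.

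The only subtle point, which I would be careful about when writing the formal proof, is the reduction in the previous paragraph: $\pi_v$ is a priori a vector indexed by all non-trivial subsets of the full terminal set $T$, whereas \Cref{quasi_profile} bounds profiles relative to the terminal set of the graph to which it is applied. The injectivity of $\pi_v\mapsto (N_v,\pi_v|_{N_v})$ is precisely what lets one replace $T$ by $N_v$ without loss, so that \Cref{quasi_profile} can be invoked with $k$ replaced by $|N_v|\le c$. I do not expect any other step to pose difficulty; both the counting of supports and the application of \Cref{quasi_profile} are routine once this reduction is in place.
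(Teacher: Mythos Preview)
Your proposal is correct and follows essentially the same approach as the paper: bound the important vertices by $O(k^3/\eps^4)$, then bound the number of profiles of non-important vertices by (choices of the at-most-$c$-element support $N_v$) $\times$ (profiles on a $c$-terminal quasi-bipartite star via \Cref{quasi_profile}/\Cref{quasi_1}), yielding $k^{O(1/\eps^2)}\cdot (1/\eps)^{O(1/\eps^4)}$. Your explicit articulation of the injectivity $\pi_v\mapsto(N_v,\pi_v|_{N_v})$ is exactly the right justification for why the full profile over $T$ collapses to the profile over $N_v$, and the paper's proof leaves this step implicit.
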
 
\begin{proof}
We have shown that the number of important vertices is $O(k^3/\eps^4)$. From \Cref{quasi_1}, the number of different profiles of non-important vertices is at most $\binom{k}{c}\cdot c^{O(c^2)}\le k^{O(1/\eps^2)}\cdot (1/\eps)^{O(1/\eps^4)}$.
\end{proof}

In the rest of this section, we prove that $G'$ is indeed a quality-$(1+O(\eps))$ cut sparsifier.

Fix a subset $S\subseteq T$ and we analyze the difference between $\mc_G(S)$ and $\mc_{G'}(S)$.
For a non-terminal $v$, let $x_v$ and $x'_v$ be the contribution of the vertex $v$ for the cut $S$ in $G$ and $G'$ respectively, so $\mc_G(S) = \sum_v x_v$. 

\begin{observation}
If $v$ is an important vertex, then $x'_v=x_v$. If $v$ is a non-important vertex, then $x'_v$ is either $x_v$ or $w(v)-x_v$. Moreover, $\ex{x'_v} \le (1+\eps)x_v$ and $x_v \le \eta \cdot \mc_G(S)$.
\end{observation}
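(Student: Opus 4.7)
The plan is to verify each of the four claims by chasing through the construction in Steps~1 and~2 while appealing to \Cref{lem:contribution} and \Cref{clm: difference in contribution}.

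For an important vertex $v$, Step~1 leaves the star $G_v$ intact in $G'$ and does not contract $v$ with any other vertex, so the supernode containing $v$ is $\set{v}$. Since $G$ is quasi-bipartite, $v$'s only neighbors are terminals, and the min-cut of $G'$ optimally places $\set{v}$ on whichever side minimizes $\min\set{w_S(v),w(v)-w_S(v)}$, which is exactly the optimization that defines $x_v$ in $G$; hence $x'_v = x_v$.

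For a non-important $v$, quasi-bipartiteness again implies that the supernode $u$ containing $v$ has only terminal neighbors, so $u$ lies entirely on one side of $(S,T\setminus S)$ in the min-cut of $G'$. The contribution of $v$'s original edges to this cut is therefore either $w_{T\setminus S}(v)$ or $w_S(v)$; since $x_v=\min\set{w_S(v),w_{T\setminus S}(v)}$, this value is either $x_v$ or $w(v)-x_v$. The bound $x_v\le \eta\cdot\mc_G(S)$ follows from \Cref{lem:contribution}: it produces a special cut $\tilde S$ with $\min\set{w_{\tilde S}(v),w(v)-w_{\tilde S}(v)}\ge (x_v\cdot\mc_G(\tilde S))/(k\cdot\mc_G(S))$, while non-importance of $v$ bounds the left-hand side by $(\eta/k)\cdot\mc_G(\tilde S)$; rearranging yields the bound.

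The main step is the expected-value bound. I plan to upper-bound $x'_v$ by the contribution under the \emph{profile-based} supernode assignment, which places each supernode with profile $\pi$ on the $S$ side iff $\pi_S=1$; under this assignment, $v$ contributes $x_v$ when its profile agrees with $G$'s min-cut for $v$, and $w(v)-x_v$ otherwise. WLOG assume $w_S(v)\ge w_{T\setminus S}(v)$, so $x_v=w_{T\setminus S}(v)$; then the profile disagrees precisely when the sampled star satisfies $w'_{T\setminus S}(v)>w(v)/2$. Applying \Cref{clm: difference in contribution} with $T\setminus S$ playing the role of $S$ (which is valid because $w_{T\setminus S}(v)<w(v)/2$) yields $(w(v)-2x_v)\cdot\pr{\text{profile disagrees}}\le \eps\cdot x_v$, and hence the expected profile-based contribution is at most $x_v+\eps\cdot x_v=(1+\eps)x_v$. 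The subtle point I expect to confront is that the true $x'_v$ in the min-cut of $G'$ is determined by a joint side-assignment of the whole supernode containing $v$ rather than by $v$'s profile alone, so strictly speaking the per-vertex expectation bound above applies to the profile-based contribution; since $\mc_{G'}(S)$ is the minimum over all supernode side-assignments, this still suffices aggregately to yield $\ex{\mc_{G'}(S)}\le(1+\eps)\mc_G(S)$, which is what the downstream analysis will need.
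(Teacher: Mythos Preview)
Your proposal is correct and follows the same route as the paper: quasi-bipartiteness forces each (super)node onto one side, important vertices are singletons so $x'_v=x_v$, the non-important case gives $x'_v\in\{x_v,w(v)-x_v\}$, the expectation bound comes directly from \Cref{clm: difference in contribution}, and the bound $x_v\le\eta\cdot\mc_G(S)$ follows from \Cref{lem:contribution} exactly as you derive it.

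One remark on the subtlety you flag. The paper, despite its phrasing ``contribution of $v$ in $G'$'', actually \emph{defines} $x'_v$ via the profile $\pi_v$ of $H_v$: it sets $x'_v=x_v$ when $w_S(v)$ and $w'_S(v)$ are on the same side of $w(v)/2$, and $x'_v=w(v)-x_v$ otherwise. This is precisely your ``profile-based'' quantity, not the contribution under the true min-cut of $G'$. This matters because the subsequent concentration claims (Hoeffding for $V_1$, the Chernoff-style argument for $V_2$) require the $x'_v$ to be mutually independent, which holds for the profile-based definition but not for the min-cut-based one. Your observation that $\sum_v x'_v^{\text{profile}}\ge \mc_{G'}(S)$ is exactly the bridge needed, and the paper uses it implicitly. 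So you have correctly identified and resolved an ambiguity the paper glosses over; just be aware that the ``aggregate'' justification you give for the expectation is not enough downstream---you need the per-vertex, profile-based $x'_v$ to run the concentration arguments, and that is what the paper (and you) are really working with.
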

\begin{proof}
Since we have kept all important vertices in Step 1, their contribution to every terminal min-cut does not change.
For a non-important vertex $v$, since we constructed a sparsified star $H_v$ for $G_v$, and decide the side of $v$ for each terminal cut using the profile of $v$ in $H_v$ instead of $G_v$. 
Therefore, when $w_S(v)\le w(v)/2$ and $w'_S(v)\le w(v)/2$, then $x'_v=x_v$, otherwise $x'_v=w(v)-x_v$, and the difference is $x'_v-x_v=w(v)-2x_v$.
We have proved in \Cref{clm: difference in contribution} that for a subset $S\subseteq T$ with $w_S(v)<w(v)/2$, $\big(w(v)-2\cdot w_S(v)\big) \cdot \pr{w'_S(v)>w(v)/2} \le \eps \cdot w_S(v)$, and so $\ex{x'_v} \le (1+\eps)x_v$. Finally, from our discussion in Step 1, $x_v \le \eta \cdot \mc_G(S)$. 
\end{proof}

We classify all non-important vertices into two types. 
Let $V_1$ be the set that contains all vertices $v$ such that $x_v/w(v) \ge \eps^2/k^2$, and let $V_2$ contain the rest of non-important vertices. 
The proof is completed by the following claims.

\begin{claim}
$\pr{\sum_{v \in V_1} (x'_v - \ex{x'_v}) \le \eps\cdot \mc_G(S)} \le 2^{-2k}$.
\end{claim}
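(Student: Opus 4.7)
The plan is to bound the claimed tail probability by showing that $Y := \sum_{v\in V_1}(x'_v - \ex{x'_v})$ is a sum of mutually independent, bounded, mean-zero random variables that concentrates sharply around $0$, by a Chernoff/Bernstein-type inequality whose parameters are controlled by the defining condition of $V_1$ (namely $x_v/w(v) \ge \eps^2/k^2$) together with the non-importance bound $x_v \le \eta\cdot\mc_G(S)$ established in Step 1.

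First I would observe that since the sparsified stars $\{H_v\}_{v}$ of \Cref{subsec: sparsified star} are built by independent samplings, the centered random variables $\{X_v := x'_v - \ex{x'_v}\}_{v\in V_1}$ are mutually independent. Each $x'_v$ is two-valued: it equals $x_v$ on the ``matched'' event that the side of $v$ selected by $H_v$ for the cut $(S,T\setminus S)$ agrees with the side of $v$ in $G$'s actual min-cut, and equals $w(v)-x_v$ otherwise. Hence $|X_v| \le w(v) - 2x_v \le w(v)$, and if we write $p_v$ for the ``mismatched'' probability, \Cref{clm: difference in contribution} gives $p_v(w(v)-2x_v) \le \eps\, x_v$.

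Next I would control the per-term range and total variance. The $V_1$ condition gives $w(v) \le (k^2/\eps^2)\, x_v$, while non-importance gives $x_v \le \eta\,\mc_G(S) = (\eps^4/(1000k))\,\mc_G(S)$. Combining,
\[
\mathrm{Var}(X_v) \;\le\; p_v(w(v)-2x_v)^2 \;\le\; \eps\, x_v\, w(v) \;\le\; \tfrac{k^2}{\eps}\, x_v^2,
\]
so summing over $V_1$ and using $\sum_{v\in V_1} x_v \le \mc_G(S)$:
\[
\sigma^2 := \sum_{v\in V_1}\mathrm{Var}(X_v) \;\le\; \tfrac{k^2\eta}{\eps}\, \mc_G(S)^2 \;\le\; \tfrac{k\eps^3}{1000}\,\mc_G(S)^2, \qquad |X_v| \le M := \tfrac{k\eps^2}{1000}\, \mc_G(S).
\]

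Finally I would plug these into a concentration inequality. Rewriting $Y = \sum_{v\in V_1}(w(v)-2x_v)(B_v - p_v)$ with $B_v \sim \mathrm{Bern}(p_v)$ mutually independent, I would rescale the weighted Bernoullis by $M$ and apply the multiplicative large-deviation form of \Cref{chernoff} to $\sum_{v\in V_1} (w(v)-2x_v) B_v / M$, whose mean is at most $\eps\,\mc_G(S)/M$. The main obstacle will be coaxing the resulting bound all the way down to $2^{-2k}$: a naïve Bernstein application only gives an exponent of order $1/(k\eps)$, which is far too weak, so I will instead invoke the sharper form $(e^\delta/(1+\delta)^{1+\delta})^{\mu/M}$ in the large-$\delta$ regime, exploiting the carefully calibrated constants $\eta = \eps^4/(1000k)$ and $c = \lceil 100/\eps^2\rceil$ so that the stated deviation level $\eps\,\mc_G(S)$ translates into $\delta = \Omega(k)$, which in turn produces the tail bound of $2^{-2k}$ claimed in the statement.
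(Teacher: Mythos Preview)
Your setup (independence of the $H_v$'s, the two-valued nature of $x'_v$, the bounds coming from $V_1$-membership and non-importance) is all correct, but the final concentration step has a genuine gap, and its root cause is that you are working with the threshold $x_v/w(v)\ge\eps^2/k^2$ as literally stated in the text. Observe that the paper's own proof of this claim, and of the subsequent $V_2$ claim, actually use the threshold $\gamma=\eps/10$: the line ``for any vertex $v$ in $V_1$, $x'_v$ is bounded by $x_v/\gamma$'' requires $w(v)\le x_v/\gamma$, i.e.\ $x_v/w(v)\ge\gamma$, and the $V_2$ proof opens with ``since for every $v\in V_2$, $x_v\le\gamma\cdot w(v)$''. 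So $\eps^2/k^2$ is a typo for $\gamma$.

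With the typo'd threshold, your range and variance bounds $M = (k\eps^2/1000)\,\mc_G(S)$ and $\sigma^2 \le (k\eps^3/1000)\,\mc_G(S)^2$ are correct, but your proposed Chernoff fix does not work. After rescaling by $M$, the target deviation reads $(1+\delta)\ex{Z}$ with $\delta\cdot\ex{Z}=1000/(k\eps)$; since you only have the \emph{upper} bound $\ex{Z}\le 1000/(k\eps)$, you only get $\delta\ge 1$, not $\delta=\Omega(k)$. In the worst case $\delta=\Theta(1)$ and the bound is $e^{-\Theta(1/(k\eps))}$ --- exactly the Bernstein obstruction you already noted, which the switch to the multiplicative form does not remove. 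Even when $\ex{Z}\to 0$ and $\delta\to\infty$, the bound is $e^{-\Theta((\ln\delta)/(k\eps))}$, and forcing this below $2^{-2k}$ would require $\delta\ge e^{\Theta(k^2\eps)}$, which you have no control over.

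With the intended threshold $\gamma=\eps/10$, the paper's proof is a one-line application of Hoeffding's inequality: $0\le x'_v\le w(v)\le x_v/\gamma$ gives
\[
\sum_{v\in V_1}(b_v-a_v)^2 \;\le\; \gamma^{-2}\sum_{v\in V_1} x_v^2 \;\le\; \frac{\eta}{\gamma^2}\,\mc_G(S)^2 \;=\; \frac{\eps^2}{10k}\,\mc_G(S)^2,
\]
and with $t=\eps\,\mc_G(S)$ one obtains $e^{-20k}\le 2^{-2k}$. No variance estimate and no appeal to \Cref{clm: difference in contribution} are needed here. For what it's worth, your Bernstein route would also succeed once you plug in the correct threshold (recomputing yields $\sigma^2\le 10\eta\,\mc_G(S)^2$ and $M\le (10\eta/\eps)\,\mc_G(S)$, hence an exponent of order $-k/\eps^2$), so the real defect is the threshold, not your overall strategy.
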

\begin{proof}
We use the following version of Hoeffding's inequality.
\begin{lemma} \label{hoeffding}
	Let $X_1,\dots,X_n$ be independent random variables such that $a_i \le X_i \le b_i$. Let $S_n = \sum_{1\le  i\le n}X_i$. Then for any $t>0$,
	$\pr{S_n - \ex{S_n} \ge t} \le e^{-\frac{2t^2}{\sum_{i=1}^n(b_i-a_i)^2}}.
	$
\end{lemma}

Let $\gamma = \frac{\eps}{10}$.
For any vertex $v$ in $V_1$, $x'_v$ is bounded by $x_v/\gamma$, so we can upper and lower bound $x'_v$ by $b_v = x_v/\gamma$ and $a_v=0$. Also note that 
$\sum_{v\in V_1} (b_v-a_v)^2 \le \frac{1}{\gamma^2} \sum_{v\in V_1} x^2_v.$
Since for any $v \in V_1$, $x_v \le \eta \cdot \mc_G(S)$, $\sum_{v\in V_1} x^2_v \le \eta \cdot(\mc_G(S))^2$, which means 
$$
\sum_{v\in V_1} (b_v-a_v)^2 \le \frac{\eta}{\gamma^2} \cdot (w(v))^2 \le \frac{\eps^2}{10k}\cdot (\mc_G(S))^2.
$$
From \Cref{hoeffding}, 
$$
\pr{\sum_{v \in V_1} (x'_v - \ex{x'_v}) \le \eps\cdot \mc_G(S)} \le e^{-\frac{2\eps^2 (\mc_G(S))^2}{\frac{\eps^2}{10k}\cdot(\mc_G(S))^2}} \le 2^{-2k}.
$$
\end{proof}

\begin{claim}
$\pr{\sum_{v\in V_2} x'_v > \sum_{v\in V_2} x_v + 2\eps\cdot \mc_G(S)} <e^{-2k}$.
\end{claim}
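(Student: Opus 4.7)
For every $v \in V_2$, assume WLOG that $x_v = w_S(v)$ (so $w_S(v) \le w(v)/2$); then $w_S(v)/w(v) < \eps^2/k^2 \le 1/(2e)$ and Claim~\ref{prop:v2} applies, giving the ``flip'' probability bound
\[
\pr{B_v} \;\le\; e^{-w(v)/(2x_v)} \;\le\; e^{-k^2/(2\eps^2)},
\]
where $B_v$ is the event $w'_S(v) > w(v)/2$. My starting point will be the deterministic pointwise inequality $x'_v - x_v \le \sigma_v Z_v$, where $\sigma_v := +1$ if $w_S(v) \le w(v)/2$ and $\sigma_v := -1$ otherwise (so $\sigma_v$ depends only on $G$ and $S$), and $Z_v := w'_S(v) - w_S(v)$ is mean-zero. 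This is immediate from $x'_v = \min\{w'_S(v), w(v)-w'_S(v)\}$ by taking whichever upper bound matches the form of $x_v$. Hence it suffices to show
\[
\pr{\sum_{v\in V_2}\sigma_v Z_v > 2\eps \cdot \mc_G(S)} \;<\; e^{-2k}.
\]

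Each $Z_v$ is a centered sum of $c$ independent bounded samples from the construction of $H_v$; the $Z_v$'s are independent across $v$, and satisfy $\mathrm{Var}(Z_v) \le w(v)x_v/c$ with per-sample magnitude at most $w(v)/c$. To handle the wide range of $w(v)$ within $V_2$, I plan to dyadically bucketize: let $V_2^{(j)} := \{v \in V_2 : 2^j \le w(v) < 2^{j+1}\}$. Within each bucket, Bernstein's inequality controls $\sum_{v \in V_2^{(j)}}\sigma_v Z_v$ with variance at most $2^{j+1}\mc_G(S)/c$ (via $\sum_{V_2^{(j)}} x_v \le \mc_G(S)$) and maximum summand at most $2^{j+1}/c$. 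Choosing a geometric deviation budget $t_j$ per bucket with $\sum_j t_j \le 2\eps\cdot\mc_G(S)$ and each bucket's failure probability at most a geometrically-decreasing fraction of $e^{-2k}$, a union bound over $j$ finishes the proof.

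The main obstacle is the uncontrolled range of $w(v)$ across $V_2$: a direct Hoeffding/Bernstein on the entire sum depends on $\max_v w(v)$, which is unbounded a priori. Dyadic bucketization decouples scales and resolves this. As an auxiliary sanity check for the high-$j$ buckets, the expected ``flip'' contribution $\sum_v (w(v)-2x_v)\mathbf{1}_{B_v}$ is at most $(k^2/\eps^2)\cdot e^{-k^2/(2\eps^2)}\cdot \mc_G(S)$, using that the map $t \mapsto t\cdot e^{-t/(2x_v)}$ is decreasing for $t \ge 2x_v$ together with $w(v) \ge (k^2/\eps^2) x_v$ for $v \in V_2$; a Markov bound then shows this part is below $\eps \cdot \mc_G(S)$ except with probability $\ll e^{-2k}$, so the flip regime is negligible and the residual bounded-scale fluctuations are precisely what the bucketed Bernstein handles.
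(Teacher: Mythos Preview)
Your central premise is incorrect: $x'_v$ is \emph{not} $\min\{w'_S(v),\, w(v)-w'_S(v)\}$. The sparsifier $G'$ is a contraction of $G$, so every edge of $G'$ carries its original $G$-weight; the sampled star $H_v$ is used only to decide on which side the supernode containing $v$ sits. Hence $x'_v \in \{w_S(v),\, w(v)-w_S(v)\}$, exactly as the preceding observation states (``$x'_v$ is either $x_v$ or $w(v)-x_v$''), and in fact $x'_v - x_v = (w(v)-2x_v)\,\mathbf{1}_{B_v}$ identically. Your pointwise inequality $x'_v - x_v \le \sigma_v Z_v$ then fails: whenever $w'_S(v) < w_S(v) \le w(v)/2$ we have $x'_v - x_v = 0$ but $\sigma_v Z_v = w'_S(v) - w_S(v) < 0$. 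Consequently the bucketed Bernstein argument on $\sum_v \sigma_v Z_v$ does not control $\sum_{V_2}(x'_v - x_v)$.

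Your ``auxiliary sanity check'' is actually the whole story: since $x'_v - x_v$ equals the flip term and nothing else, there is no residual fluctuation to hand off to Bernstein. Your expectation bound $\sum_{V_2}\ex{x'_v-x_v} \le (k^2/\eps^2)\,e^{-k^2/(2\eps^2)}\,\mc_G(S)$ is correct, and Markov's inequality already yields a tail that is $o(e^{-2k})$ over the relevant parameter range, though not quite uniformly for all $k,\eps$. The paper closes this gap by bounding the moment generating function directly: with $t=2k/(\eps\,\mc_G(S))$ it shows $\ex{e^{tx'_v}} \le e^{(1+\eps)tx_v}$ for each $v\in V_2$, using only $\Pr[B_v]\le e^{-w(v)/(2x_v)}$ together with $x_v\le\eta\,\mc_G(S)$ (which forces $tw(v)\le w(v)/(4x_v)$), and then applies Markov to $e^{t\sum x'_v}$. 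This handles the unbounded range of $w(v)$ without any bucketing, and the $e^{-2k}$ tail falls out cleanly.
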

\begin{proof}
We use the approach for proving the Chernoff bound, i.e. applying Markov's inequality on the moment generate function of $\sum_{v \in V_2} x'_v$. Since for every $v \in V_2$, $x_v \le \gamma \cdot w(v)$, from \Cref{prop:v2}, $\pr{x'_v = w(v)-x_v} < e^{-\frac{w(v)}{2x_v}}$. For any $t>0$, 
\begin{align*}
\ex{e^{tx'v}} < e^{tx_v} + e^{-\frac{w(v)}{2x_v}}(e^{tw(v)}-e^{tx_v}) < e^{tx_v}(1+e^{-\frac{w(v)}{2x_v}}(e^{tw(v)}-1)) < e^{tx_v + e^{-\frac{w(v)}{2x_v}}(e^{tw(v)}-1)}.
\end{align*}

Let $t = \frac{2k}{\eps \cdot\mc_G(S)}$. Since $x_v \le \eta \cdot\mc_G(S)$, $t \le \frac{2 \eta k}{\eps x_v}$, which means $tw(v) \le \frac{2\eta k w(v)}{\eps x_v} \le \frac{w(v)}{4x_v}$. This means $e^{-\frac{w(v)}{2x_v}}(e^{tw(v)}-1)< e^{-\frac{w(v)}{4x_v}}$. Since $\frac{w(v)}{x_v} \ge (1/\gamma)$ and $\gamma=\frac{\eps}{10}$, we have $\frac{w(v)}{x_v}>4( \log (w(v))-\log(x_v) + \log (1/\eps))$. So $e^{-\frac{w(v)}{4x_v}}<\frac{\eps x_v}{w(v)}$. Therefore, if $tw(v) > 1$,
$$
\ex{e^{tx'_v}}<e^{tx_v + e^{-\frac{w(v)}{2x_v}}(e^{tw(v)}-1)}<e^{tx_v + \eps x_v/w(v)}<e^{(1+\eps)tx_v}.
$$
On the other hand, if $tw(v) \le 1$, $e^{tw(v)-1} < 2tw(v)$. Since $\frac{w(v)}{x_v} \ge (1/\gamma)$ and $\gamma=\frac{\eps}{10}$, we have $\frac{w(v)}{2x_v} > \log w(v) - \log x_v + \log (1/\eps)$, which means $e^{-\frac{w(v)}{2x_v}}<\eps\cdot \frac{x_v}{w(v)}$. Therefore,
$$
\ex{e^{tx'_v}}<e^{tx_v + e^{-\frac{w(v)}{2x_v}}(e^{tw(v)}-1)}<e^{tx_v + \eps\frac{x_v}{w(v)} \cdot tw(v)} < e^{(1+\eps)tx_v}.
$$

Therefore in either case, we have $\ex{e^{tx'_v}}<e^{(1+\eps)tx_v}$, and so
$$\ex{e^{t\sum_{v\in V_2} x'_v}} < e^{(1+\eps)t\sum_{v\in V_2} x_v}.$$ 
Therefore,
\[
\begin{split}
\pr{\sum_{v\in V_2} x'_v > \sum_{v\in V_2} x_v + 2\eps \cdot\mc_G(S)} & < e^{(1+\eps)t\sum_{v\in V_2} x_v-t(\sum_{v \in V_2} x_v + 2\eps \cdot\mc_G(S))}\\
& < e^{-\eps t \cdot\mc_G(S)}<e^{-2k}.
\end{split}
\]
\end{proof}

Therefore, $\pr{\sum_v x'_v > (1+3\eps)\cdot\mc_G(S)} < 2^{-2k+1}$. Taking the union bound over all subsets $S\subseteq T$, with probability at least $1-2^{k-1}$, $G'$ is indeed a quality-$(1+3\eps)$ cut sparsifier.

\section{Proof of \Cref{main: lower}}
\label{sec: lower}

In this section we prove \Cref{main: lower}, showing that for every $k\ge 1$ and $\eps>0$, there exists a quasi-bipartite graph with $k$ terminals, whose quality-$(1+\eps)$ contraction-based cut sparsifier must contain $k^{\Omega\big(\frac{1}{\eps \log(1/\eps)}\big)}$ vertices. We will make use of the following intermediate problem, which is a variant of the $0$-Extension problem.

\newcommand{\bhc}{\textnormal{\textsf{BHC}}}

\paragraph{Boolean Hypercube Contraction ($\bhc$).}
In an instance of $\bhc$, the input consists of
\begin{itemize}
	\item a graph $G=(V,E)$ with $V=\set{0,1}^d$, namely each vertex is a $d$-dimensional $0/1$-string;
	\item a set $T\subseteq V$ of $k$ terminals, such that every edge is incident to some terminal.
\end{itemize}
A solution is a mapping $f:V\to V$, such that for each terminal $t\in T$, $f(t)=t$. For solution $f$, its
\begin{itemize}
\item \emph{cost} is defined as $\vol(f)=\sum_{(u,v)\in E}||f(u)-f(v)||_1$; and
\item \emph{size} is defined as $|\set{f(v)\mid v\in V}|$, namely the size of its image set.
\end{itemize}

The goal is to compute a solution $f$ with small size and cost. Specifically, the following ratio called \emph{stretch}, is to be minimized:
\[\rho=\frac{\sum_{(u,v)\in E}||f(u)-f(v)||_1}{\sum_{(u,v)\in E}||u-v||_1}.\]

In fact, the $\bhc$ problem is the special case of the $0$-Extension problem first introduced by Karzanov \cite{karzanov1998minimum}, as a generalization of the multi-way cut problem and a special case of the metric labeling problem.
The $0$-Extension problem and some of its variants were shown to be concretely connected to cut/flow sparsifiers \cite{moitra2009approximation,leighton2010extensions,andoni2014towards,chen2024lower,chen20241+}. Specifically, $\bhc$ focuses on searching for ``canonical solutions'' of the \emph{$0$-Extension with Steiner Nodes} problem studied \cite{chen2024lower}, and moreover, instead of allowing the input graph to be any general graph, $\bhc$ only consider instances on the boolean hypercube.

The proof of \Cref{main: lower} consists of the following two lemmas.

\begin{lemma}
If for every $k\ge 1$ and $q\ge 1$, every $k$-terminal quasi-bipartite graph admits a quality-$q$ contraction-based cut sparsifier of size $s(k,q)$, then every $k$-terminal quasi-bipartite instance of $\bhc$ admits a solution of stretch $q$ and size $s(k,q)$.
\end{lemma}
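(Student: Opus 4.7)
The plan is to convert the hypothesized contraction-based cut sparsifier of the underlying quasi-bipartite graph into a $\bhc$-mapping of size $s(k,q)$, defining the mapping one coordinate at a time so that the $L_1$ cost decomposes into $d$ cut values that are each controlled by the cut-sparsifier guarantee.

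Fix a quasi-bipartite $\bhc$ instance $(G,T)$ with $V(G)=\{0,1\}^d$ and $|T|=k$. Apply the hypothesis to obtain a quality-$q$ contraction-based cut sparsifier $H$ of $G$ on at most $s(k,q)$ vertices, together with the associated partition $\mathcal{L}$ of $V(G)$ whose parts are in bijection with the vertices of $H$. For each coordinate $i \in [d]$, let $T_0^i = \{t \in T : t_i = 0\}$ and $T_1^i = T\setminus T_0^i$; fix a minimum cut in $H$ separating $T_0^i$ from $T_1^i$, which induces a bipartition $(\mathcal{L}_0^i, \mathcal{L}_1^i)$ of $\mathcal{L}$ such that every $t \in T_b^i$ lies in a part belonging to $\mathcal{L}_b^i$. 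Define $f : V \to \{0,1\}^d$ by setting $f(v)_i = b$ iff $L(v) \in \mathcal{L}_b^i$, where $L(v)$ is the part of $\mathcal{L}$ containing $v$. By the choice of sides for terminals, $f(t) = t$ for every $t \in T$, so $f$ is a valid $\bhc$-mapping; moreover, $f$ is constant on each part of $\mathcal{L}$, so $|f(V)| \leq |\mathcal{L}| \leq s(k,q)$.

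It remains to bound the stretch. The coordinatewise identity
\[
\sum_{(u,v) \in E} ||f(u) - f(v)||_1 = \sum_{i=1}^d \bigl|\{(u,v) \in E : f(u)_i \neq f(v)_i\}\bigr|
\]
reduces the cost of $f$ to a sum over coordinates of bit-cut sizes. For each $i$, the right-hand set is exactly the set of $G$-edges crossing the $(\mathcal{L}_0^i, \mathcal{L}_1^i)$-cut, and since contraction keeps parallel edges, its cardinality equals $\mc_H(T_0^i, T_1^i)$. The quality-$q$ sparsifier property gives $\mc_H(T_0^i, T_1^i) \leq q \cdot \mc_G(T_0^i, T_1^i)$, and the coordinate-$i$ hyperplane bipartition $(\{v : v_i = 0\}, \{v : v_i = 1\})$ is a cut in $G$ separating $T_0^i$ from $T_1^i$ of value $c_i := |\{(u,v) \in E : u_i \neq v_i\}|$, so $\mc_G(T_0^i, T_1^i) \leq c_i$. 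Summing over $i$,
\[
\sum_{(u,v) \in E} ||f(u) - f(v)||_1 \;\leq\; q \sum_{i=1}^d c_i \;=\; q \sum_{(u,v) \in E} ||u-v||_1,
\]
which gives stretch at most $q$.

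The main subtlety will be correctly handling the direction of the cut-sparsifier inequality: since contraction-based sparsifiers can only weakly increase terminal min-cut values, the useful direction of the quality-$q$ bound here is $\mc_H \leq q \cdot \mc_G$. Beyond this, the argument is essentially a dualization: the $L_1$ norm on the Boolean hypercube decomposes into $d$ bit-cut functions, and a cut sparsifier supplies, for each of the $d$ canonical terminal bipartitions, a single cut in $H$ whose value upper-bounds the corresponding coordinate's contribution to the cost of $f$.
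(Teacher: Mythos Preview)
Your proposal is correct and follows essentially the same approach as the paper's proof: define $f$ coordinate by coordinate according to which side each contracted supernode falls on in the $(T_0^i,T_1^i)$ min-cut of $H$, observe that the $\ell_1$ cost of $f$ decomposes as $\sum_i \mc_H(T_0^i,T_1^i)$ while the denominator is lower bounded by $\sum_i \mc_G(T_0^i,T_1^i)$ via the coordinate hyperplane cuts, and then apply the quality bound $\mc_H \le q\cdot \mc_G$. Your explicit remark about which direction of the quality inequality is being invoked (the one that holds for contraction-based sparsifiers) is a welcome clarification that the paper leaves implicit.
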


\begin{proof}
Let $(G,T)$ be an instance of $\bhc$. Assume that it admits a quality-$q$ contraction-based cut sparsifier $H$ induced by a partition $\lset$ of $V$, so $H$ is obtained from $G$ by contracting each set $L\in \lset$ into a supernote $u_L$.
We use $H$ to construct a $\bhc$ solution of this instance as follows.

Recall that every vertex in $V(G)$ is a $d$-dimensional $0/1$ string. For each index $1\le i\le d$, define $T^i_0$ as the set of terminals $t$ whose $i$-th coordinate is $0$, and define $T^i_1$ as the set of terminals $t$ whose $i$-th coordinate is $1$, so $(T^i_0,T^i_1)$ is a partition of $T$. For a node $u_L$, we map it to a $d$-dimensional $0/1$ string $g(u_L)$ as follows. The $i$-th bit of $g(u_L)$ is $0$ iff $u_L$ lies on the side of $T^i_0$ in the min-cut in $H$ separating $T^i_0$ from $T^i_1$, and the $i$-th bit of $g(u_L)$ is $1$ iff $u_L$ lies on the side of $T^i_1$ in this min-cut.
Now for every vertex $v\in V(G)$, if it belongs to a set $L(v)\in \lset$, then we map it to $g(u_{L(v)})$, namely $f(v)=g(u_{L(v)})$. This completes the construction of $f$. Clearly $f$ is a mapping from $V$ to $V$, and the size of $f$ is exactly the size of $H$.

We now show that $f$ has stretch at most $q$.
For each $1\le i\le d$, we define $E_i$ as the set of all edges $(v,v')$ in $G$ where the $i$-th coordinate of $v$ is $0$ and the $i$-th coordinate of $v'$ is $1$. Clearly, $E_i$ is a cut in $G$ separating $T^i_0$ from $T^i_1$, so
\[
\sum_{1\le i\le d}\mc_G(T^i_0,T^i_1)\le \sum_{1\le i\le d}|E_i|\le \sum_{(v,v')\in E(G)}\sum_{1\le  i\le d}\mathbf{1}[(v,v')\in E_i]=\sum_{(v,v')\in E(G)}||v-v'||_1.
\]
On the other hand, from our construction of $f$, for each $1\le i\le d$, the min-cut in $H$ separating $T^i_0$ from $T^i_1$ is exactly the set $E'_i$ of edges connecting a supernode $u_L$ whose mapped string $g(u_L)$ has $0$ at its $i$-th coordinate to a supernode $u_L$ whose mapped string $g(u_L)$ has $1$ at its $i$-th coordinate. Therefore, 
\[
\sum_{1\le i\le d}\mc_H(T^i_0,T^i_1)= \sum_{1\le i\le d}|E'_i|= \sum_{(v,v')\in E}\sum_{1\le  i\le d}\mathbf{1}[(f(v),f(v'))\in E'_i]=\sum_{(v,v')\in E}||f(v)-f(v')||_1.
\]
Since $H$ is a quality-$q$ cut sparsifier of $G$, the stretch of function $f$ is at most
\[
\frac{\sum_{(v,v')\in E(G)}||f(v)-f(v')||_1}{\sum_{(v,v')\in E(G)}||v-v'||_1}\le \frac{\sum_{1\le i\le d}\mc_H(T^i_0,T^i_1)}{\sum_{1\le i\le d}\mc_G(T^i_0,T^i_1)}\le \frac{\sum_{1\le i\le d} \mc_G(T^i_0,T^i_1)\cdot q}{\sum_{1\le i\le d}\mc_G(T^i_0,T^i_1)}=q.
\]
\end{proof}

\begin{lemma}
For every $k\ge 1$ and $\eps>0$, there exists a $k$-terminal quasi-bipartite instance of $\bhc$, such that any solution of stretch $(1+\eps)$ and must have size at least $k^{\Omega(1/(\eps\log (1/\eps)))}$.
\end{lemma}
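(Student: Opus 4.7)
The approach is to construct a specific quasi-bipartite $\bhc$ instance on the hypercube and then argue by a counting/charging argument that any low-stretch solution cannot collapse many non-terminals to the same image. The target is to exhibit $k^{\Omega(1/(\eps\log(1/\eps)))}$ pairwise ``incompatible'' non-terminals so that, under stretch at most $1+\eps$, essentially all of them must receive distinct images.

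\textbf{The hard instance.} Fix $r = \lceil c/(\eps\log(1/\eps))\rceil$ for a small constant $c>0$ and take the ambient dimension $d$ to be sufficiently large. Place the $k$ terminals $T$ at $k$ well-separated points of $\{0,1\}^d$; a convenient choice is a random $T$, which with high probability has all pairwise Hamming distances concentrated around $d/2$ and also has the property that for every $r$-subset $S\subseteq T$ and every coordinate $i$, the ``majority margin'' $\big||\{t\in S: t_i=0\}|-|\{t\in S: t_i=1\}|\big|$ is $\Theta(\sqrt{r})$ at $\Theta(d)$ coordinates. For every $r$-subset $S\subseteq T$, introduce a non-terminal $v_S$ joined by an edge to each $t\in S$ and placed at the coordinate-wise majority vector of $S$. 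The instance is quasi-bipartite because every edge is incident to a terminal, and it has $\binom{k}{r}=k^{\Theta(r)}=k^{\Theta(1/(\eps\log(1/\eps)))}$ non-terminals.

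\textbf{Cost accounting.} The baseline $\vol_0:=\sum_{(u,v)\in E}\|u-v\|_1$ decomposes over stars, and each star's contribution $C_S=\sum_{t\in S}\|v_S-t\|_1$ is the minimum possible star cost with leaves $S$ (since $v_S$ is the median). Thus for any solution $f$ the excess is $\Delta(f):=\sum_{S}\bigl(\sum_{t\in S}\|f(v_S)-t\|_1 - C_S\bigr)$, and stretch $(1+\eps)$ translates to $\Delta(f)\le \eps\cdot\vol_0$. The well-spread terminal structure ensures that, for each star $S$, flipping any single coordinate of $v_S$ increases $\sum_{t\in S}\|\cdot-t\|_1$ by an amount equal to twice the majority margin at that coordinate, which is $\Theta(\sqrt r)$ on most coordinates by the concentration above. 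Consequently, if $f(v_S)$ differs from the median of $S$ in $\ell$ coordinates (averaged appropriately), the excess on that star is $\Omega(\ell\sqrt r)$.

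\textbf{Charging argument.} Now suppose $f$ has image of size $m$. By pigeonhole, some image point $u$ is shared by at least $\binom{k}{r}/m$ distinct stars. Any two stars $v_{S_1},v_{S_2}$ assigned to a common image $u$ together incur excess $\Omega(|S_1\triangle S_2|\cdot\sqrt r)$, because the median of $S_1$ and the median of $S_2$ already disagree on $\Omega(|S_1\triangle S_2|)$ coordinates so $u$ can be aligned with at most one of them. Averaging over random pairs of $r$-subsets of $T$ (where the expected symmetric difference is $\Theta(r)$) gives an average per-star excess of $\Omega(r^{3/2})$ whenever any image bucket has size $\omega(1)$. Comparing this to the allowed total excess $\eps\cdot\vol_0 = O(\eps\cdot r\cdot\binom{k}{r}\cdot d)$ and solving for $m$ yields $m \ge k^{\Omega(r)} = k^{\Omega(1/(\eps\log(1/\eps)))}$, as claimed.

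\textbf{Main obstacle.} The delicate step is calibrating the parameters so that the exponent is $1/(\eps\log(1/\eps))$ and not merely $1/\eps$ (which would fall short by a $\log(1/\eps)$ factor). This requires exploiting the $\sqrt r$-scale of typical majority margins (rather than merely $\Omega(1)$) and hence the concentration of Hamming geometry for random terminal sets; equivalently, one must show that contracting two random $r$-stars is expensive not merely linearly in $r$ but super-linearly in an $\eps$-dependent way. Getting this constant right requires a careful second-moment analysis of the random placement and possibly a union bound over all $\binom{k}{r}^2$ pairs, which I expect to be the most technical part of the proof.
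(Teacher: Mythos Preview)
Your charging argument has two concrete errors that break it. First, the claim that $v_{S_1},v_{S_2}$ ``disagree on $\Omega(|S_1\triangle S_2|)$ coordinates'' conflates unrelated scales: the medians are $d$-bit strings, and for random terminals their Hamming distance is roughly $d\cdot\sqrt{|S_1\triangle S_2|/r}$, not $|S_1\triangle S_2|$. The ``$\Omega(r^{3/2})$ per-star excess'' is then dimensionally wrong (the correct quantity must scale with $d$), and the comparison to $\eps\cdot\vol_0=\Theta(\eps\, r d\,\binom{k}{r})$ no longer yields what you claim. Second, and more fundamentally, the per-pair charging fails outright: if $|S_1\triangle S_2|=2$ (a single swap), the two medians differ only at coordinates where $S_1$ has margin~$1$, so mapping both stars to $v_{S_1}$ incurs excess $\Theta(d/\sqrt r)$ against a combined baseline of $\Theta(dr)$, a ratio of $\Theta(r^{-3/2})\ll\eps$ for your choice $r=\Theta(1/(\eps\log(1/\eps)))$. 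Hence many pairs \emph{can} share an image within the stretch budget, and the conclusion that ``essentially all of them must receive distinct images'' is false. Any repair would need a genuine packing bound on how many stars can cluster around a single image point---a different and substantially harder argument than the pairwise one you sketched. (Also, your ``main obstacle'' paragraph has the direction reversed: an exponent of $1/\eps$ would be a \emph{stronger} lower bound than $1/(\eps\log(1/\eps))$, not one that ``falls short''.)

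For comparison, the paper sidesteps all of this with a rigid deterministic instance: terminals sit at Hamming weights $0$, $\eps d$, $(1{-}\eps)d$, $d$ (so $k\approx\binom{d}{\eps d}$), non-terminals are all weight-$d/2$ vectors, and each non-terminal's incident edges are arranged to pair up into exact $\ell_1$-geodesics of length $(1{-}\eps)d$ between terminals. A one-line volume count shows that $\le 2^{d/5}$ image points can lie within Hamming distance $d/100$ of at most $2^{2d/5}\ll\binom{d}{d/2}$ non-terminals, so at least $90\%$ of non-terminals move $\Omega(d)$; the geodesic structure then converts any such displacement directly into an $\Omega(\eps)$ relative excess on that star's edges. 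The $\log(1/\eps)$ in the final exponent arises only from inverting $k\approx(1/\eps)^{\eps d}$ to get $2^{d/5}=k^{\Omega(1/(\eps\log(1/\eps)))}$, not from any margin or second-moment analysis.
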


\begin{proof}
We construct the hard instance $(G,T)$ of $\bhc$  as follows. The vertex set of $G$ is $V=\set{0,1}^d$. The terminal set is $T=T_0\cup T_1\cup \set{t^*_0,t^*_1}$, where $t^*_0$ is the all-$0$ vector, $t^*_1$ is the all-$1$ vector, and
\[
T_0=\set{v\mid ||v||_1=\eps d}; \quad\quad T_1=\set{v\mid ||v||_1=(1-\eps) d}.
\]
So $k=|T|=2\cdot\binom{d}{\eps d}+2$. Define $V'=\set{v\mid ||v||_1= d/2}$. The edge set is defined as follows:
\[
E(G)=\bigcup_{v\in V'}\bigg(\bigg\{(v,t)\text{ }\bigg|\text{ } ||v-t||_1=(1/2-\eps)d\bigg\}\cup \bigg\{(v,t^*_0),(v,t^*_1)\bigg\}\bigg),
\]
(we assume that $\eps d$ and $d/2$ are integers for convenience; the proof also holds if we replace $\eps d$ by $\floor{\eps d}$ everywhere)
where the edges $(v,t^*_0),(v,t^*_1)$ have capacity $\binom{d/2}{\eps d}$, or equivalently we can simply add $\binom{d/2}{\eps d}$ parallel edges connecting $v$ to $t^*_0$ and add $\binom{d/2}{\eps d}$ parallel edges connecting $v$ to $t^*_1$. Therefore, every vertex $v\in V'$ has $4\cdot \binom{d/2}{\eps d}$ edges incident to it. We can pair these edges as follows: each edge connecting $v$ to a terminal $t\in T_0$ is paired with a copy of $(v,t^*_1)$, and each edge connecting $v$ to a terminal $t\in T_1$ is paired with a copy of $(v,t^*_0)$. Note that each pair of edges form a shortest path connecting either $t^*_0$ to a terminal in $T_1$ or $t^*_1$ to a terminal in $T_0$, with total $\ell_1$-length $(1-\eps)d$.

Clearly, every edge is incident to some terminal, and so $G$ is indeed a quasi-bipartite graph.

We now show that any solution $f$ to this instance of $\bhc$ with size at most $2^{d/5}$ must have stretch at least $(1+\Omega(\eps))$. Since
$k=2\cdot\binom{d}{\eps d}+2=O(e/\eps)^{\eps d}$, $2^{d/5}=k^{\Omega(1/(\eps\log (1/\eps)))}$.

For each vertex in the image set of $f$, the number of vertices in $V$ that is at distance at most $d/100$ away from it is at most $\sum_{i=0}^{d/100} \binom{d}{i}\le (100e)^{d/100}\le 2^{d/5}$.
As $|V'|=\binom{d}{d/2}=\Omega(2^d/\sqrt{d})>2^{d/2}$, the number of vertices $v\in V'$ with $||f(v)-v||_1\ge d/100$ is at least
$|V'|-2^{d/5}\cdot 2^{d/5}>0.9\cdot |V'|$.

For any such vertex $v$, either there are at least $d/200$ coordinates with value $1$ in $v$ and $0$ in $f(v)$, or there are at least $d/200$ coordinates with value $0$ in $v$ and $1$ in $f(v)$. Assume the former holds (the arguments for the latter is symmetric). 
If we randomly sample a terminal $t$ in $T_0$ that has an edge to $v$, then in expectation, there are at least $\eps d/100$ coordinate with value $1$ in and value $0$ in $f(v)$. Therefore, for such a vertex $v\in V'$, $$\sum_{t\in T: t\sim v}||t-f(v)||_1-\sum_{t\in T: t\sim v}||t-v||_1\ge 2\cdot \frac{\eps d}{100}\cdot 2\cdot \binom{d/2}{\eps d} \ge \Omega(\eps)\cdot \sum_{t\in T: t\sim v}||t-v||_1.$$
As a consequence, the stretch of $f$ is at least
\[
\begin{split}
\frac{\sum_{v\in V}\sum_{t\in T: t\sim v}||t-f(v)||_1}{\sum_{v\in V}\sum_{t\in T: t\sim v}||t-v||_1}
& =1+\frac{\sum_{v\in V}\big(\sum_{t\in T: t\sim v}||t-f(v)||_1-\sum_{t\in T: t\sim v}||t-v||_1\big)}{\sum_{v\in V}\sum_{t\in T: t\sim v}||t-v||_1}\\
& \ge 1+0.9\cdot \Omega(\eps) = 1+\Omega(\eps).
\end{split}
\]
This completes the proof.
\end{proof}

\appendix
\section{Missing Proofs}

\subsection{Proof of \Cref{lem: divide}}
\label{apd: Proof of lem: divide}

Graph $H'$ is constructed as follows. The vertex set is $V(H')=\bigcup_{G'\in \gset}V(H_G')$. The terminal set is the union of $T$ and the set of all vertices of $G$ that appear in more than one graphs in $\gset$. The edge set of $H'$ is simply the union of all edges $E(H_{G'})$. Note that $V(H')$, according to our definition, indeed contains all vertices in any graph $G'\in \gset$, so the $H'$ is well-defined, and it is easy to verify that $|V(H')|\le \sum_{G'\in \gset}|V(H_{G'})|$.

It remains to show that the graph $H'$ constructed above is a $(1+\eps)$ cut sparsifier of $G$ with respect to $T$. Let $(T^1,T^2)$ be a partition of $T$ into two subsets.

On the one hand, consider a min-cut $\hat E$ in $G$ separating $T^1$ from $T^2$, and let $(V^1,V^2)$ be the vertex partition of $V(G)$ formed by the cut $\hat E$, where $T^1\subseteq V^1$ and $T^2\subseteq V^2$. We construct a cut $E'$ in $H'$ with $w(E')\le (1+\eps)\cdot w(\hat E)$ as follows. 
For each $G'\in \gset$, define $T^1_{G'}=T_{G'}\cap V^1$ and $T^2_{G'}=T_{G'}\cap V^2$, and let $E'_{G'}$ be a min-cut in $H'_{G'}$ separating $T^1_{G'}$ from $T^2_{G'}$. 
Then we let $E'=\bigcup_{G'\in \gset}E'_{G'}$.

First we show that $w(E')\le (1+\eps)\cdot w(\hat E)$.
Recall that the edge sets $\set{E(G')\mid G'\in \gset}$ partition $E(G)$. For each $G'\in \gset$, denote $\hat E_{G'}=\hat E\cap E(G')$, so the sets $\set{\hat E_{G'} \mid G'\in \gset}$ partitions $\hat E$ and so $w(\hat E)=\sum_{G'\in \gset}w(\hat E_{G'})$.
Note that for each $G'\in \gset$, the set $\hat E_{G'}$ separates $T^1_{G'}$ from $T^2_{G'}$, and by definition we have $w(E'_{G'})\le (1+\eps) w(\hat E_{G'})$. Altogether, $w(E')\le (1+\eps)\cdot w(\hat E)$.

Second we show that $E'$ indeed separates $T^1$ from $T^2$ in $H'$. For each $G'\in \gset$, denote by $(V^1_{G'},V^2_{G'})$ the partition formed by the cut $E'_{G'}$, where $T^1_{G'}\subseteq V^1_{G'}$ and $T^2_{G'}\subseteq V^2_{G'}$. We let $V^1_{H'}=\bigcup_{G'\in \gset}V^1_{G'}$ and $V^2_{H'}=\bigcup_{G'\in \gset}V^2_{G'}$.
It is easy to verify that (i) $V^1_{H'}$ and $V^2_{H'}$ partition $V(H')$; (ii) $T^1\subseteq V^1_{H'}$; (iii) $T^2\subseteq V^2_{H'}$; and (iv) all edges in $E_{H'}(V^1_{H'},V^2_{H'})$ lies in $E'$, so $E'$ indeed separates $T^1$ from $T^2$ in $H'$.

On the other hand, consider a min-cut $E'$ in $H'$ separating $T^1$ from $T^2$, and let $(V_{H'}^1,V_{H'}^2)$ be the corresponding vertex partition of $V(H')$, where $T^1\subseteq V^1$ and $T^2\subseteq V^2$. 
We construct a cut $\hat E$ in $G$ with $w(\hat E)\le (1+\eps)\cdot w(E')$ as follows. 
For each $G'\in \gset$, define $T^1_{G'}=T_{G'}\cap V^1_{H'}$ and $T^2_{G'}=T_{G'}\cap V^2_{H'}$, and let $\hat E_{G'}$ be a min-cut in $G'$ separating $T^1_{G'}$ from $T^2_{G'}$. 
Then we let $\hat E=\bigcup_{G'\in \gset}\hat E_{G'}$.
The arguments for showing that $w(\hat E)\le (1+\eps)\cdot w(E')$ and that $\hat E$ indeed separates $T^1$ from $T^2$ in $G$ are symmetric.

\subsection{Proof of \Cref{lem: pattern cover}}
\label{apd: Proof of lem: pattern cover}

Let $P$ be the $\Phi$-pattern shortest path connecting $v$ to any vertex on $R$. Let $z$ be the $R$-endpoint of $P$, and let $\ell$ be the total length of $P$.
Vertex $z$ separates $R$ into two subpaths, that we call the \emph{forward} subpath from $z$ to the one endpoint of $R$ and the \emph{backward} subpath from $z$ to the other endpoint of $R$.
We now iteratively computes a set $C_F$ of vertices in the forward subpath as follows.
Initially set $C_F=\emptyset$ and a variable vertex $u=z$. In each iteration, we find the first vertex $u'$ lying in the forward direction from $u$, such that 
$\dist^{\Phi}(v,u)+\dist(u,u')> (1+\eps)\cdot \dist^{\Phi}(v,u')$. We add $u$ to $C_F$ and update $u\leftarrow u'$. We continue until we cannot find any such $u'$ in the forward path and obtain the resulting $C_F$.
We then computes a set $C_B$ of vertices in the back subpath in the same way, and eventually returns $C(v,R,\Phi)=C_F\cup C_B$ as the $\Phi$-respecting $\eps$-cover.
Clearly, by definition we indeed get a $\Phi$-respecting $\eps$-cover of $v$ in $R$. The next claim shows that $|C(v,R,\Phi)|=O(1/\eps)$.

\begin{claim}
The number of iterations  is $O(1/\eps)$.
\end{claim}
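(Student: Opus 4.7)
The plan is to track the pair $(d_i, \delta_i) := (\dist^{\Phi}(v, u_i),\ \dist(z, u_i))$ across iterations, where $\delta_i$ is measured along the shortest path $R$ and $u_0 = z, u_1, u_2, \ldots$ denotes the sequence of vertices visited by the algorithm. Two lower bounds on $d_i$ drive the argument: (i) $d_i \ge \ell$, since $z$ was chosen to minimize $\dist^{\Phi}(v, \cdot)$ over $R$; and (ii) $d_i \ge \delta_i - \ell$. To establish (ii), I would fix a $\Phi$-pattern shortest $v$-$u_i$ path $Q_i$ of length $d_i$ and concatenate its reverse with $P$ (of length $\ell$) at $v$, yielding a walk in $G$ from $u_i$ to $z$ of length $d_i + \ell$. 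Since $R$ is a shortest path in $G$ and both $z, u_i$ lie on it, $\dist_G(z, u_i) = \delta_i$, and so $d_i + \ell \ge \delta_i$.

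Next I would rearrange the selection condition $d_i + (\delta_{i+1} - \delta_i) > (1+\eps)\, d_{i+1}$ by subtracting $d_{i+1}$ from both sides to obtain the telescoping inequality
\[
(d_i - \delta_i) - (d_{i+1} - \delta_{i+1}) \;>\; \eps \cdot d_{i+1}.
\]
Setting the potential $g_i := d_i - \delta_i$, this says $g_i - g_{i+1} > \eps\, d_{i+1} \ge \eps \ell$ at every iteration, so $g$ decreases by strictly more than $\eps \ell$ per step.

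Finally, $g_0 = \ell$ (since $\delta_0 = 0$ and $d_0 = \ell$), while bound (ii) gives $g_m \ge -\ell$. Hence the total drop over $m$ iterations is at most $2\ell$, yielding $m < 2/\eps$. The identical argument applied to the backward subpath bounds the number of backward iterations by $2/\eps$ as well, so $|C(v, R, \Phi)| = |C_F| + |C_B| = O(1/\eps)$, completing the proof of \Cref{lem: pattern cover}.

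The only mildly nontrivial step is establishing (ii), which I expect to be slightly delicate in the pattern setting. What makes it go through cleanly, however, is that it requires no pattern-preservation argument for the concatenated walk: the walk is evaluated purely as an object in $G$, and the bound follows from the ordinary triangle inequality together with the fact that every subpath of the shortest path $R$ realizes the graph distance between its endpoints. Everything else is algebra.
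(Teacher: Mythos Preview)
Your proposal is correct and follows essentially the same approach as the paper: track the potential $g_i=\dist^{\Phi}(v,u_i)-\dist(z,u_i)$, show it drops by $\Omega(\eps\ell)$ per iteration via the selection rule and the lower bound $\dist^{\Phi}(v,u_i)\ge\ell$, and bound its range using the triangle inequality $\dist^{\Phi}(v,u_i)\ge\dist(z,u_i)-\ell$. Your algebra is slightly tighter (yielding $2/\eps$ instead of the paper's $4/\eps$), and you spell out the concatenation argument for~(ii) where the paper simply invokes the triangle inequality, but the substance is the same.
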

\begin{proof}
Let $u_1,u_2,\ldots$ be the sequence of vertices we added into $C_F$. On the one hand, by triangle inequality, $-\dist^{\Phi}(v,z)\le \dist^{\Phi}(v,u_i)-\dist(z,u_i)\le \dist^{\Phi}(v,z)$  for all $i\ge 1$. On the other hand, for every $i$, 
$$\dist^{\Phi}(v,u_{i+1})<\frac{1}{1+\eps}\cdot \bigg(\dist^{\Phi}(v,u_i)+\dist(u_i,u_{i+1})\bigg)\le \dist^{\Phi}(v,u_i)+\dist(u_i,u_{i+1})-\frac{\eps}{2}\cdot \dist^{\Phi}(v,z),$$
which means that
\[
\bigg(\dist^{\Phi}(v,u_{i+1})-\dist(z,u_{i+1})\bigg)\le \bigg(\dist^{\Phi}(v,u_i)-\dist(z,u_i)\bigg)-\frac{\eps}{2}\cdot \dist^{\Phi}(v,z).
\]
Therefore, we can find at most $4/\eps$ such $u_i$. This completes the proof of the claim.
\end{proof}


\bibliographystyle{alpha}
\bibliography{REF}

\end{document}